\newif\ifnotes
% \notestrue
\newif\ifqip
% \qiptrue
\newif\ifsubmission
% \submissiontrue

\ifsubmission
\notesfalse
\fi

\ifsubmission
    \documentclass{llncs}
    \pagestyle{plain}
\else
    \documentclass[11pt]{article} 
\fi

\usepackage{amsfonts,amssymb,amsmath}
\ifsubmission \else \usepackage{amsthm} \fi
\usepackage{mathrsfs}
\ifsubmission \else  \usepackage{fullpage} \fi
\usepackage[colorlinks=true,linkcolor=blue,citecolor=blue,breaklinks=true]{hyperref}
\usepackage{breakcites}
\usepackage{algorithm}
\usepackage{algorithmicx}
\usepackage[noend]{algpseudocode}
\usepackage{bm} % for bold XOR
\usepackage{dsfont}
\usepackage{graphicx, float, tikz}
\usepackage{colonequals}  % for \ratio in :<=>
\usepackage{color}
\usepackage{multicol}
\usepackage{tikz}
\usepackage{fancybox}
\usepackage[breakable, theorems, skins]{tcolorbox}
\usetikzlibrary{matrix,arrows}
\usetikzlibrary{positioning}
\usetikzlibrary{decorations}
\usetikzlibrary{decorations.pathreplacing}
\usetikzlibrary{shapes.geometric, chains, calc, fit, matrix}
\tikzstyle{system}=[rectangle,draw,fill=lightgray,minimum height=0.8cm,minimum
            width=0.8cm,thick]
\tikzstyle{BC}=[system]
\tikzstyle{resource}=[system]
\tikzstyle{RO}=[resource, minimum width=1cm]
\tikzstyle{protocol}=[circle, inner sep=0.7mm, draw]
\tikzstyle{simulator}=[circle, inner sep=0.7mm, draw]
\tikzstyle{memory}=[resource]
\tikzstyle{distinguisher}=[resource,fill=white,minimum width=3.5cm,
minimum height=1.2cm]
\tikzstyle{link}=[]
%length of the arrow from distinguisher

\usepackage{cleveref}

\usepackage{paralist}
\usepackage{enumerate}
\usepackage{enumitem}
\usepackage[n,
advantage,
operators,
sets,
adversary,
landau ,
probability,
notions,
logic,
ff,
mm,
primitives,
events,
complexity,
asymptotics,
keys]{cryptocode}

\renewcommand{\secpar}{\lambda}
\usepackage{tikz}
\usetikzlibrary{arrows,automata}

\usepackage{braket}

\usepackage[framemethod=tikz]{mdframed}
\usepackage{subcaption}

\usepackage{algpseudocode}

\usepackage{cleveref}

\usepackage[colorinlistoftodos]{todonotes}
\input{general_macros}

\newcommand{\ct}{\mathsf{ct}}

\newcommand{\verkey}{\mathsf{vk}}
\renewcommand{\vk}{\verkey}

\newcommand{\VerKeyGen}{\mathsf{VKGen}}

\newcommand{\Hyb}{\mathsf{Hyb}}

\newcommand{\Sim}{\mathsf{Sim}}

\newcommand{\FHE}{\mathsf{FHE}}
\newcommand{\PKE}{\mathsf{PKE}}

\newcommand{\Can}{\mathsf{Can}}

\newcommand{\Sender}{\mathsf{Sen}}
\newcommand{\Recvr}{\mathsf{Recvr}}

\newcommand{\Accept}{\mathsf{Acc}}
\newcommand{\Reject}{\mathsf{Rej}}

\newcommand{\PoNI}{\mathsf{PoNI}}

\newcommand{\PoNISec}{\PoNI\text{-}\mathsf{Sec}}
\newcommand{\PoNIEncSearch}{\PoNI\text{-}\mathsf{Enc}\text{-}\mathsf{S}}

\newcommand{\PoNIMoE}{\PoNI\text{-}\mathsf{MoE}}
\newcommand{\CDGame}{\mathsf{CD\text{-}Game}}

\newcommand{\coset}{\mathsf{coset}}

\newcommand{\Subspc}{\mathsf{Subspc}}
\newcommand{\Supspc}{\mathsf{Supspc}}

\newcommand{\co}{\mathsf{co}}

\newcommand{\OSP}{\mathsf{OSP}}

\newcommand{\aux}{\mathsf{aux}}

\newcommand{\dimS}{{2\secpar/6}}
\newcommand{\dimT}{{3\secpar/6}}
\newcommand{\dimR}{{\secpar/6}}
\newcommand{\dimSplusT}{{4\secpar/6}}
\newcommand{\dimW}{\dimSplusT}
\newcommand{\dimTminR}{2\secpar/6}
\newcommand{\dimTminusR}{\dimTminR}

\newcommand{\ctr}{\mathsf{ctr}}

\newcommand{\cAuxClass}{k_C}

\newcommand{\msk}{\mathsf{msk}}

\newcommand{\ATI}{\mathsf{ATI}}

\title{Proofs of No Intrusion}
\ifsubmission 
    \author{} 
\else
    \author{Vipul Goyal\thanks{NTT Research.  \texttt{vipul@vipulgoyal.org}} \and Justin Raizes\thanks{NTT Research.   \texttt{justin.raizes@ntt-research.com}}}
\fi
\date{}

\begin{document}

\maketitle

\ifqip
\else
\begin{abstract}
    A central challenge in data security is not just preventing theft, but detecting whether it has occurred. Classically, this is impossible because a perfect copy leaves no evidence. Quantum mechanics, on the other hand, forbids general duplication, opening up new possibilities.

We introduce \emph{Proofs of No Intrusion}, which enable a classical client to remotely test whether a quantum server has been hacked and the client's data stolen. 
Crucially, the test \emph{does not destroy the data being tested}, avoiding the need to store a backup elsewhere.
We define and construct proofs of no intrusion for ciphertexts assuming fully homomorphic encryption.
Additionally, we show how to equip several constructions of unclonable primitives with proofs of non-intrusion, such as unclonable decryption keys and signature tokens. Conceptually, proofs of non-intrusion can be defined for essentially any unclonable primitive.

At the heart of our techniques is a new method for non-destructively testing coset states with classical communication. It can be viewed as a non-destructive proof of knowledge of a measurement result of the coset state.
\end{abstract}
\fi

\ifqip
% \section{Introduction}

Suppose a client wanted to outsource the storage of their data to an external provider. Later, they find out that the external provider suffered a data breach. Is there a way for the client to check whether their own data was stolen in the hack?

This question is not only a theoretical curiosity. All 50 states in the United States have laws that require the affected parties to promptly be notified in the event of a data breach that affects their personal information. 
Unfortunately, detecting stolen data is an impossible task in the classical setting. The hacker could've simply made an exact copy of everything on the storage provider's servers. However, in the quantum setting, the no-cloning theorem opens up new possibilities.

\cite{TCC:CGLR24} showed that the quantum notion of certified deletion~\cite{TCC:BroIsl20,C:BarKhu23} can be very useful for detecting the theft of data. At a high level, certified deletion allows a center to encrypt their message as a quantum cipher text. Later, whoever holds the cipher text can apply a destructive measurement to obtain a certificate. 
If the sender sees a valid certificate, it knows that decipher text can no longer be decrypted, even if the secret key were later stolen, or an algorithmic breakthrough compromising the security of the encryption scheme is made. \cite{TCC:CGLR24} observed that the ability to coherently generate a deletion certificate also acts as a guarantee that the data was not stolen or otherwise copied. If a server only wanted to check for itself that its data was not stolen, it could coherently attempt to delete the data and check the certificate, without actually destroying the data.
% The very presence of the certificate implies that any so-called copies that were made are now useless.

Unfortunately, producing a deletion certificate and sending it externally means actually destroying the data. For a client who paid an external provider to store their data so the client would not have to store it locally, this means losing the data forever. There are also other scenarios in which destroying the data prematurely is undesirable. For example, secure software leasing protect against piracy by requiring that want a piece of software is returned to the owner, there cannot be another copy floating around. Of course, a company leasing out software might want to check for potential piracy before the rental period is over. Using current techniques, the company would need to first retrieve their software from the renter, then send them a new copy. 

We ask the following question:
\begin{quote}
    \centering
    \emph{Is it possible for a quantum server to prove to a classical client that the client's data has not been stolen, without destroying the data?}
\end{quote}
As an answer, we introduce the concept of a Proof of No-Intrusion (PoNI) and initiate its formal study. 
Similarly to classical verification of quantum computation (CVQC), PoNIs can be seen as developing capabilities which allow a classical client to fruitfully use an untrusted (or semi-trusted) quantum cloud. CVQC allows a classical party to test if a remote quantum computation was carried out correctly, while PoNIs allow a classical party to test if quantum data has been stored securely. 
% A comparison of PoNIs with other related notions such as cryptography with certified deletion can be found in section \ref{related}.

\justin{The comparison with CVQC might be a little strong, but not entirely out of place.}

% \subsection{Our Results.}

\paragraph{Our Results.} We initiate a systematic study of proofs of non-intrusion (PoNIs). We define and construct several cryptographic primitives which support PoNIs. 

The first primitive, motivated by the example of a client outsourcing its storage to an external server, is encryption with proofs of no-intrusion for ciphertexts. Informally, a proof of no-intrusion for ciphertexts ensures that if the server can give an accepting PoNI to the client, then no hacker has stolen anything that would let them decrypt the message -- even if the hacker later steals the decryption key or a new technological advancement allows cracking the encryption on a laptop. 
Even if these catastrophic events occur, the hacker would have to hack the server \emph{a second time} to learn anything.
Furthermore, the ciphertext still decrypts to the same message after the PoNI and there is no a-priori limit on the number of times the verifier can test the server for the same ciphertext.

\begin{theorem}[Informal]
    Assuming fully homomorphic encryption, there exists encryption with proofs of no-intrusion for ciphertexts.
\end{theorem}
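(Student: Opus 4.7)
The plan is to attach, to each ciphertext, an unclonable quantum ``tag'' in the form of a coset state, and to realize the $\PoNI$ protocol by running the new non-destructive coset-state test with classical communication highlighted as the paper's core technical tool. Concretely, to encrypt a message $m$, I would sample a uniformly random subspace $A \subseteq \mathbb{F}_2^n$ of the appropriate dimension together with uniformly random offsets $s, s'$, and produce the coset state $\ket{A_{s,s'}}$. The classical portion of the ciphertext would be an $\FHE$ encryption, under the client's FHE key, of a description of $A$ and of whatever further data is needed to turn a primal (or dual) coset element into the message $m$ (for instance, $m$ masked by a pseudorandom function evaluated at $s$). Decryption proceeds by measuring the coset state in the appropriate basis and combining the outcome with the FHE-encrypted auxiliary material via the FHE secret key.

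For the $\PoNI$ itself, the classical verifier would run the non-destructive coset test: it sends a classical challenge, the server evaluates the quantum test homomorphically against its coset state and returns an FHE ciphertext containing the response, and the verifier decrypts and accepts or rejects. The three properties one needs from the test --- honest acceptance with probability negligibly close to one, post-execution state negligibly close to the original $\ket{A_{s,s'}}$ in trace distance, and a proof-of-knowledge-style guarantee on any accepting prover --- immediately deliver $\PoNI$ correctness, preservation of the ciphertext across polynomially many invocations (so the scheme remains decryptable and re-testable), and the hook needed for security.

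Security would be reduced to the monogamy-of-entanglement property of coset states. Given an adversary that, after interacting with the server, simultaneously (a) causes the server to pass $\PoNI$ and (b) produces side information which, combined with the (possibly later-leaked) secret key, recovers $m$, I would extract a coset-state splitter. From the server's accepting run, the non-destructive test's knowledge extractor yields a coset element in one basis; from the adversary's decryption procedure, one obtains a coset element in the other basis. Coset monogamy then gives the contradiction. FHE semantic security is used as a preliminary hybrid to argue that neither the classical part of the ciphertext nor any $\PoNI$ transcript leaks $A$, $s$, or $s'$, so the extracted pair is distributed as in the genuine monogamy game.

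The main obstacle is entirely the non-destructive coset test itself. One must design an FHE-based compilation that simultaneously (i) preserves the coset state up to negligible trace distance so the ciphertext survives arbitrarily many $\PoNI$ invocations, (ii) enforces a strong enough proof-of-knowledge so that any accepting server is truly committed to a fixed coset element in a way the monogamy reduction can exploit, and (iii) does not leak, through its classical transcripts, enough structure about the coset for a malicious server (colluding with a hacker) to prepare a second state that also passes. Once such a test is in hand, the construction above and the security analysis are essentially a direct application, with the remaining work reducing to routine hybrid arguments around FHE security of the classical ciphertext component.
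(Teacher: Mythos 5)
There is a genuine gap: your proposal assumes away exactly the content of the theorem. The statement ``FHE implies encryption with $\PoNI$s for ciphertexts'' is, in this paper, essentially equivalent to constructing the non-destructive coset test from FHE, and you defer that entirely (``the main obstacle is entirely the non-destructive coset test itself''). Moreover, the mechanism you sketch for it --- the server homomorphically evaluates the test against its coset state and returns an $\FHE$ ciphertext with the response --- is precisely the approach the paper argues cannot work: semantic security of $\FHE$ forces the evaluated ciphertext to become entangled with the superposition over coset elements (compare the case of encrypting the trivial subspace), so measuring or sending the response collapses the state. The paper's actual test avoids homomorphic evaluation on the data: the verifier uses oblivious state preparation (Shmueli's FHE-based protocol) to have the prover prepare a fresh \emph{superspace} coset state $\ket{T_{x',z'}}$ with $T \supset S$, the prover applies a CNOT from $\ket{S_{x,z}}$ into that register (phase kickback), measures the $T$-register, and sends the outcome. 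Even then the requirement you posit --- residual state negligibly close to the original in trace distance --- is not met: the test leaves a random $Z^{z'}$ error known only to the verifier, and the construction needs the extra correction-factor machinery (an $\FHE$-encrypted offset, homomorphically updated after each $\PoNI$) so that decryption still returns $m$. A PRF-generated $(S,x)$ is also needed so that a single short verification key supports many ciphertexts and repeated tests.

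The security reduction you outline would also fail as stated. You invoke the coset monogamy-of-entanglement theorem plus ``routine hybrid arguments around FHE security,'' but the hacker in the $\PoNI$ game is \emph{unbounded} and receives the secret key, while the pre-split adversary's view contains an encryption of $S$ and transcripts of many earlier $\PoNI$ executions; the paper explicitly notes that the standard monogamy theorem does not tolerate this (it would reveal $S$ to the reduction and does not allow an unbounded second adversary seeing $\Enc(S)$). Instead the paper reduces to a search-style \emph{certified deletion} theorem for coset states, which supports subspace-hiding auxiliary input and an unbounded second party. Finally, there is no off-the-shelf ``knowledge extractor'' for the accepting prover: extracting a vector in $S+x$ \emph{simultaneously} with the hacker finding a vector in $S^\perp+z$ is the technical heart of the proof, and it requires the coset remainder lemma, a carefully designed alternate challenge $T_R \supset R$ (seeded by a random $R \subset S$ that certified deletion provides) shown statistically indistinguishable from the honest challenge given the prover's view, and a dedicated simultaneous-search lemma to handle the fact that the alternate challenge is indistinguishable to the prover but \emph{not} to the unbounded hacker. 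None of this is a routine hybrid, because FHE security cannot be invoked against the unbounded, key-holding hacker.
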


As our main technical ingredient, we show how to construct proofs of no-intrusion for coset states. Coset states provide very strong unclonability properties and have been a core tool for many constructions of unclonability-related primitives~\cite{STOC:AarChr12,C:CLLZ21,C:AKLLZ22,EC:BGKMRR24,EPRINT:CakGoyRai24}. 
Given the widespread use of coset states for unclonability, we believe our new tool to be of independent interest. 
We elaborate on the precise properties of our PoNI for coset states in the full version.

\begin{theorem}[Informal]
    Assuming fully homomorphic encryption, there exists a proof of no intrusion for coset states
\end{theorem}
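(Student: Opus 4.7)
The plan is to build the PoNI for coset states by combining a classical-verifier measurement protocol, in the spirit of Mahadev-style protocols enabled by FHE, with a coherent uncomputation that restores the coset state at the end. Throughout, the verifier holds the description of the subspace $A$ and the shifts $s, s'$, while the prover holds $\ket{A_{s,s'}}$. A single test consists of two possible sub-protocols: one that binds the prover to a computational-basis outcome $x \in A+s$ and one that binds them to a Hadamard-basis outcome $y \in A^\perp + s'$; the verifier picks which one and hides the choice under FHE. Within either sub-protocol, the honest prover coherently computes a would-be measurement outcome into a fresh ancilla, proves under FHE that this ancilla holds a valid coset element, and then uncomputes the ancilla, leaving $\ket{A_{s,s'}}$ intact.

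The first technical step is the \emph{coherent linear test}. Given an FHE encryption of a challenge $r \in A^\perp$, the prover coherently computes $x \mapsto \langle r, x\rangle$ into an ancilla using the coset register. Because $\langle r, a + s\rangle = \langle r, s\rangle$ is constant on $A+s$, the ancilla factorizes out of the coset state and can be returned to $\ket{0}$ after the verifier's response; no superposition collapse has occurred. A symmetric protocol with $r' \in A$ tests the Hadamard basis by first applying $H^{\otimes n}$ and uncomputing it afterwards. To make the basis choice invisible to the prover, the verifier encrypts a bit $b$ together with a basis-appropriate challenge, and the prover evaluates the check homomorphically. The main bookkeeping issue is that one cannot directly condition a unitary on an FHE ciphertext with only classical FHE, so both branches are executed in parallel on independent ancillas and the decision of which branch contributes to the returned ciphertext is deferred to the homomorphic circuit.

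The second step is soundness, which I would argue through a two-basis extractor using the monogamy-of-entanglement game for coset states, in the same spirit as previous works on unclonable primitives built from $\ket{A_{s,s'}}$. From any prover that accepts with non-trivial probability one extracts, by simulating the FHE challenges with alternative basis choices, partial answers in both the primal and dual bases; combined with the monogamy game this forces the prover's state to be close to $\ket{A_{s,s'}}$ and rules out any auxiliary register retaining useful information about the coset. The step I expect to be the main obstacle is proving \emph{non-destructiveness} against a \emph{dishonest} prover: an honest prover uncomputes trivially, but a prover that accepts only with probability $p$ could in principle leave the state arbitrarily damaged. Closing this requires an Almost-As-Good-As-New analysis of the whole interaction viewed as a sequence of projective measurements on a joint prover-verifier system, in a regime where the FHE hides the very randomness that drives the projectors; threading this analysis through the FHE simulation, so that successful verification genuinely implies state preservation rather than merely state existence, is where the technical heart of the theorem will lie.
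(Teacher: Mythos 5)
There is a genuine gap at the heart of your plan: the ``coherent linear test'' does not work when the challenge is hidden under FHE. The prover never sees $r$ in the clear, so the only thing it can compute into the ancilla is a \emph{homomorphic evaluation} of $\langle r,\cdot\rangle$, i.e.\ an $\FHE$ ciphertext. That ciphertext is a function of the particular coset representative $x$ in each branch of the superposition, not merely of the (constant) plaintext $\langle r,s\rangle$, so the ancilla does \emph{not} factor out of $\ket{A_{s,s'}}$. Indeed, semantic security forces this entanglement: the evaluation must behave the same as if the verifier had encrypted a challenge for which the answer varies over the coset (e.g.\ the trivial subspace, where the evaluated ciphertext determines $x$), so it must record $x$ in the ciphertext register. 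Consequently, sending or measuring that ancilla damages the coset state, and it cannot be uncomputed ``after the verifier's response'' without the verifier decrypting and revealing information that ruins reusability of the verification key. (The paper flags exactly this failure mode as the natural-but-broken FHE approach.) If you instead send $r$ in the clear, the honest answer is a deterministic classical bit that a cheating prover can record once and replay forever after handing the entire state to the hacker, and fresh challenges leak $S^\perp$ across executions, so repeated testing with one key again fails. The paper sidesteps the entanglement problem differently: FHE is used only inside an oblivious state preparation that delivers to the prover a fresh supercoset state $\ket{T_{x',z'}}$ with $T\supset S$; a CNOT from the coset register into this register gives, by phase kickback, the product state $\ket{S_{x,z-z'}}\otimes\ket{T_{x+x',z'}}$, so the second register is genuinely unentangled and can be destructively measured and sent, while the tested state survives up to a $Z$ correction known to the verifier.

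Two further mismatches with what actually has to be proved. First, non-destructiveness against a \emph{dishonest} prover, which you single out as the technical heart, is not required: state preservation is a correctness property of the honest prover only, so no almost-as-good-as-new analysis of a cheating prover is needed. Second, your soundness sketch leans on the coset monogamy-of-entanglement game, but in this setting the second adversary is unbounded and receives key material while the first sees encryptions of $S$, a regime where coset MoE is not available; the paper instead reduces to a search version of certified deletion, and the real technical work is (i) an extractor that turns an accepting prover into a vector of $S+x$ via a coset ``Chinese remainder'' lemma applied to a carefully sampled alternative superspace seeded by $R\subset S$, and (ii) a simultaneous-search lemma handling the fact that the extractor's alternative challenge is indistinguishable to the prover but perfectly distinguishable to the unbounded hacker. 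Neither issue is addressed by ``extract partial answers in both bases and invoke monogamy.''
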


In certain cases, our PoNI for coset states can be applied to existing unclonable primitives that make use of coset states for unclonability. However, due to a nuance which we discuss in the full version, the usage is not entirely-plug-and-play with existing constructions.

Finally, we observe that certain techniques developed for unclonable primitives in prior works can also be viewed as PoNIs for coset states. These techniques \emph{are} compatible with existing constructions (in fact they were designed for them).

\begin{theorem}[Informal]
    Assuming post-quantum indistinguishability obfuscation and one-way functions, there exists encryption with a proof of no-intrusion for decryption keys.
    
    Assuming sub-exponentially secure post-quantum indistinguishability obfuscation and one-way functions, there exist signature tokens with proofs of no-intrusion.
\end{theorem}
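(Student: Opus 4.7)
The plan is to observe that the existing constructions of unclonable decryption and signature tokens from coset states and indistinguishability obfuscation~\cite{C:CLLZ21,C:AKLLZ22} already contain, either explicitly or implicitly, the ingredients for a PoNI, and that their security analyses give the soundness we need. In both schemes the secret held by the server is a coset state $\ket{A_{s,s'}}$ for a random subspace $A \subseteq \bbF_2^\secpar$, while the public parameters contain an obfuscated program that releases a fixed string on input of a primal-coset vector and a different fixed string on input of a dual-coset vector. A PoNI protocol can piggyback on exactly this mechanism: the verifier samples and sends an \emph{obfuscated challenge program} that tests either primal or dual membership (with the side chosen uniformly at random), the server applies this program coherently to $\ket{A_{s,s'}}$, measures a single accept/reject flag, and returns that bit. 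Non-destructiveness follows because the coset-membership check is projective on the correct branch and the post-measurement state remains $\ket{A_{s,s'}}$ on acceptance; we use Zhandry's almost-projective/threshold-implementation framework to convert approximate acceptance measurements into ones whose repeated application accumulates only negligible disturbance.

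For the unclonable decryption scheme, soundness reduces to the strong monogamy-of-entanglement property for coset states, which is exactly the property underlying the unclonability security proof of the scheme. Concretely, if a server-plus-intruder coalition can answer PoNI challenges with non-negligible advantage and the intruder can separately decrypt a ciphertext, then the intruder must have held enough information to answer either a primal or dual challenge on its own, while the server held enough information to answer the opposite; this contradicts the direct-product hardness already established for coset states in the iO model. For signature tokens, the template is identical, but we must account for the additional information leaked by published signatures; the sub-exponential iO assumption is used via complexity leveraging to puncture the obfuscated signing program at the PoNI challenge point, so that an intruder capable of producing a valid signature after the PoNI can be converted into a distinguisher against the punctured obfuscation.

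The first main obstacle is bounding the cumulative disturbance over a polynomial number of PoNI invocations. Naively, each acceptance measurement has only approximate projective behavior and could perturb the coset state; I would address this by combining the ATI/projective-implementation machinery with the fact that on the honest branch the measurement operator has a single dominant eigenvalue very close to $1$, so a union bound over polynomially many invocations preserves negligible total error. The second obstacle is subtler: the PoNI must remain sound even when the challenge program (or its randomness) is eventually revealed to the adversary, since in the real threat model the server and intruder may share state across rounds. To handle this I would either rerandomize challenge programs using iO's input-hiding properties, or argue post-challenge security directly by showing that the underlying monogamy game remains hard even after the challenge side is revealed. I expect this latter point, rather than the construction itself, to be the technically delicate part of the argument.
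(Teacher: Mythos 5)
Your high-level route (reuse the coset-state constructions of \cite{C:CLLZ21} and piggyback the PoNI on the obfuscated membership machinery, reducing soundness to coset monogamy / direct-product hardness) is the same as the paper's, but your protocol as stated has a genuine soundness gap: the prover's entire response is a single accept/reject flag that the prover itself measures and reports. A cheating server that holds nothing at all simply sends ``accept'' every time, so the verifier learns nothing. The whole point of the paper's iO-based test is that the response must be \emph{unpredictable} to anyone not holding the coset state: the verifier obfuscates a circuit $C_{m,S+x}$ that releases a freshly sampled secret $m$ only on inputs in $S+x$ (equivalently, for the decryption-key primitive, the verifier just encrypts a random message under the single-decryptor scheme and asks the prover to decrypt it). The prover's answer $m$ then serves as a proof of knowledge of a coset vector, from which the security proofs of \cite{C:CLLZ21,eprint:KY25} extract an element of $S+x$, and the coset monogamy-of-entanglement theorem rules out the hacker simultaneously producing a dual vector (or being a $\gamma$-good decryptor). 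Without such a hidden payload your reduction has nothing to extract from the prover, so the ``coalition must be able to answer primal and dual challenges separately'' step does not go through. Relatedly, your worry about cumulative disturbance and the ATI machinery is moot in the correct protocol: on the honest coset state the released message is deterministic over the support of $\ket{S_{x,z}}$, so the state is left \emph{exactly} unchanged, and reusability across many rounds is handled simply because the PoNI challenge is publicly sampleable (a fresh ciphertext of a random message under $\pk$), so earlier transcripts can be simulated in the reduction; no rerandomization of obfuscated programs is needed.

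Your account of the sub-exponential assumption is also off target. The paper does not introduce any complexity leveraging or puncturing ``at the PoNI challenge point'' for signature tokens; sub-exponentially secure iO is inherited from the underlying \cite{C:CLLZ21} signature-token construction itself, and the PoNI for tokens is again just the single-decryptor-style challenge (decrypt a random ciphertext tied to the token's coset state), which tests possession of the token without consuming the one-shot signing capability. Similarly, for decryption keys the paper's observation is even simpler than your construction: \emph{any} public-key single-decryptor encryption scheme already has a public PoNI --- encrypt a random message and ask the prover to decrypt --- with security following directly from the anti-piracy game, and the iO+OWF assumption enters only through the known constructions of single-decryptor encryption.
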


Even using our PoNI from fully homomorphic encryption, the underlying constructions of these primitives still requires heavyweight tools like indistinguishability obfuscation.
Despite this, we still view our construction as evidence that proofs of no-intrusion can be significantly lighter weight than unclonability solutions for the same primitives.

\paragraph{Definitions.}
In encryption with PoNIs for ciphertexts, a classical verifier can use a verification key $\verkey$ to test whether a ciphertext has been hacked by interacting with a quantum prover in an interactive protocol
\[
    \mathsf{Accept}/\mathsf{Reject}\gets \PoNI\langle P(\ct), V(\verkey)\rangle
\]
The security property can be seen as a ``cloning game'' in the general framework of \cite{C:AnaKalLiu23}.
\begin{enumerate}
    \item First, the storage provider $P$ receives a ciphertext $\ct \gets \Enc(\pk, m, \verkey)$.
    \item The encryptor $V$ (which stands for ``verifier'') periodically asks P for a proof of no-intrusion $\PoNI\langle P(\ct), V(\verkey)\rangle$. 
    If $P$ ever fails to convince $V$, $V$ stops immediately and sues $P$ for damages.
    % Over time, the sender may test that the storage provider has not been hacked by asking it for a proof of no intrusion.
    \item Eventually, $P$ is ``hacked'', splitting its state into two registers $\regP$ and $\regH$. The hacker holds $\regH$ and $P$ retains $\regP$.
    % the storage provider might actually be hacked (!!!), splitting its state into two registers $\regP$ and $\regH$. The hacker holds $\regH$ and the storage provider retains $\regP$.
    \item The encryptor, as part of its periodic check, asks the prover for another proof of no intrusion $\PoNI\langle P(\regP), V(\verkey)\rangle$.
    \item The hacker, in the meantime, attempts to decrypt the stolen ciphertext. Whether through some technical innovation or by separately hacking the recipient to steal the secret key, they become able to break the hard problem underlying the encryption scheme. We formalize this by having the hacker guess $m$ after receiving the secret key $\sk$ and gaining unbounded computational power.
    % \item The storage provider attempts to convince the sender that it has not actually been hacked, by giving it another proof of no intrusion.
    % \item The hacker, in the meantime, attempts to decrypt the stolen ciphertext
\end{enumerate}
PoNI search security of encryption requires that the probability that the prover convinces the verifier in the final PoNI execution and \emph{simultaneously} the hacker correctly guesses the message $m$ is negligible.
We emphasize that $\verkey$ is used to test the prover multiple times. Multiple sequential executions of $\PoNI$ using the same $\verkey$ should still be secure.

\paragraph{PoNIs for Coset States.} 
At the heart of our construction is a new approach to non-destructively test coset states $\ket{S_{x,z}} \propto Z^z X^x \sum_{s\in S} \ket{S}$. The general idea is for the verifier to classically instruct the prover to construct a new coset state $\ket{T_{x',z'}} \propto Z^{z'} X^{x'} \sum_{t\in T} \ket{t}$, where $T\supset S$ is a superspace of $S$, using a remote state preparation technique based on fully homomorphic encryption~\cite{STOC:Shmueli22}. Then, the prover can apply a CNOT from their original state to the new state, resulting in $\ket{S_{x,z-z'}} \otimes \ket{T_{x+x',z'}}$. 
Finally, the prover can measure $\ket{T_{x+x',z'}}$ in the computational basis without destroying $\ket{S_{x,z-z'}}$. The verifier can use knowledge of $x$ and $x'$ to check the prover's measurement. We consider the $Z^{z'}$ error incurred on $\ket{S_{x,z}}$ to be relatively benign, since the state has not actually collapsed and the verifier knows the correction term.

Although the protocol is simple to state, analyzing it is more involved. Our analysis makes careful use of properties of subspaces over finite fields to analyze an adversarial prover's view distribution with auxiliary input about $S$, as well as introducing some new ideas to analyze simultaneous search games with no communication while asymmetrically varying the challenge distributions of the searching parties.
\else
% \ifnotes
\ifsubmission
\else
\clearpage
\tableofcontents
\pagebreak
\ifnotes
\input{Notes/02-01-overview-overview}
\pagebreak
\fi
\fi

% \input{Notes/TODO}
% % \input{Notes/notes}
% % \input{Notes/questions}
% \pagebreak
% % \input{Notes/related_work}
% \fi

\section{Introduction}

Suppose a client wanted to outsource the storage of their data to an external provider. Later, they find out that the external provider suffered a data breach. Is there a way for the client to check whether their own data was stolen in the hack?

This question is not only a theoretical curiosity. All 50 states in the United States have laws that require the affected parties to promptly be notified in the event of a data breach that affects their personal information. 
Unfortunately, enforcing said laws is tricky because perfectly detecting stolen data is an impossible task in the classical setting. The hacker could have simply made an exact copy of everything on the storage provider's servers. However, in the quantum setting, the no-cloning theorem opens up new possibilities. 

\c{C}akan, Goyal, Liu-Zhang, and Ribeiro \cite{TCC:CGLR24} showed how a server can utilize quantum mechanics to locally test whether its data has been stolen. 
Their approach builds on the notion of certified deletion \cite{TCC:BroIsl20,C:BarKhu23}, which allows a server to irreversibly delete a quantum ciphertext and produce a certificate proving that the ciphertext can no longer be decrypted -- even if the decryption key later leaks or the encryption scheme is broken. 
\cite{TCC:CGLR24} observed that if a server can coherently generate such a certificate, then no copy of the ciphertext could exist.
Crucially, if the certificate is checked coherently, then it can be uncomputed, leaving the data intact.
If one is willing to assume quantum communications to a quantum client, then the client can non-destructively test the data themselves.

Of course, even a quantum world, an everyday client who does not own a quantum computer should be protected. Unfortunately, sending a deletion certificate to a classical client necessarily deletes the data.
For a client who outsourced their data storage to avoid having to store a local copy, destroying the server's copy means the data is lost forever. 

Similar tensions arise in software leasing and anti-piracy. Secure software leasing~\cite{EC:AnaLaP21} allows an owner to lease out software, then check that the software has not been copied at the \emph{end} of the lease. However, they might want to check that the software has not been copied in the \emph{middle} of the leasing period, without ending the lease early by destroying the program. If the test can be done using a classical verifier, the owner can reserve their expensive quantum computer for new leases and manage existing customers using more ubiquitous classical computers.

Motivated by these examples, we ask
\begin{quote}
    \centering
    \emph{Is it possible for a quantum server to prove to a classical client that the client's data has not been stolen, without destroying the data?}
\end{quote}
As an answer, we introduce Proofs of No-Intrusion (PoNIs).
A PoNI is an interactive protocol in which a quantum server convinces a classical client that no adversary has stolen usable information about the stored data. Crucially, the data remains intact after the proof. 
Similarly to classical verification of quantum computation (CVQC), PoNIs can be seen as developing a classical client's ability to fruitfully use an untrusted (or semi-trusted) quantum cloud. CVQC allows a classical verifier to test if a quantum computation was carried out correctly, while PoNIs allow a classical verifier to test if quantum data has been stored securely.

\subsection{Our Results.}

We initiate the formal study of Proofs of No-Intrusion and construct several primitives that support them. 

Our first construction addresses the outsourced storage scenario. We design an encryption scheme in which a server can repeatedly prove to a client that no information useful for decryption has leaked. Informally, if the server can produce an accepting PoNI, then no adversary has already copied anything sufficient to recover the plaintext -- even if the decryption key later leaks or the encryption scheme is broken. Even in these extreme circumstances, it is as if the server was never hacked.
Moreover, the ciphertext remains decryptable after the proof, and the client may test the same ciphertext as many times as it wants.

\begin{theorem}[Informal]
    Assuming fully homomorphic encryption (FHE), there exists encryption with Proofs of No-Intrusion for ciphertexts.
\end{theorem}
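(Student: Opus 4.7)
The plan is to build encryption with PoNI for ciphertexts as a wrapper around the PoNI for coset states (the paper's main technical ingredient), together with post-quantum public-key encryption (implied by FHE). The encryptor samples a random subspace $S \subseteq \bbF_2^n$ and shifts $x, z \in \bbF_2^n$, prepares the coset state $\ket{S_{x,z}}$, and outputs $\ct = (\ket{S_{x,z}}, c)$, where $c$ is a classical PKE ciphertext encoding a payload from which the holder of $\sk$ can recover $m$ only after measuring the coset state in the computational basis; concretely, $c$ carries $S$ together with $m$ one-time padded by a coset-wise function of $x$. The verification key $\vk$ stores whatever the coset-state PoNI needs to test $\ket{S_{x,z}}$, and the PoNI protocol on $\ct$ is just the coset-state PoNI applied to the quantum register. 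Since each run only leaves a known $Z^{z'}$ on the stored state, which is invariant under computational-basis measurement, the ciphertext continues to decrypt to $m$ after arbitrarily many sequential tests.

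For PoNI search-security, I would reduce any encryption adversary $(P^*, H^*)$ to an adversary against a monogamy-style game for coset states equipped with PoNI: the challenger samples $\ket{S_{x,z}}$ and $\vk$, the adversary splits the state into registers $(\regP, \regH)$, the challenger runs the coset-state PoNI with $\regP$, and in parallel $\regH$ is given $S$ and must output a representative of $x + S$; security requires that the PoNI accept and $\regH$'s guess be correct only with negligible probability. The reduction is routine: once $H^*$ holds $\sk$, its view of the classical ciphertext reduces to $S$ together with the coset-wise mask applied to $m$, so recovering $m$ is equivalent to extracting $x \bmod S$ from $\regH$ and $S$, while $P^*$ simultaneously keeps passing the PoNI.

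The main obstacle is establishing this PoNI-augmented monogamy game itself. Unlike the classical single-shot monogamy-of-entanglement game for coset states, one side here is an interactive FHE-based protocol with adaptive verifier challenges (remote preparation of a superspace coset state $\ket{T_{x',z'}}$ with $T \supseteq S$, a CNOT into the stored state, and a computational-basis measurement), while the other is a non-interactive classical search, so the game falls precisely in the asymmetric, zero-communication simultaneous-search setting flagged in the introduction. I would attack it in two stages: first, use the interpretation of the coset-state PoNI as a \emph{non-destructive proof of knowledge of a computational-basis measurement result} to extract from $\regP$ a state that measures to a coset representative of $x + S$; second, apply a standard coset-state monogamy-of-entanglement bound to the resulting pair. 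The delicate steps are (i) making the extraction compatible with the non-destructive, repeated-use guarantee of the PoNI, and (ii) controlling how much the protocol transcript---through the random superspace $T$---leaks about $S$ to $\regH$; this is where I expect the subspace-over-$\bbF_2$ distribution analysis previewed in the overview to be essential.
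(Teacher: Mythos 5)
There is a genuine gap, and it is in the wrapper itself: you mask $m$ with a coset-wise function of $x$ and have the honest decryptor measure $\ket{S_{x,z}}$ in the \emph{computational} basis, and correspondingly your ``PoNI-augmented monogamy game'' asks the hacker to output a representative of $S+x$. But the coset-state PoNI is itself a (non-destructive) proof of knowledge of a computational-basis outcome, i.e.\ of a vector in $S+x$, so the game you want to reduce to is trivially winnable: the adversary measures $\ket{S_{x,z}}$ in the computational basis, obtaining a classical $v\in S+x$, and copies $v$ into both registers. A prover holding only $\ket{v}$ still passes every PoNI execution (CNOT into the fresh $\ket{T_{x',z'}}$ followed by measurement yields a vector in $T+x'+v\subseteq T+x'+x$, which the verifier accepts), while the hacker, given $S$ and $\sk$, recovers the mask from its copy of $v$ and unmasks $m$. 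Hence your scheme has no PoNI search security and the intermediate lemma you plan to prove is false. The paper avoids this by putting the mask on the \emph{dual} side: it encrypts $(M_S,\ M_S z\oplus m)$, decryption measures in the Hadamard basis to get a vector in $S^\perp+z$, and the monogamy statement is that an accepting prover (from whom one extracts an element of $S+x$) cannot coexist with a hacker finding an element of $S^\perp+z$ --- exactly the certified-deletion/direct-product structure that is actually hard.

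A secondary consequence: your claim that state preservation is automatic because the residual $Z^{z'}$ error ``is invariant under computational-basis measurement'' only holds for your (insecure) computational-basis decryption. Once the mask is moved to the $z$ side, decryption is a Hadamard-basis measurement and the accumulated $Z$ errors \emph{do} shift the outcome; the paper handles this by attaching an FHE ciphertext carrying a correction factor that the verifier updates homomorphically (by sending $\FHE.\Enc(z')$) after each PoNI, which is also where FHE enters the encryption scheme beyond the OSP. You would also need to address key reusability (the paper derives $(S,x)$ from a PRF so one $\vk$ covers many ciphertexts, and the security proof explicitly tolerates pseudorandom $(S,x)$ and the residual views of prior executions as auxiliary input). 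Your instinct that the hard core is the asymmetric simultaneous-search analysis and the subspace-distribution arguments is correct, but that difficulty lives in the coset-state PoNI theorem the paper already supplies; the encryption-level theorem stands or falls on getting the wrapper's basis/coset bookkeeping right, and as written yours does not.
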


As our main technical ingredient, we show how to construct PoNIs for coset states. 
Coset states provide very strong unclonability properties and have been a core tool for many constructions of unclonable primitives~\cite{STOC:AarChr12,C:CLLZ21,C:AKLLZ22,EC:BGKMRR24,EPRINT:CakGoyRai24}.
Given the widespread use of coset states for unclonability, we believe our new tool to be of independent interest. 
We elaborate on the precise properties of our PoNI for coset states in \Cref{sec:tech-over}.

If one is willing to assume indistinguishability obfuscation, certain techniques developed in prior work for unclonability can be reinterpreted directly as PoNIs for coset states. The core technical contribution of this work is to construct PoNIs for coset states from significantly weaker assumptions.

\begin{theorem}[Informal]
    Assuming fully homomorphic encryption, there exists a Proof of No-Intrusion for coset states.
\end{theorem}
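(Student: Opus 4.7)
The plan is to implement and analyze the protocol sketched in the technical overview. First I would use the classical remote state preparation of Shmueli, via FHE, to have the verifier cause the prover to construct a fresh coset state $\ket{T_{x',z'}}$ in an auxiliary register, where $T \supset S$ is a uniformly random superspace of a suitable dimension and $(x',z')$ are uniform shifts, with the verifier learning only the classical description $(T,x',z')$. The prover then applies a transversal CNOT from the register storing $\ket{S_{x,z}}$ into the auxiliary register, measures the auxiliary register in the computational basis, and returns the outcome $t$ to the verifier. The verifier accepts iff $t - x - x' \in T$, which it can check from $(T,x,x')$ alone; to reset the state for the next round, it later sends $z'$ so that the prover can apply a $Z^{z'}$ correction to restore $\ket{S_{x,z}}$.

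For correctness I would expand $\ket{S_{x,z}} \otimes \ket{T_{x',z'}}$ in the computational basis and chase through the CNOT action. Because $S \subseteq T$, translating the second register by any $s \in S$ permutes a coset of $T$ to itself, so the CNOT factorizes the joint state into a product state on the two registers; careful bookkeeping of the Pauli-$Z$ phases, using the identification of $Z$-shifts on coset states with elements of the quotient $V/S^\perp$ or $V/T^\perp$, yields exactly the decomposition $\ket{S_{x,z-z'}} \otimes \ket{T_{x+x',z'}}$ claimed in the overview. The computational-basis measurement on the second register then returns a uniformly random element of $T + x + x'$, which the verifier always accepts, and the residual register genuinely holds $\ket{S_{x,z-z'}}$, so the test is non-destructive and may be repeated polynomially many times.

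The bulk of the work is the security proof, which I would frame as a cloning game. Suppose an adversarial prover simultaneously (a) makes the verifier accept and (b) supplies an external hacker with auxiliary information that lets the hacker later win the standard coset cloning/monogamy challenge on $\ket{S_{x,z}}$. The reduction passes through a hybrid, justified by the soundness of Shmueli's remote state preparation, in which the prover is handed the coset state $\ket{T_{x',z'}}$ directly; in this hybrid the hacker's joint state is a function only of $(S,x,z)$ and a coset on a fresh random superspace $T$ of $S$. The key structural observation is that the accept condition $t - x - x' \in T$ only tests the prover's guess modulo $T$, so any information the prover leaks to the hacker beyond this coset must be bought from its own $\ket{S_{x,z}}$ register; a subspace partitioning/counting argument over finite fields then lets me bound the joint success probability in terms of the standard coset monogamy advantage.

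The main obstacle, as flagged in the overview, is that this is a simultaneous search game with \emph{no communication} between the prover and the hacker, and with \emph{asymmetric} challenge distributions: the prover's check lives in the quotient by $T$, while the hacker's coset monogamy challenge lives in $S$ (and $S^\perp$). Standard coset monogamy bounds are not immediately in this form, so the reduction cannot just forward challenges. Instead I expect to need a new analysis that, conditioned on a fixed superspace $T$, averages the hacker's advantage over the auxiliary input $S$ seen through the prover's view distribution, and shows that any adversary which uses $T$ as a trapdoor to amplify the hacker's success must already be forging a $T$-coset from $\ket{S_{x,z}}$ -- something ruled out by the subspace structure. This asymmetric, no-communication, auxiliary-input analysis is the technical heart of the proof and is where the ``new ideas'' mentioned in the overview are deployed.
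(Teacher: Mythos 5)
Your protocol and correctness analysis match the paper's construction (OSP for a random superspace $T \supset S$, transversal CNOT, computational-basis measurement of the auxiliary register, membership check in $T + x + x'$, residual state $\ket{S_{x,z-z'}}$). One concrete error there: you propose that the verifier ``later sends $z'$'' in the clear so the prover can undo the $Z$ error. The paper deliberately does not do this, because $z'$ is part of the sender's secret output of the $\OSP$; revealing it leaks $T$ and destroys the security of all subsequent executions (the paper flags exactly this in its open questions, and in the encryption application it instead ships an FHE ciphertext of $z'$ so the correction is tracked homomorphically without the prover ever learning it). The residual $Z$ error must be tolerated, not corrected in the clear.

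The security argument, however, has genuine gaps rather than just missing detail. First, your target statement is wrong for this setting: you plan to ``bound the joint success probability in terms of the standard coset monogamy advantage,'' but the paper's security game has an \emph{unbounded} hacker whose auxiliary input only computationally hides $S$ (e.g., it may contain $\Enc(S)$ and later $\sk$); standard coset monogamy of entanglement does not apply there, which is why the paper instead proves and reduces to a search-style certified deletion theorem (adapted from BGKMRR24) that tolerates an unbounded second adversary and subspace-hiding auxiliary input. Second, you never say how the reduction extracts anything useful from an accepting prover: acceptance only certifies a vector in the much larger coset $T + x + x'$, and a vector in $T+x$ is already known to the reduction, so it contradicts nothing. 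The paper's core mechanism is missing from your plan: run the prover on a carefully faked superspace $T_R$ seeded from a random $R \subset S$ (chosen so that $T_R \cap T = R$ and $T_R + S$ has the right dimension), prove via two statistical lemmas that $(T_R+S, T_R)$ is indistinguishable from an honestly distributed challenge \emph{even given the coset state and the checking data}, and then combine $x_T \in T+x$ with the prover's answer in $(T_R+S)+x$ via a coset Chinese-remainder algorithm to recover an element of $S+x$. Your alternative — a hybrid where the prover is handed $\ket{T_{x',z'}}$ directly, plus the claim that a prover amplifying the hacker ``must already be forging a $T$-coset'' — does not address this, and the paper explicitly notes that OSP security alone gives no guarantee here because $x_{\OSP}$ is also hidden from the prover, so nothing pins down $v_P - x_{\OSP}$ under a substituted challenge. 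Third, you correctly identify the asymmetric no-communication simultaneity problem (the fake challenge is indistinguishable to the prover but not to the unbounded hacker), but you only gesture at ``a new analysis''; the paper resolves it with a standalone simultaneous-search lemma (convert the distinguishing task into finding the real sample among polynomially many candidates, decompose the joint state by a projector onto the hacker's success, and use gentle measurement and no-signaling to argue the successful branch cannot cancel), and without some such argument your reduction does not go through.
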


PoNIs for coset states naturally extend to other unclonable primitives that rely on coset states. 
By extending existing constructions which use coset states for unclonability, we can equip several other primitives with PoNIs.

For example, unclonable decryption keys~\cite{EPRINT:GeoZha20} naturally have a PoNI: ask the prover to decrypt a random message. If the prover is able to do so, then no one else can possess a decryption key.
As another example, consider signature tokens~\cite{Q:BS23}, which allow the holder to sign one, and only one, message of their choice on behalf of a delegating party. The delegator could test that the intended signer still holds the token by asking them to sign a message, but this consumes the token. Instead, by using a PoNI for coset states in conjunction with \cite{C:CLLZ21}'s construction where tokens are coset states, the delegator can test the token \emph{without} consuming it prematurely.

\begin{theorem}[Informal]
    Assuming post-quantum indistinguishability obfuscation and one-way functions, there exists encryption with Proofs of No-Intrusion for decryption keys.
    
    Assuming sub-exponentially secure post-quantum indistinguishability obfuscation and one-way functions, there exist signature tokens with Proofs of No-Intrusion.
\end{theorem}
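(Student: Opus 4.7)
The plan is to build each primitive on top of an existing iO-based construction that already relies on coset states for unclonability, and to expose a PoNI by recycling a testing mechanism implicit in its security analysis. For unclonable decryption keys, I would start from a reusable unclonable decryption scheme (constructable from post-quantum iO and one-way functions along the lines of \cite{C:CLLZ21}). The PoNI is the natural challenge--response: the verifier samples a fresh random message $m$, sends $\ct \gets \Enc(\pk, m)$ as the challenge, and accepts iff the prover returns $m$. Since honest decryption in these schemes is non-destructive, the protocol may be repeated arbitrarily often and no decryption capability is ever leaked to the verifier. For PoNI search security, suppose that after splitting its state the prover still convinces the verifier on a fresh random challenge while the intruder recovers the plaintext $m^\ast$ of a target ciphertext $\ct^\ast$; then the pair (prover, intruder) simultaneously decrypts two independent ciphertexts, which is precisely the event ruled out by the anti-piracy game of the underlying unclonable decryption scheme.

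For signature tokens, I would instantiate the construction of \cite{C:CLLZ21}, in which each token is a coset state $\ket{S_{x,z}}$ together with obfuscated signing code. I would then equip it with the iO-based PoNI for coset states alluded to earlier in the paper: the verifier publishes an obfuscation of a program that recognizes vectors in $S$ and in its dual $S^\perp$, the prover runs it coherently on the token to obtain a verifiable outcome, and then uncomputes to restore the token. Since this test is essentially the same projective measurement already deployed inside the anti-piracy hybrids of \cite{C:CLLZ21}, composing the PoNI with the existing reduction introduces no additional security loss, and the sub-exponential iO assumption is inherited directly from that construction. PoNI security then follows by combining the coset-state PoNI guarantee with the one-shot signing guarantee of the underlying token scheme: if the prover continues to pass coset tests while the intruder forges, the pair jointly extracts two valid vectors of the splittable coset, contradicting the monogamy-of-entanglement style bound used in \cite{C:CLLZ21}.

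The main obstacle is verifying that composing the PoNI with the existing anti-piracy reductions does not introduce hidden dependencies across the sequential PoNI executions permitted by the definition. One must argue that the fresh challenges (resp.\ obfuscated test programs) used in many invocations remain independent of the adversary's view, conditioned on the split states held by the prover and intruder at the moment of intrusion. For unclonable decryption this reduces to robustness of anti-piracy under multiple independent challenge ciphertexts, which is by now standard. For signature tokens it amounts to revisiting the hybrid sequence of \cite{C:CLLZ21} and checking that embedding the coset-state test program does not disturb the existing simulator --- the most delicate point, but one for which that same work already supplies the essential machinery.
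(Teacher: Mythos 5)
Your proposal takes essentially the same route as the paper for both primitives: for decryption keys, the paper's PoNI is exactly your challenge--response (encrypt a fresh random message and ask the prover to decrypt), with security reduced to the anti-piracy guarantee of an underlying single-decryptor scheme from post-quantum iO and one-way functions (the paper invokes \cite{eprint:KY25} for this instantiation); for signature tokens, the paper likewise keeps \cite{C:CLLZ21}'s coset-state tokens and tests them with the coset-possession test implicit in that work, concretely a ciphertext of the \cite{C:CLLZ21} single-decryptor scheme, which is the same mechanism as the obfuscated $C_{m,S+x}$ test sketched in the overview.

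One point in your token PoNI needs repair. You describe the verifier as publishing an obfuscated program that merely \emph{recognizes} vectors in $S+x$ and $S^\perp+z$, which the prover evaluates coherently and then uncomputes. On the honest token the recognition bit is deterministic, so nothing is disturbed, but that bit carries no evidence for a remote \emph{classical} verifier: a cheating prover can simply claim acceptance, and such a static, non-interactive check could be replayed forever regardless of whether the token was stolen. The test must instead hide a fresh secret behind coset membership --- the random $m$ in $C_{m,S+x}$, or equivalently the plaintext of a fresh single-decryptor ciphertext as the paper suggests --- so that \emph{returning the secret} is what constitutes the proof; no uncomputation is needed because the honest outcome is deterministic. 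Relatedly, when you argue security you should be careful about which coset the forger's signature lands in: extraction from a prover passing a computational-basis test yields a vector in $S+x$, so a forgery consisting of a vector in $S^\perp+z$ is ruled out by the monogamy-of-entanglement bound, but the branch where the forger signs the message whose signature also lies in $S+x$ is not covered by that pairing and needs a separate argument (e.g.\ a test or definition that also engages the Hadamard basis). The paper's own treatment of the token case is only a brief remark, so your sketch is at a comparable level of detail, but as literally written the recognition-program protocol is not sound.
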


The underlying construction of these primitives requires heavyweight tools like indistinguishability obfuscation, regardless of how the PoNI is constructed.
Nonetheless, we view our construction of PoNIs for coset states from FHE as evidence that PoNIs can be significantly lighter-weight than unclonability solutions for the same primitives.

\subsection{Related Notions}
\label{related}

PoNIs are closely related to both certified deletion and unclonable cryptography. All three exploit the unclonability of quantum information, but they differ in the precise security guarantee and have distinct use cases. For concreteness, we compare the three in the context of encryption.
\begin{itemize}
    \item \textbf{Certified deletion} allows a party to provably delete a ciphertext. If the proof is valid, then the encrypted message cannot be recovered even using the secret key or unbounded computational power. Often, the proof is a single classical message, called a certificate.

    \item \textbf{Unclonable encryption} (or copy protection) ensures that a ciphertext cannot be copied. If a ciphertext is split in two, then at most \emph{one} of the copies can be decrypted using the secret key.

    \item \textbf{Proofs of No Intrusion} allow a server to prove to an external classical client that some data has not been stolen. If the client accepts, they know that no hacker who previously compromised the server can recover the encrypted message even using the secret key or unbounded computational power. Moreover, the ciphertext remains intact after the protocol.
\end{itemize}
The security guarantee for PoNIs is most similar to certified deletion. Both require that an adversary cannot simultaneously produce a proof and a state that can be decrypted. 
In contrast, unclonable encryption replaces the proof by a more flexible task: recover the message using the secret key. Relaxing the adversary's task makes security harder to achieve, so unclonable cryptography often relies on significantly stronger assumptions than certified deletion.

Aside from the security guarantee, PoNIs are also intended for a different use case than either certified deletion or unclonable encryption. Certified deletion is only non-destructive if done locally (using public verifiability), while unclonable encryption also only tests the copy locally. If quantum channels were easily available, then the states could be sent to an external quantum verifier, who tests them before sending them back. However, this requires both quantum communication and for the verifier to be quantum themselves.

Preserving the data also opens up the possibility of \emph{repeatedly} testing the same data. 
In this scenario, it becomes important to ensure that prior tests do not compromise the security guarantees of later tests. For example, if the proof were non-interactive, a cheating prover could replay the same proof repeatedly, regardless of whether the data has been stolen.
In contrast, repeatedly testing the same data does not even make sense for certified deletion or unclonable encryption.

\section{Technical Overview}\label{sec:tech-over}

Throughout this overview, we will focus on PoNIs for ciphertexts and the techniques used to build it. For the sake of exposition, we leave the details of other applications to the technical sections.

\subsection{Definitions: PoNIs for Encryption}

An encryption scheme allows a sender to transmit a message which is hidden from anyone without the secret key. In many cases, ciphertexts are stored externally. 
For example, a client might pay a storage provider to externally store its data, but encrypt the data for privacy.

The goal of equipping an encryption scheme with a proof of no-intrusion is to allow the sender to test whether the encrypted message has been exfiltrated from the server. Supporting this, we augment the syntax of the encryption scheme to use a verification key $\verkey$ in the encryption step:
\[
    \ct \gets \Enc(\pk, m, \diff{\verkey})
\]
Later, the sender can use the verification key to test the server in an interactive protocol 
\[
    \Accept/\Reject \gets \PoNI\langle P(\ct), V(\verkey)\rangle
\]
where the sender is the verifier $V$ and the server is the prover $P$.

\paragraph{Minimizing Client Storage.} 
For PoNIs to be useful for outsourced storage, the client's local storage should not grow with the size of the outsourced database. Supporting this, we require two additional properties aside from security:
\begin{itemize}
    \item \textbf{Key Reusability.} A single $\verkey$ can be used to encrypt many messages. Otherwise the client would need to store as much verification material as outsourced data.
    \item \textbf{State Preservation.} After testing a ciphertext, the residual state should still decrypt to the same data. If each PoNI destroyed the ciphertext, the client would need to store a copy of the data locally to avoid losing it, again undermining the point of outsourcing storage.
\end{itemize}

\paragraph{Security.}
Search security for PoNIs can be seen as a ``cloning game'' in the general framework of \cite{C:AnaKalLiu23}.
\begin{enumerate}
    \item First, the storage provider $P_1$ receives the public key $\pk$ and a ciphertext $\ct \gets \Enc(\pk, m, \verkey)$ encrypting a random message $m$.
    \item The verifier $V$ periodically asks $P_1$ for a proof of no intrusion $\PoNI\langle P_1(\ct), V(\verkey)\rangle$. 
    
    If any execution rejects, $V$ can conclude that the data has been stolen, so $P$ immediately loses the security game. As a result, an adversarial prover never sees any rejecting executions unless it has already lost the game.
    
    \item Eventually, $P_1$ is ``hacked'', splitting its state into two registers $\regP$ and $\regH$. The hacker $H$ holds $\regH$ and the storage provider $P_2$ retains $\regP$.
    
    \item $V$ asks $P_2$ for a final proof of no intrusion $\PoNI\langle P_2(\regP), V(\verkey)\rangle$.
    
    \item Meanwhile, $H$ attempts to decrypt the stolen ciphertext in $\regH$. They are given the decryption and verification keys $(\sk, \vk)$ and unbounded computational power while doing so.

    \item The adversary wins if $V$ accepts $P$'s final $\PoNI$ and \emph{simultaneously} $H$ correctly guesses $m$.
\end{enumerate}
Security requires that any QPT $P = (P_1, P_2)$ and unbounded $H$ can only win with negligible probability.
Importantly, the same $\verkey$ is used to test the prover multiple times. Earlier $\PoNI$ executions should not compromise the security of later ones.

\paragraph{Comparison to Unclonable Encryption.} The definition of PoNI security has some similarities to that of unclonable encryption. The primary conceptual difference is in unclonable encryption, \emph{both} parties are trying to guess the message $m$. In contrast, in PoNI security, one of the two parties is instead attempting to convince a verifier in a potentially interactive protocol. The distinction is magnified if we consider indistinguishable security for both primitives. Indistinguishable-unclonable encryption requires both adversaries to solve decision games, while PoNI decision security would require the hacker to solve a decision game and the prover to solve a \emph{search} game.

\paragraph{Search versus Decision.} 
Because a PoNI decision security game would mix a decision game with a search game, the definition is somewhat more involved. 
Intuitively, the PoNI should not accept simultaneously with the hacker ``being good at guessing $m$''. 
Similar scenarios have been considered in certified deletion, which we base our decisional PoNI definition on.
We refer the reader to \Cref{sec:poni-enc:defs} for more details on how to define decisional PoNI security.

\subsection{PoNIs for Coset States.}

Many constructions of unclonable primitives derive their unclonability properties from subspace states~\cite{STOC:AarChr12} or their generalization, coset states~\cite{C:CLLZ21}. A coset state $\ket{S_{x,z}}$ is a superposition over all elements of a coset $S+x$ of a subspace $S \subset \bbF_2^\secpar$, with some additional phase information to also encode $S^\perp + z$:
\[
    \ket{S_{x,z}} 
    = Z^zX^x \ket{S} 
    = Z^z\frac{1}{\sqrt{|S|}} \sum_{s\in S} \ket{s+x}
\]
Towards building PoNIs for ciphertexts, we will aim to build PoNIs for coset states. 
A PoNI for coset states ensures that if an adversary splits a random $\ket{S_{x,z}}$ into two registers and distributes them to parties $P$ and $H$, then $P$ cannot pass a $\PoNI$ simultaneously with $H$ finding a vector in $S^\perp + z$.
Other security properties can also be obtained by changing $H$'s task, such as requiring it to find a vector in $S + x$.

Crucially, the PoNI should not destroy the coset state. Otherwise, the verifier could destructively test $\ket{S_{x,z}}$ by asking for a computational basis measurement and checking that the prover returns an element of $S + x$. In that case, direct product hardness~\cite{C:CLLZ21} would rule out $H$ simultaneously knowing a vector in $S^\perp + z$, but $\ket{S_{x,z}}$ would collapse to a computational basis state.
It will be useful to think of the description $(S,x)$ as the verification key, although specific constructions may need more or less information about $\ket{S_{x,z}}$ to test it.

\paragraph{From Coset States to Encryption.}
PoNIs for coset states smoothly enable PoNIs for ciphertexts. A message $m$ can be encrypted as a pair
\[
    \ket{S_{x,z}},\ \Enc(S, m\oplus z)
\]
using any off-the-shelf encryption scheme, similarly to \cite{EC:BGKMRR24}. 
To avoid storing a separate $(S,x)$ to verify each ciphertext, the sender generates $(S,x)$ using a pseudorandom function. Then the verification key is a single PRF seed that can be used for encrypting arbitrarily many messages.

Decryption works by first decrypting $S$ and $m\oplus z$, then measuring $\ket{S_{x,z}}$ in the Hadamard basis. The measurement outcome $v \in S^\perp + z$ allows one to find $z$ (given $S$) and finally unmask $m$.\footnote{For the sake of exposition, we assume it is possible to find $z$ given $S$ and $v\in S^\perp + z$. In the technical sections, we mask $m$ using a canonical element $\Can_S(z)$ of the coset $S^\perp+z$.} 

The sender can test the ciphertext by running the coset state PoNI for $\ket{S_{x,z}}$. Even if the adversary breaks the off-the-shelf encryption scheme to recover $m\oplus z$, they must still find $z$ to recover $m$. The PoNI for $\ket{S_{x,z}}$ prevents this.

Looking ahead, our construction of PoNIs for coset states will add a random $Z$ error $z'$ to $\ket{S_{x,z}}$, known only to the verifier. To ensure that the ciphertext still decrypts to the same message, it will also contain a fully homomorphic encryption of a correction factor (initially $0$). After each PoNI, the verifier will send a fully homomorphic encryption $\FHE(z')$ of the correction factor and the prover will homomorphically update the correction factor. During decryption, the decryptor can use the aggregate correction factor $z_{\mathsf{cor}}$ to unmask $m$ from $m\oplus z \oplus z_{\mathsf{cor}}$.

\paragraph{Proofs of No-Intrusion from Indistinguishability Obfuscation.} 
Prior works can be interpreted as having implicitly constructed PoNIs for coset states using indistinguishability obfuscation (iO). Adapting these ideas, there is a simple two-message protocol.
\begin{enumerate}
    \item The verifier picks a random message $m$ and obfuscates the circuit $C_{m, S+x}$ which takes as input $v$, then outputs $m$ if $v \in S+x$.
    
    \item The prover coherently evaluates the obfuscated circuit on $\ket{S_{x,z}}$ in the computational basis, then measures the output $m$ and sends it to the verifier. 
    
    \item The verifier checks that it received $m$.
\end{enumerate}

Intuitively, the only way to obtain $m$ is to evaluate the obfuscated circuit on a vector in $S+x$. \cite{C:CLLZ21,eprint:KY25} formalize this intuition by showing that it is possible to extract an element of $ S+x$ from any prover who finds $m$. Then they leverage a coset monogamy of entanglement result from~\cite{C:CLLZ21,Q:CV22} to show that a second adversary cannot simultaneously find a vector in $S^\perp + z$.

This construction has a few nice properties. First, it doesn't disturb $\ket{S_{x,z}}$ at all because the measurement result $m$ is deterministic. 
Second, publishing an obfuscation of $C_{1, S+x}$ allows \emph{anyone} to implement $C_{m,S+x}$ and so act as the verifier.

\paragraph{Towards Weaker Assumptions.} 
Unfortunately, post-quantum indistinguishability obfuscation is quite a strong assumption. 
The main technical contribution of this work to construct PoNIs for coset states under a weaker assumption: oblivious state preparation (OSP)~\cite{AC:CCKW19,C:BK25} for coset states.

Informally, an oblivious state preparation protocol for coset states allows a classical sender to instruct a quantum receiver to produce a coset state $\ket{S_{x,z}}$ for a subspace $S$ of the sender's choice, without revealing any information about $S$.
Shmueli~\cite{STOC:Shmueli22} showed how to construct an OSP for coset states using hybrid quantum fully homomorphic encryption (FHE), which can be based on any post-quantum FHE~\cite{C:GupVai24}. First, the sender sends a hybrid encryption of a description of $S$:
\[
    \FHE(x,z),\ X^x Z^z\ket{\mathsf{descr}(S)}.
\]
Then, the receiver homomorphically computes the subspace state $\ket{S}$, producing a state underneath a fresh one-time pad:
\[
    \FHE(x',z'),\ X^{x'}Z^{z'} \ket{S} = \ket{S_{x',z'}}
\]
Finally, the receiver sends $\FHE(x',z')$ back to the sender so that the sender can learn $x'$ and $z'$.
It may be useful to keep this preparation protocol in mind for intuition about the properties of OSP protocols later.

\paragraph{PoNIs from Oblivious State Preparation.}
In our construction of proofs of no-intrusion for coset states, the verifier uses a description $(T,x_T)$ of a random supercoset $T+x_T\supset S + x$ to remotely verify the state $\ket{S_{x,z}}$. If $S+x$ is generated via PRF, then the verification key is the PRF seed.
\begin{enumerate}
    \item The verifier uses OSP to send $\ket{T_{x',z'}}$ to the prover, learning $(x', z')$ in the process.
    
    \item The prover applies a CNOT from $\ket{S_{x,z}}$ to $\ket{T_{x',z'}}$. Since $S\subset T$, this causes a phase kickback and the result is
    \[
        \ket{S_{x,z-z'}} \otimes \ket{T_{x+x',z'}}
    \]
    The prover measures $\ket{T_{x+x',z'}}$ in the computational basis to obtain $v\in T+x + x'$, then sends $v$ to the verifier.
    
    \item The verifier checks that $v - x' \in T + x_T$.
\end{enumerate}

The reader may notice that at the end of the protocol, $\ket{S_{x,z}}$ has been modified by a random $Z$ operation, known only to the verifier. 
This error is relatively benign, since the original $z$ is still information-theoretically determined. 
As we saw previously, a coset state PoNI with $Z$ errors can still be used for constructing PoNIs for ciphertexts, without changing the message that the ciphertext decrypts to.
Furthermore, $Z$ errors do not prevent additional PoNI executions, since those only test the computational basis.
We provide more discussion in \Cref{sec:overview-future} about possibility of removing the $Z$ error and the difficulties therein.

\paragraph{A Comment on Fully Homomorphic Encryption.} 
It may be tempting to attempt to build PoNIs directly from FHE, without going through OSP.
For instance, the prover could use $\FHE(S)$ to homomorphically evaluate a canonical element of $S + x'$ for every $\ket{x'}$ in the support of $\ket{S_{x,z}}$. Because every $x' \in S+ x$ maps to the same canonical element, the evaluated message is independent of the specific $x'$.
One might therefore hope that the evaluated ciphertext is deterministic and thus safe to measure.

Unfortunately, the security of FHE \emph{forces} the evaluated ciphertext entangle with $\ket{S_{x,z}}$ anyway. 
For example, if the ciphertext instead encrypted the trivial subspace $\{0\}$, the canonical element of $\{0\}+x'$ is $x'$. In this case, the evaluated ciphertext must entangle with each $\ket{x'}$. Moreover, this must also apply to the case of encrypting $S$ since $\FHE(S)$ and $\FHE(\{0\})$ are indistinguishable.

\subsection{Security}

We give an overview of PoNI security through the lens of encryption, for the sake of exposition. We prove a more general statement in the technical sections. 
To prove security, we reduce to existing hardness results for simultaneously finding vectors in $S+x$ and $S^\perp + z$. Certified deletion for coset states~\cite{EC:BGKMRR24} considers the following scenario:
\begin{enumerate}
    \item  Sample psuedo-random $S$, $x$, and random $z$. QPT adversary $A$ gets $\ket{S_{x,z}}$ and some auxiliary information $\aux_A(S,x)$, then splits their state into two registers $\regR_B$ and $\regR_C$.
    \item QPT adversary $B$ gets $\regR_B$, then attempts to output a vector in $S + x$.
    \item Unbounded adversary $C$ gets $\regR_C$ and attempts to output a vector in $S^\perp + z$.
    \item The adversary wins if $B$ outputs a vector in $S+x$ and $C$ outputs a vector in $S^\perp + z$.
\end{enumerate}
It guarantees that if $\aux_A(S,x)$ is ``subspace-hiding'', then no QPT $(A,B)$ and unbounded $C$ can win the game with better than negligible probability. Informally, subspace-hiding means that it computationally hides everything except for random super-cosets $T + x_T \supset S + x$ and  $R^\perp +z_{R^\perp} \supset S^\perp + z$. Although \cite{EC:BGKMRR24}'s original theorem only considers truly random $(S, x)$, a careful examination of their proof shows that security holds even when $(S,x)$ are \emph{pseudo}-random.

Imagine that $A$ was the pre-split prover $P_1$ in the ciphertext PoNI security game, where $\aux_A$ is its pre-split view, that $B$ was the post-split prover $P_2$, and that $C$ was the hacker $H$. 
If $\aux_A$ were subspace-hiding and we could show how to extract an element of $S+x$ from a $P$ which gives an accepting PoNI, then certified deletion rules out $H$ simultaneously finding an element of $S^\perp +z$.

\paragraph{The Adversary's View.}
As $P_1$'s pre-split view, $\aux_A$ includes both a ciphertext $(\ket{S_{x,z}}, \Enc(S, z\oplus m))$ and the leftover prover view from many PoNI executions. 
Ignoring the decision bits for these executions can only increase the adversary's winning probability because the adversary loses if any of them reject. In this case, the prover's view only consists of the receiver view in an OSP. For concreteness, the reader can imagine using Shmueli's OSP, in which case $\aux_A$ is 
\[
    \Enc(S, z\oplus m),\ \mathsf{QFHE}_{\pk_1}(T), \dots, \mathsf{QFHE}_{\pk_n}(T).
\]
The semantic security of encryption immediately implies that $\aux_A$ is subspace hiding. In fact, it reveals even less than what is allowed. Certified deletion would still hold even if $\aux_A$ also included $T$, $x_T$, and $R$ for random $T\supset S$, $x_T \in T + x$, and $R^\perp \supset S^\perp$. Looking forward, these will play a crucial role in the reduction.

\paragraph{High Level Extraction Strategy.}
The main challenge is how to extract an element of $S+x$ from $P_2$ \emph{simultaneously} with $H$ finding an element of $S^\perp + z$ from any $(P_1, P_2, H)$ that violate PoNI security. There are two components to this: 1) how to extract from $P_2$ and 2) showing that extraction succeeds simultaneously with $H$'s success.

First, let us address the high-level strategy for extracting from $P_2$. 
The only black-box way to interact with $P_2$ is to ask it for a PoNI using $T\supset S$, during which $P_2$ returns a vector $v_P$ and the verifier learns $x_{\OSP}$ when $P_2$ prepares $\ket{T_{x_{\OSP}, z_{\OSP}}}$ during the $\OSP$ protocol.
If the PoNI is accepting, then $v_P - x_{\OSP} \in T + x_T$. 
To find an element of $S+x$, we observe that cosets over finite fields obey a property similar to the Chinese remainder theorem: given
\begin{align*}
    x_1 &\in T_1 + x
    \\
    x_2 &\in T_2 + x
\end{align*}
along with $T_1$ and $T_2$, it is efficient to compute an element of $(T_1 \intersect T_2) + x$. 
With this in mind, the extractor can ask $P_2$ to answer a PoNI using $T'$ such that $(T' \intersect T) = S$. If it learns a vector $v_P - x_{\OSP} \in T' + x$, then it can combine that with $T$ and $x_T\in T + x$, which are provided for free in $\aux_A$, to find an element of $S + x$.

Unfortunately, the extractor is not allowed to know such a $T'$. Knowledge of both $T$ and $T'$ would reveal $S$, violating the premise of certified deletion.\footnote{Monogamy of entanglement~\cite{C:CLLZ21,Q:CV22} would make this step easier by revealing $S$ after the split, but unfortunately does not allow $C$ to learn $\sk$, unless we are willing to assume iO-equivalent assumptions such as functional encryption~\cite{TCC:AnaKal21,FOCS:BitVai15,C:AnaJai15}. Additionally, it never allows unbounded $C$ when $A$ sees $\Enc(S)$.}
A natural next approach is to try $(T' \intersect T) \subset S$. 
% As an extreme case, one could even try $T'$ which intersects $T$ on the trivial subspace, which is easy to find. 
The subtle issue with this approach is that $T'$ would \emph{never} occur in a real execution, because $T\not\supset S$. 
If $P_2$ were to answer the $T'$ PoNI honestly, then it would add an element of $S+x$ to an element of $T'+x_{\OSP}$ and return the result. In this case, the extractor learns $(v_P - x_{\OSP}) \in (T' + S)+x$, which a little algebra shows is enough for the coset Chinese remainder theorem even just using $T'$ instead of the full $T' + S$.
Of course, $P_2$ is not honest. 
It is not clear that $P_2$ will answer such a $T'$ the same way as the real $T$.
One might hope that $\OSP$ hides $T'$, and so $P_2$ cannot change its behavior when given $T'$. However, $x_{\OSP}$ is also hidden from $P_2$, so there are no guarantees about $v_P - x_{\OSP}$.
As an example, in Shmeuli's OSP, $x_{\OSP}$ is obtained by decrypting a QFHE ciphertext which depends on a QFHE encryption of $T'$.

As a solution, we carefully design a $T'$ which is computationally indistinguishable from a real one even given $\ket{S_{x,z}}$, $\aux_A$, and some new information $W = (T' + S)$ and $x_W \in W + x$. Together, $W$ and $x_W$ allow us to check that $v_P - x_{\OSP} \in W + x$, so using an indistinguishable $T'$ in the PoNI cannot noticeably change the probability of this predicate occurring.
Interestingly, although prior works treated $R^\perp$ only as a way to hide $S^\perp$, $R$ turns out to play a crucial role in finding an appropriate $T'$. We describe $T'$ in more detail shortly hereafter.

\paragraph{Simultaneous Extraction.}
The second issue to address is showing that extraction succeeds simultaneously with $H$'s success. Although $T'$ will be indistinguishable to $P_2$, it is \emph{not} indistinguishable to the unbounded $H$ who also receives $\vk$ and $\sk$. 
A similar question about asymmetrically changing challenge distributions in cloning games is at the heart of the open problem of indistinguishable-unclonable encryption~\cite{TQC:BL20,TCC:AnaKal21,C:AKLLZ22,C:AnaKalLiu23,C:AnaBeh24}.
We prove a general theorem for such simultaneous search tasks where $B$'s distribution changes in a way that is indistinguishable to $B$, but distinguishable to $C$. 
At a high level, we convert the decision task of distinguishing, for example, $T$ from $T'$, into the search task of finding the correct $T$ among a large polynomial number of candidate $T'$. Then, we decompose the joint state of $B$ and $C$ into a sum of orthogonal states corresponding to $C$'s success or failure, which cannot interfere with one another on $B$'s side. Finally, we argue that in the state corresponding to $C$'s success, $B$ is able to find the correct $T$, contradicting search security. 

If we had reduced to a decisional task, then the two branches corresponding to $C$'s success could respectively succeed with probabilities $>1/2$ and $<1/2$, canceling each others' advantage even without interfering quantumly. Crucially, in a search game, success is absolute and cannot cancel, so successful search in one branch is sufficient.
% For more details, see the end of this section.

\paragraph{The Alternate Challenge.}
The alternate challenge turns out to be very simple: use a random superspace $T_R \supset R$ with the same dimension as $T$. 
Despite $T_R$'s simplicity, seeing why it is sufficient and why $R$ is necessary requires some additional insights.
There is a tension between needing to find $v_P - x_{\OSP} \in (T_R + S) + x$ and the need for the alternate challenge to be indistinguishable while checking this condition. This tension ultimately forces the alternate challenge to intersect significantly with $S$, resulting in the use of $R$.

\paragraph{Disconnecting $T$ from $S$.} 
As an initial step, we follow the well-established strategy of replacing a coset membership check using a larger subspace: check that $v_P - x_{\OSP} \in W + x$ for a random $W \supset T$. If the PoNI using $T$ would pass, then this condition is satisfied. 
Furthermore, the condition can be checked using knowledge of $W$ and an $x_W \in W + x$.
The use of $W$ introduces some slack for us to exploit. As an additional simplifying step, we may invoke the psuedorandomness of $(S, x)$ and replace $\aux_A$ with encryptions of $0$, so that we only need focus on random $S$, $x$, and $T$.

The first conceptual leap is to partially disconnect the superspace used in the PoNI from $S$ by replacing $T \supset S$ by an independently random $T_W \subset W$. The slack from moving to a larger subspace ensures that $T_W$ likely will not contain $S$, moving us closer to a challenge that is both useful for the coset remainder theorem and that certified deletion allows the extractor to see. 
As we will see shortly, $T_W$ is not yet enough for this purpose, but for now, let us focus on the question of why $P_2$ will answer a PoNI using $T_W$ with $v_P - x_\OSP \in W + x$.

For this to be the case, $T_W$ needs to be indistinguishable from $T$ \emph{even conditioned} on seeing $W$ and $x_W$, which are required to check that $v_P - x_\OSP \in W + x$, and $\ket{S_{x,z}}$, which is required to run the PoNI in the first place (we may ignore $\aux_A$ since it just encrypts $0$s).
For any fixed $W$, the offsets $x \in W + x_W$ and $z \in \bbF_2^\secpar$ act as one-time pads for $\ket{S}$ (subject to $S\subset W$) so the mixed state $\rho_{S}$ over $\ket{S_{x,z}}$ turns out to depend only on $W$ and $x_W$. Without $S$ to anchor it, $T$ is just a random subspace of $W$, the same as $T_W$.

Unfortunately, $T_W$ still isn't useful for finding an element of $S+x$.
It turns out that the partial version of the coset Chinese remainder theorem really does require an element of $(T' + S) + x$ when using $T'$ as the input instead of $T'+S$.
If $W$ were very large, then $T_W$ intersects $T$ trivially almost certainly,
satisfying $(T' \intersect T) \subset S$. However, then $W \neq T_W + S$, so the algorithm would fail anyway when using $v_P - x_{\OSP} \in W + x$. 
On the other hand, if $W$ were small enough that $W = T_W + S$, then $T_W \intersect T$ would likely contain elements outside of $S$, and so the algorithm would return an element outside of $S+x$. 
Overall, the requirement that $W = T_W + S$ and fact that $W\supset T$ forces $T$ and $T_W$ to be strongly connected.

\paragraph{Disconnecting $T_W$ from $T$.}
Our second conceptual leap is to disconnect $T_W$ from $T$ by sampling $W$ implicitly using $T_R$. Explicitly, imagine replacing $(W,T_W)$ by $(T_R + S, T_R)$. 
Now $(T_R \intersect T) = R\subset S$ with overwhelming probability and $W=T_R+S$, enabling the partial coset Chinese remainder theorem. 
We show that these distributions are statistically close for any $S$.

Showing that $T_R +S$ is distributed similarly to $W$ is not too hard; $T_R + S$ is also a random superspace of $S$ with dimension $\dim(S) + \dim(T_R) - \dim(R)$, which we set $= \dim(W)$, with overwhelming probability. 
Showing that $T_R \approx_s T_W$ conditioned on $W = T_R + S$ is slightly more involved. 
Conditioned on observing $W = T_R + S$, $T_R$ is uniform over the superspaces of $R$ in this event space. 
$T_W$ is uniformly distributed over subspaces of $W$, so we only need show that it satisfies both of these events with overwhelming probability.

To ensure that $T_W$ satisfies these two conditions, we carefully tune the dimensions of all four subspaces $W$, $T_W$, $S$, and $R$. 
$T_W$ is independent of $S$, modulo containment of both in $W$. To guarantee that $T_W$ and $S$ span $W$ with overwhelming probability, we set $\dim(T_W) + \dim(S) \gg \dim(W)$. 
As a byproduct, these dimensional parameters force $T_W$ and $S$ to have a significant intersection.
Because of this, the extractor's alternate challenge \emph{must} contain a significant subspace of $S$. Fortunately, certified deletion does allow the extractor to know a random $R \subset S$, which can be used to seed $T_R$.

\subsection{Open Questions}\label{sec:overview-future}

This work initiates the study of proofs of no-intrusion and answers some initial feasibility questions, but there are many more open questions in this new direction.

\begin{itemize}
    \item \textbf{Other Primitives.} A wide variety of primitives have been studied in the context of unclonability, such as decryption keys, money, or even general software,~\cite{CIC:A09,STOC:AarChr12,C:ALLZZ21,C:AnaBeh24,TCC:GoyMalRai24,CiC:BroKarLor24}. Generally, any of these primitives can also be considered for proofs of no-intrusion by substituting a proof of no-intrusion for one of the copies an adversary is attempting to make. Which primitives can we construct proofs of no-intrusion for?

    \item \textbf{Other Primitives with Weaker Assumptions.} 
    As a related question, can PoNIs for other primitives be based on weaker assumptions than those required to make them unclonable?
    
    Our results suggest that this is possible. Our proofs of no-intrusion for coset states can be based on LWE, whereas tests for possession of coset states in prior works relied on very strong assumptions such as indistinguishability obfuscation or extractable witness encryption.

    \item \textbf{Decisional Security and a Bridge to Indistinguishable-Unclonable Encryption.} 
    One of the limitations of our analysis is a polynomial loss that occurs when reasoning about simultaneous search, limiting our applications to search security. Is it possible to achieve security where the prover attempts to solve a proof of no-intrusion while the hacker attempts to decide a bit?

    The question of decisional security seems closely related to the open question of indistinguishable-unclonable encryption, but seems easier because only one side faces a decisional task. We hope that any progress towards decisional PoNI security will provide new insights into this exciting open problem.

    \item \textbf{Removing Pauli Errors from LWE-Based Constructions.}
    Our proof of no-intrusion for coset states causes a $Z$ error on the tested state. 
    Is there a construction which leaves the state essentially unmodified, without resorting to indistinguishability obfuscation?

    A \emph{temporary} Pauli error seems inherently necessary to hide the extractor measuring the prover's state. However, it may be possible to reveal and correct the error after the extraction window. Unfortunately, revealing the $Z$ error in our current protocol compromises the security of the $\OSP$, revealing $T$ and thus ruining the security of future PoNI executions.
\end{itemize}
\ifsubmission \else \section{Preliminaries}

\subsection{Algebra}\label{sec:prelims-algebra}

Let $S\subset \bbF_2^n$ be a subspace. $\bbF_2^n$ can be partitioned into cosets $\{S+x:x\in \bbF_2^n\}$ of $S$. It is useful to identify the cosets by a single ``canonical'' element $x\in \bbF_2^n$. We let $\co(S)$ be the set of canonical elements of cosets of $S$. In particular, we choose $\co(S)$ to be a subspace of $\bbF_2^n$ so that $\co(S)$ is isomorphic to the quotient space $\bbF_2^n/S$.

There may be many possible choices of $\co(S)$ for each $S$. For example, one way to choose $\co(S)$ is to fix an arbitrary basis ${b_i}_{i\in[n]}$ of $\bbF_2^n$ where the first $k$ elements $\{b_i\}_{i\in [k]}$ form a basis for $S$. Then one can choose $\co(S) \coloneqq \Span(\{b_i\}_{i\in [k+1,n]})$ to be the span of the remaining basis vectors.

In this example, it is easy to compute the canonical representative of a coset $S+x$ given any element $x' \in S+x$. Simply compute the basis decomposition
\[
    x' = \sum_{i=1}^{n} c_i b_i
\]
and output
\[
    \Can_S(x') \coloneqq \sum_{i = k+1}^n c_i b_i
\]

We also define notation for the set of superspaces and subspaces of a vector space $S \subset \bbF_2^n$, with specified dimension.
\begin{align*}
    \Subspc(S, k) &\coloneqq \left\{T: \dim(T) = k \text{ and } S \subset T \subset \bbF_2^n\right\}
    \\
    \Supspc(S, k) &\coloneqq \left\{R: \dim(R) = k \text{ and } R\subset S\right\}
\end{align*}
A uniform sample from these sets is denoted as $T \gets \Subspc(S,k)$.

When an algorithm takes as input a coset $S+x$, we mean that it takes as input a description of the coset. This consists of a description of $S$, i.e. a basis, and the canonical element $\Can_S(x)$.

\paragraph{Counting Subspaces.}
The number of $k$-dimensional subspaces of a $d$-dimensional subspace of $\bbF_2^n$ is given by the Gaussian binomial coefficient
\[
    \binom{d}{k}_2 = \prod_{i=0}^{k-1} \frac{2^{d-i} - 1}{2^{i+1} - 1}
\]
Similarly, the number of $d$-dimensional superspaces of a $k$-dimensional subspace $S$ is $\binom{n-k}{d-k}_2$, the number of $(d-k)$-dimensional subspaces of the quotient space $\bbF_2^n/S$

The Gaussian binomial coefficient can be upper bounded by
\begin{align*}
    \binom{d}{k} 
    &< \prod_{i=0}^{k-1} \frac{2^{d-i}}{2^{i}} 
    \\
    &= 2^{(k-1)d - k(k-1)}
    \\
    &= 2^{(k-1)(d-k)}
\end{align*}
An additional useful identity is
\[
    \frac{\binom{d-1}{k - 1}_2}{\binom{d}{k}_2}
    = \frac{2^{k}-1}{2^{d}-1}
\]
Thus, we can upper bound
\begin{align*}
    \frac{\binom{d-r}{k - r}_2}{\binom{d}{k}_2}
    &=
     \left(\frac{\binom{d-r}{k - r}_2}{\binom{d-(r-1)}{k-(r-1)}_2}\right) \cdot \left(\frac{\binom{d-(r-1)}{k - (r-1)}_2}{\binom{d-(r-2)}{k-(r-2)}_2}\right) \cdot \dots \left(\frac{\binom{d-1}{k-1}_2}{\binom{d}{k}_2}\right)
     \\
     &= \prod_{i=0}^{r-1} \frac{2^{k-i}-1}{2^{d-i}-1}
     \\
     &< 2^{-r(d-k)}
\end{align*}

\subsection{Gentle Measurement Lemma}

\begin{lemma}[Gentle Measurement Lemma \cite{DBLP:journals/tit/Winter99}]\label{lem:gentle-measurement}
Let $\rho$ be a quantum state in some Hilbert space, and let $\{\Pi, I-\Pi\}$ be a projective measurement that acts on that Hilbert space. Also, let $(\rho, \Pi)$ satisfy: $\Tr(\Pi \rho)\geq 1-\delta$. \par 

Next, let $\rho'$ be the state that results from applying $\{\Pi, I-\Pi\}$ to $\rho$ and post-selecting on obtaining the first outcome:
\[
\rho' = \frac{\Pi \rho \Pi }{\Tr(\Pi \rho)}
\]
Then $\mathsf{TraceDist}(\rho, \rho')\leq 2\sqrt{\delta}$.
\end{lemma}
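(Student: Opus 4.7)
The plan is to reduce to the pure-state case via purification, apply the standard pure-state fidelity--trace-distance identity, and then pull the bound back down to $\rho$ using monotonicity of the trace norm under partial trace.

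First, I would introduce an auxiliary Hilbert space and fix a purification $\ket{\Psi} \in \regH \otimes \regH_{\aux}$ of $\rho$, so that $\Tr_{\regH_{\aux}}(\ket{\Psi}\bra{\Psi}) = \rho$. Define the pure state
\[
\ket{\Psi'} \coloneqq \frac{(\Pi \otimes I)\ket{\Psi}}{\|(\Pi \otimes I)\ket{\Psi}\|}.
\]
Since $\Pi$ is a projector, the denominator equals $\sqrt{\bra{\Psi}(\Pi \otimes I)\ket{\Psi}} = \sqrt{\Tr(\Pi \rho)} \geq \sqrt{1-\delta}$. A direct calculation (expanding $\ket{\Psi} = \sum_i \sqrt{p_i}\ket{\psi_i}\ket{i}$ in the spectral basis of $\rho$ and applying $\Pi \otimes I$) shows that $\Tr_{\regH_{\aux}}(\ket{\Psi'}\bra{\Psi'}) = \Pi\rho\Pi/\Tr(\Pi\rho) = \rho'$, so $\ket{\Psi'}$ is a purification of $\rho'$.

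Next, I would compute the overlap $\langle \Psi | \Psi' \rangle = \|(\Pi \otimes I)\ket{\Psi}\| = \sqrt{\Tr(\Pi \rho)} \geq \sqrt{1-\delta}$. Invoking the standard pure-state identity $\|\ket{\psi}\bra{\psi} - \ket{\phi}\bra{\phi}\|_1 = 2\sqrt{1 - |\langle \psi | \phi \rangle|^2}$ (which one proves by diagonalizing the rank-$2$ Hermitian difference inside $\Span\{\ket{\psi}, \ket{\phi}\}$) yields
\[
\|\ket{\Psi}\bra{\Psi} - \ket{\Psi'}\bra{\Psi'}\|_1 \leq 2\sqrt{1-(1-\delta)} = 2\sqrt{\delta}.
\]
Finally, because partial trace is a CPTP map and therefore contracts the trace norm, this bound descends to $\|\rho - \rho'\|_1 \leq 2\sqrt{\delta}$, as required.

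There is no real obstacle: this is a textbook calculation. The only point that merits explicit care is verifying that $\ket{\Psi'}$ genuinely purifies $\rho'$, which follows from the observation that applying a local projector on one half of a purification and renormalizing corresponds exactly to post-selecting on the measurement outcome $\Pi$. The constant $2$ in $2\sqrt{\delta}$ comes from the normalization convention for the trace distance used in the stated lemma (the unnormalized trace norm $\|\cdot\|_1$); under the alternative convention $\mathsf{TraceDist} = \tfrac12 \|\cdot\|_1$ the same argument gives the sharper bound $\sqrt{\delta}$.
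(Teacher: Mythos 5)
Your proof is correct. The paper does not prove this lemma at all -- it is imported verbatim from Winter's work as a known preliminary -- so there is no in-paper argument to compare against; your purification route (lift $\rho$ to $\ket{\Psi}$, note $\ket{\Psi'} \propto (\Pi\otimes I)\ket{\Psi}$ purifies $\rho'$, use the pure-state identity $\|\ketbra{\psi}-\ketbra{\phi}\|_1 = 2\sqrt{1-|\braket{\psi|\phi}|^2}$ together with the overlap $\braket{\Psi|\Psi'} = \sqrt{\Tr(\Pi\rho)}$, then contract under partial trace) is the standard fidelity-based derivation and every step checks out, including the observation that the bound only improves under the $\tfrac12\|\cdot\|_1$ normalization, so the stated $2\sqrt{\delta}$ holds under either convention.
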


\subsection{Oblivious State Preparation}

Oblivious state preparation allows a sender to instruct a receiver to prepare a quantum state, using only classical messages~\cite{AC:CCKW19,C:BK25}. It guarantees that the receiver successfully constructs the desired state, up to some Pauli corrections which the sender learns. Oblivious state preparation has been considered by prior works in the context of sending states from the set $\{\ket{0}, \ket{1},\ket{+},\ket{-}\}$. Here we define a slightly more general version for the preparation of \emph{coset states}
\[
    \ket{S_{x,z}} = X^xZ^z \sum_{s\in S} \ket{S}
\]

\begin{definition}
    An oblivious coset state preparation is an interactive protocol $\OSP$ between a PPT sender $\Sender$ with input a subspace $S\subset \bbF_2^n$ and a QPT receiver $\Recvr$:
    \[
        ((x,z),\  X^x Z^z \ket{S}) \gets \OSP\langle \Sender(S), \Recvr\rangle 
    \]
    where $(x, z) \in \{0,1\}^{2n}$ is the sender's output and $X^xZ^z\ket{S}$ is the receiver's output. It must satisfy the following properties:
    \begin{itemize}
        \item \textbf{Correctness:} With overwhelming probability, the receiver's output is close to the expected state $X^x Z^z \ket{S}$:
        % For any subspace $S$ and Pauli errors specified by $x, z\in \{0,1\}^n$, define 
        % \[
        %     \Pi_{S,x,z} \coloneqq \ketbra{S_{x,z}}
        % \]
        % Then 
        \[
            \expect{  |\braket{S_{x,z} | \psi}|^2 : ((x,z), \ket{\psi}) \gets \OSP\langle \Sender(S), \Recvr\rangle } = 1-\negl
        \]
        \item \textbf{Security:} For any adversarial QPT receiver $\Recvr^*$ and any pair of subspaces $S_0, S_1 \subset \bbF_2^n$,
        \[
            \left| \Pr[b = 0 : ((x,z), b) \gets \langle \Sender(S_1), \Recvr^*\rangle] - \Pr[b = 0 : ((x,z), b) \gets \OSP\langle \Sender(S_1), \Recvr^*\rangle]\right| = \negl
        \]
    \end{itemize}
\end{definition}

Shmueli~\cite{STOC:Shmueli22} showed how to construct such a protocol using quantum hybrid fully homomorphic encryption.\footnote{Shmueli additionally requires subexponential security to derive an unclonability property. However, his protocol satisfies basic oblivious coset state preparation even with polynomial hardness.}

\begin{theorem}[\cite{STOC:Shmueli22}]
    Assuming quantum hybrid fully homomorphic encryption, there exists a protocol for oblivious coset state preparation.

    As a corollary, one exists assuming LWE.
\end{theorem}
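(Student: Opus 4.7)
The plan is to follow Shmueli's construction, which the technical overview already sketched: the sender encrypts a description of $S$ under quantum hybrid FHE, the receiver homomorphically prepares the subspace state, and the receiver sends back the refreshed one-time pad so that the sender learns the Pauli corrections.

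\textbf{Construction.} First I would set up the protocol as follows. Let $(\mathsf{QFHE}.\KeyGen, \mathsf{QFHE}.\Enc, \mathsf{QFHE}.\Eval, \mathsf{QFHE}.\Dec)$ be a hybrid QFHE scheme, where classical data is encrypted as a classical ciphertext and quantum data is encrypted as a Pauli one-time pad with the pad keys encrypted classically. The sender samples $(\pk, \sk) \gets \mathsf{QFHE}.\KeyGen(1^\secpar)$, encrypts a basis description of $S$ as $c \gets \mathsf{QFHE}.\Enc(\pk, \mathsf{descr}(S))$, and sends $(\pk, c)$ to the receiver. The receiver homomorphically evaluates the efficient quantum circuit $U_{\mathsf{prep}}$ which, on input a basis for a subspace $S \subset \bbF_2^n$, outputs $\ket{S} = |S|^{-1/2} \sum_{s\in S} \ket{s}$ (standard: apply Hadamards to an $\dim(S)$-qubit register, then CNOT-copy into an $n$-qubit register using the basis vectors). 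The output is a hybrid ciphertext consisting of a fresh Pauli encoding $X^{x'}Z^{z'}\ket{S}$ together with a classical encryption $c'$ of $(x',z')$. The receiver sends $c'$ back; the sender decrypts to recover $(x,z) = (x',z')$ and outputs it.

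\textbf{Correctness.} Since $U_{\mathsf{prep}}$ is a polynomial-size Clifford+$H$ circuit, QFHE evaluation correctness guarantees that, with overwhelming probability, the receiver's output register is exactly $X^xZ^z\ket{S} = \ket{S_{x,z}}$ for the $(x,z)$ that the sender recovers. The only loss is the negligible QFHE correctness error per evaluated gate, so the expected squared overlap with $\ket{S_{x,z}}$ is $1-\negl$.

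\textbf{Security.} For any adversarial receiver $\Recvr^*$ and subspaces $S_0, S_1$, I would reduce the distinguishing advantage between $\OSP\langle \Sender(S_0), \Recvr^*\rangle$ and $\OSP\langle \Sender(S_1), \Recvr^*\rangle$ directly to the semantic security of the QFHE scheme. The receiver's entire view in the protocol consists of $(\pk, c)$, where $c$ encrypts $\mathsf{descr}(S_b)$; the receiver's final classical message $c'$ is computed from this view without any further input from the sender, so a QFHE distinguisher can simulate $\Recvr^*$ in its head and output its guess bit. Any non-negligible distinguishing advantage in the $\OSP$ game therefore yields a non-negligible QFHE distinguisher against plaintexts $\mathsf{descr}(S_0)$ and $\mathsf{descr}(S_1)$.

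\textbf{Main obstacle and the LWE corollary.} The only nontrivial point is making sure a hybrid QFHE scheme supporting quantum evaluation of $U_{\mathsf{prep}}$ actually exists under LWE. Here I would invoke the construction of \cite{C:GupVai24}, which builds such a scheme from any post-quantum classical FHE (and LWE-based classical FHE suffices). Given that building block, everything else in the reduction is essentially black-box use of evaluation correctness and semantic security, so there is no additional analytic difficulty beyond what is already handled by the QFHE scheme itself.
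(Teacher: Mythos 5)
Your proposal is correct and takes essentially the same route as the paper, which does not reprove this statement but imports it from Shmueli and sketches exactly this protocol in its technical overview: a (classical) hybrid-QFHE encryption of the description of $S$, homomorphic preparation of $\ket{S}$ under a fresh Pauli one-time pad $X^{x'}Z^{z'}$, and return of the encrypted pad keys so the sender learns $(x',z')$, with blindness reduced to QFHE semantic security and the LWE corollary via post-quantum FHE. The only nit is that your gate-level sketch of $U_{\mathsf{prep}}$ must also uncompute the coefficient register (possible since the coefficient vector is linearly recoverable from its image under the injective basis map); otherwise the output register is entangled with it and is not $\ket{S}$ -- a standard fix that does not affect the argument.
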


% \subsection{Miscellaneous}

% We mention here a few simple arithmetic manipulations for convenience in later proofs.

% \begin{claim}\label{claim:algebra-fracs}
%     Let $a, b, c, d \in [0,1]$. If $d$ is non-negligible and $|d-b| = \negl$, then
%     \[
%         \left|\frac{a}{b} - \frac{c}{d}\right| \leq \left|\frac{a-b}{c}\right| + \negl
%     \]
% \end{claim}
% \begin{proof}
%     \begin{align*}
%         \left|\frac{a}{b} - \frac{c}{d}\right|
%         &= \left|\frac{a}{b} - \frac{c}{b} + \left(\frac{c}{b} - \frac{c}{d}\right)\right|
%         \\
%         &\leq \left|\frac{a}{b} - \frac{c}{b}\right| + \left|\frac{c(b-d)}{bd}\right|
%     \end{align*}
%     Since $d$ is non-negligible and $|d-b| = \negl$, $|db|$ is non-negligible.
%     The claim follows from the fact that $c\leq 1$ and any negligible function divided by a non-negligible function is negligible.
% \end{proof}

\subsection{Certified Deletion}
\cite{EC:BGKMRR24} showed that given a coset state $\ket{S_{x.z}}$, if an adversary produces $v \in S+ x$, then $\langle v, \vec{1}\rangle$ becomes \emph{information-theoretically} hidden from the adversary's view. Moreover, the result holds even if the adversary additionally receives some auxiliary information that computationally (partially) hides $S$, $x$, and $\langle v, \vec{1}\rangle$. 
We will use a slightly different version which follows from some slight tweaks to their proof.

\begin{definition}{Certified Deletion Search Game.} The certified deletion game $\CDGame_{D_1, D_2}(\adv_1, \adv_2)$ is parameterized by two distributions $D_1$ and $D_2$ and is played by two adversaries $\adv_1$ and $\adv_2$. 
$D_1$ is a distribution over $(S + x, T, R^\perp + z_{R^\perp}, k)$ where $R\subset S \subset T \subset \bbF_2^\secpar$ are subspaces with dimensions $d_R$, $d_S$, and $d_T$, respectively, where $x\in \co(S)$, and where $z_{R^\perp} \in \co(R^\perp)$.
$D_2$ is a distribution over (potentially entangled) quantum states $(\aux_1, \aux_2)$. The game is played as follows.
\begin{enumerate}
    \item Sample $(S + x, T, R^\perp + z_{R^\perp}, k) \gets D_1$. Sample $z \gets \co(S^\perp) \intersect R^\perp + z_{R^\perp}$ and let $x_T = \Can_T(x)$.
    \item Sample $(\aux_1, \aux_2) \gets D_2(\ket{S_{x,z}}, S+x, T + x_T, R^\perp + z_{R^\perp}, k)$.
    \item Run $(v_1, \rho) \gets \adv_1(\aux)$.
    \item Run $v_2 \gets \adv_2(\rho, \aux_2)$.
    \item $\adv_1$ and $\adv_2$ win if $v_1 \in S+ x$ and $v_2\in S^\perp + z$.
\end{enumerate}
\end{definition}

There are three main changes to the game as compared to \cite{EC:BGKMRR24}'s original version. 
First, $\adv_2$'s goal is now to search for $v_2\in S^\perp + z$, rather than decide some bit. 
Second, $\ket{S_{x,z}}$ is no longer a uniformly random coset state. Instead, parts of its description are generated by a distribution $D_1$. In the modified theorem, we will essentially require that $D_1$ is \emph{pseudorandom}. This allows more flexible generation of coset states, such as generating their descriptions via PRF.
Third, the auxiliary input distribution may depend on $\ket{S_{x,z}}$. \cite{EC:BGKMRR24}'s version considered giving the adversary $(\ket{S_{x,z}}, \aux_1)$. By incorporating $\ket{S_{x,z}}$ into $\aux_1$, the auxiliary input can now be the leftover state from protocol executions involving $\ket{S_{x,z}}$. This allows the certified deletion theorem to be applied to compositions of protocols.
A final minor difference is that we swap the role of the computational and Hadamard basis so that the computational basis holds the certificate and the Hadamard basis holds the data.

\begin{theorem}\label{thm:search-cd}
    If there exists a QPT algorithm $\Sim$ such that
    \begin{gather*}
        \left\{
            \aux_1 : \begin{array}{c}
                (S+x, T, R^\perp + z_{R^\perp}, k) \gets D_1
                \\
                 (\aux_1, \aux_2) \gets D_2\big(\ket{S_{x,z}}, S+x, T + x_T, R^\perp + z_{R^\perp}, k \big)
            \end{array}
        \right\}
        \\
        \approx_c
        \\
        \left\{
            \aux_1 : \begin{array}{c}
                R, S, T \gets \Subspc(\bbF_2^\secpar, (d_R, d_S, d_T)) \text{ s.t. } R\subset S \subset T
                \\
                x \gets \co(S),\ z\gets \co(S^\perp),\ x_T = \Can_T(x),\ z_{R^\perp} = \Can_{R^\perp}(S)
                \\
                 (\aux_1, \aux_2) \gets \Sim\big(\ket{S_{x,z}}, T + x_T, R^\perp + z_{R^\perp}\big)
            \end{array}
        \right\}
    \end{gather*}
    and $D_1$ is supported on $(S, T, R)$ such that $2^{-(d_S - d_R)} = \negl$ and $2^{-(d_T - d_S)} = \negl$, then for all QPT $\adv_1$ and (unbounded) quantum $\adv_2$,
    \[
        \Pr[\mathsf{win}\gets \CDGame_{D_1, D_2}(\adv_1, \adv_2)] = \negl
    \]
\end{theorem}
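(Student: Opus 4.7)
The plan is to deduce the statement from the certified deletion theorem of \cite{EC:BGKMRR24} in two steps: first, replace the pseudorandom distributions $(D_1, D_2)$ with their simulated counterparts over truly random subspaces, and second, reduce the search task of $\adv_2$ to the decisional hardcore-bit guarantee provided by the original theorem.

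For the first step, I would argue that the adversary's winning probability in $\CDGame_{D_1, D_2}$ is negligibly close to its winning probability in the simulated game, where $(R\subset S\subset T, x, z)$ are uniform subject to the prescribed dimensions and $(\aux_1,\aux_2)\gets\Sim(\ket{S_{x,z}}, T+x_T, R^\perp+z_{R^\perp})$. The hypothesis gives indistinguishability of the $\aux_1$ marginal against QPT distinguishers; since $\adv_1$ is QPT and depends only on $\aux_1$, its joint output $(v_1,\rho)$ is computationally indistinguishable across the two worlds. The remaining side information needed to evaluate winning (the cosets $S+x$, $S^\perp+z$) and the input $\aux_2$ fed to $\adv_2$ are all produced consistently by the simulator from the same randomness, so a hybrid argument bounds the difference in winning probabilities by a negligible function even though $\adv_2$ is unbounded.

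For the second step, in the simulated world we are essentially in the setting of \cite{EC:BGKMRR24}'s certified deletion theorem, modulo the symmetric swap of the computational and Hadamard bases (handled by the duality $S\leftrightarrow S^\perp$). Their theorem gives information-theoretic hiding of hardcore bits of $z$ (viewed as an element of $\co(S^\perp)$) against any unbounded $\adv_2$ when $\adv_1$ simultaneously produces some $v_1\in S+x$; the dimensional conditions $2^{-(d_S-d_R)}=\negl$ and $2^{-(d_T-d_S)}=\negl$ are precisely what their monogamy-style analysis needs to control subspace overlaps. To convert their decisional guarantee into our search guarantee, observe that if $v_2\in S^\perp+z$ then for every $w\in S$ we have $\langle v_2, w\rangle = \langle z, w\rangle$, because $v_2-z\in S^\perp$ is orthogonal to $w$. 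Hence any $\adv_2$ that finds $v_2\in S^\perp+z$ with non-negligible probability directly predicts $\langle z, w\rangle$ for any fixed nonzero $w\in S$ with non-negligible advantage, contradicting decisional certified deletion.

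The main obstacle I expect is making the hybrid in the first step go through despite $\adv_2$ being unbounded. We cannot treat the full winning predicate as a QPT distinguishing statistic, so the indistinguishability must be invoked at the level of $(v_1,\rho)$ jointly with the other quantities that $\adv_2$ and the winning check later consume, exploiting that the simulator produces those quantities consistently from $(\ket{S_{x,z}}, T+x_T, R^\perp+z_{R^\perp})$. Once this joint-simulatability perspective is in place, the reduction to \cite{EC:BGKMRR24} (with the search-to-decision conversion above) is largely mechanical.
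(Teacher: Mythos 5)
Your step 1 does not go through, and you have already put your finger on why: the hypothesis only gives \emph{computational} indistinguishability of the marginal $\aux_1$ against QPT distinguishers, while the winning predicate depends on the secrets $S+x$ and $S^\perp+z$ and on an \emph{unbounded} $\adv_2$. Invoking the indistinguishability ``at the level of $(v_1,\rho)$ jointly with the other quantities'' is exactly the step that is not available: the joint real and simulated distributions (including $\rho$, $\aux_2$, and the hidden cosets) need not be statistically close -- in the intended application $\aux_1$ contains encryptions of $S$ which $\adv_1$ can simply forward into $\rho$, and the unbounded $\adv_2$ can then decrypt them, so its behavior (and the winning probability) can differ arbitrarily between the real and simulated worlds as far as your argument shows. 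The paper never switches the whole game to the simulated world. Instead it purifies the sampling of $z$ (entangling $\ket{S_{x,z}}$ with a $\mathsf{prep}$ register), delays the measurement, and obtains the final bound against the unbounded $\adv_2$ \emph{information-theoretically}: conditioned on a projective check tied to $v_1$, the measured $z$ is uniform over $2^{d_S-d_R}$ values and unentangled with $\adv_2$'s register. The computational hypothesis is used only once, to bound an efficiently checkable bad event that involves solely the QPT $\adv_1$ (namely $v_1\in(S+x)\setminus(R+x_R)$, bounded by $2^{-(d_T-d_S)}$ after replacing $\adv_1$'s input by the simulated one); no statement about $\adv_2$'s view is ever needed.

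Your step 2 also has a gap. From $v_2\in S^\perp+z$ one does get $\langle v_2,w\rangle=\langle z,w\rangle$ for $w\in S$, but ``finds $v_2$ with non-negligible probability $\epsilon$'' does not translate into ``predicts $\langle z,w\rangle$ with non-negligible advantage'': whenever $v_2\notin S^\perp+z$ the bit $\langle v_2,w\rangle$ may be anti-correlated with $\langle z,w\rangle$, so the overall guessing probability can sit at or below $1/2$ even for noticeable $\epsilon$, and the decisional bound (roughly $\Pr[\text{cert valid}\wedge\text{guess correct}]\le\tfrac12\Pr[\text{cert valid}]+\negl$) yields no contradiction unless $\epsilon>1/2$. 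The usual fix -- guess $\langle z,w\rangle$ for a \emph{random} $w\in S$ so that the wrong case is balanced -- requires both knowledge of $S$ and a hardcore-predicate statement for secret-dependent $w$, which is not what \cite{EC:BGKMRR24} provides (their theorem hides one fixed inner-product bit, and their proof even needs an extra message-bit hybrid that the search setting dispenses with). This is why the paper reproves the search statement directly via the purification argument rather than reducing search to the existing decisional theorem; your black-box route would need a genuinely new Goldreich--Levin-style ingredient to be salvaged.
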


The proof is almost identical to the one from \cite{EC:BGKMRR24}, with some small, but targeted, tweaks, so we defer it to \Cref{app:cd}. 
At a high level, the prior work purifies the generation of $\ket{S_{x,z}}$ by instead preparing the state
\[
    \ket{\psi_0} \propto \sum_{z \in (\co(S^\perp)\intersect R^\perp) + z_{R^\perp}} \ket{S_{x,z}}_{\calS} \otimes \ket{z}_{\mathsf{prep}}
\]
and giving the adversary register $\calS$.\footnote{In the original work, there is an additional hybrid before purification where the challenger guesses a message bit in order to make $z$ uniformly random (within $R^\perp + z_{R^\perp}$). We do not need this step in the search version, because $z$ starts as uniformly random (within $R^\perp + z_{R^\perp}$).} 
They then show that conditioned on $\adv_1$ producing $v_1 \in S + x$, the $\mathsf{prep}$ register holds $\sum_{z\in (\co(S^\perp)\intersect R^\perp) + z_{R^\perp}} p_z\ket{z}$, where $p_z$ is a phase with absolute value $1$, \emph{unentangled with the adversary's state}. Thus, measuring it gives a random $z$ from a set of size $2^{d_S - d_R}$ independently of $v_2$, which matches with negligible probability. 

The tweak occurs when arguing that the $\mathsf{prep}$ register holds $\sum_{z\in  (\co(S^\perp)\intersect R^\perp) + z_{R^\perp}} p_z \ket{z}$. Prior work used a computational property called subspace hiding to replace $\aux_1$ by an $\aux'_1$ that depends \emph{only} on (random) $T + x_T$ and $R^\perp + z_{R^\perp}$. We can achieve the same effect directly by using the simulation property of $D_1$ and $D_2$. Due to the gap between $T$ and $S$, $\adv_1(\aux'_1)$ cannot find an element of $S+x$ without fully measuring the coset state they are given, which collapses the $\mathsf{prep}$ register to the desired state. Since the condition on $\mathsf{prep}$ is efficiently testable, this must also occur in the purified experiment.

\subsection{Estimating Quantum Acceptance Probabilities}

\cite{TCC:Zhandry20} gives a method of approximating the probability that a state is accepted by a POVM $(\calP= \sum_{i} p_i (I- P_i), \calQ= \sum_{i} p_i (I- P_i))$ which is a  mixture of binary-outcome projective measurements $\{P_i, I-P_i\}$. Crucially, the method is almost-projective. In other words, if run twice, it will almost certainly give the same result both times. Later, \cite{C:ALLZZ21} observed that the technique can be applied to test if a state's acceptance probability is greater than some threshold.

Although the technique is quite general, we only need a few very specific properties from it that are given in corollary 1 of \cite{C:ALLZZ21}.
We refer the reader to \cite{TCC:Zhandry20} for a fully detailed description of the general technique.

Roughly, $\ATI_{\calP, \gamma}^{\epsilon,\delta}$ estimates whether a state $\rho$ is accepted by $\calP$ with probability at least $\gamma$, with accuracy $\delta$ and precision $\epsilon$.

\begin{lemma}
    \label{lem:ati}
    Let $(\calP = \sum_{i} p_i P_i, \calQ= \sum_{i} p_i (I- P_i))$ be a mixture of projective measurements such that it is efficient to sample from the distribution defined by $\Pr[i] = p_i$. For every $\epsilon, \gamma,\delta > 0$ and mixture of projective measurements $\calP$, there exists an algorithm $\ATI_{\calP,\gamma}^{\epsilon,\delta}$ outputting $\Accept$ or $\Reject$ such that the following hold.
    \begin{itemize}
        \item \textbf{Efficient.} The expected running time of $\ATI$ is $\poly[1/\epsilon, \log(1/\delta)]$,
        
        \item \textbf{Approximately Projective.} $\ATI$ is approximately projective. In other words, for all states $\rho$,
        \[
            \Pr\left[b_1 = b_2: \begin{array}{c}
                 (b_1, \rho') \gets \ATI(\rho)  \\
                 b_2 \gets \ATI(\rho') 
            \end{array}\right] = 1-\negl(\secpar)
        \]

        \item \textbf{Estimation.} For all states $\rho$,
        \[
            \Pr\left[
            \begin{array}{c}
                 b = \Accept \land \\
                 \Tr[\calP \rho'] < \gamma - \epsilon
            \end{array}
            :
             (b_1, \rho') \gets \ATI(\rho) \right] 
             \leq \delta
        \]
    \end{itemize}
\end{lemma}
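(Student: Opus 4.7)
The plan is to construct $\ATI$ by combining a purification of the POVM $\calP$ with quantum phase estimation, following the line of work of Marriott--Watrous and Zhandry. First, I would express $\calP$ as a coherent circuit: introduce an auxiliary register $\calI$, apply a state-preparation unitary $V$ mapping $\ket{0}_\calI \mapsto \sum_i \sqrt{p_i}\, \ket{i}_\calI$ (efficient by hypothesis on sampling $p_i$), then apply the controlled reflection $R = \sum_i \ket{i}\bra{i}_\calI \otimes (2P_i - I)$. Define the walk-style unitary $U = (V^\dagger \otimes I)\, R\, (V \otimes I) \cdot (2\ket{0}\bra{0}_\calI \otimes I - I)$. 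By Jordan's lemma, the Hilbert space decomposes into two-dimensional invariant subspaces on which $U$ has eigenvalues $e^{\pm i\theta}$, where $\cos^2(\theta/2)$ equals the $\calP$-acceptance probability of the corresponding pure state.

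Next, $\ATI_{\calP,\gamma}^{\epsilon,\delta}$ runs quantum phase estimation on $U$ starting from $\ket{0}_\calI \otimes \rho$, with precision sufficient to recover $\theta$ (and hence $\cos^2(\theta/2)$) to within $\epsilon/2$ with confidence $1-\delta$. Standard phase estimation uses $O(1/\epsilon)$ controlled applications of $U$, and $O(\log(1/\delta))$ repetitions boost the confidence via majority vote, giving the claimed $\poly(1/\epsilon, \log(1/\delta))$ runtime. $\ATI$ outputs $\Accept$ iff the estimate is at least $\gamma - \epsilon/2$; then the auxiliary register is traced out to produce $\rho'$.

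For the approximately projective property, phase estimation approximately projects onto joint eigenspaces of $U$, so a second invocation on $\rho'$ produces (with all but negligible probability) an estimate within $\epsilon/2$ of the first, hence the same accept/reject verdict. For the estimation guarantee, conditioned on $\Accept$ with estimate $\tilde p$, the post-measurement state is supported (up to $\delta$ error coming from phase estimation's failure probability) on Jordan blocks whose associated acceptance probability is within $\epsilon/2$ of $\tilde p \geq \gamma - \epsilon/2$, giving $\Tr[\calP \rho'] \geq \gamma - \epsilon$. The main obstacle will be the quantitative analysis relating the phase estimation output to $\Tr[\calP \rho']$ rather than $\Tr[\calP \rho]$: the clean way is to observe that the Jordan decomposition diagonalises both $U$ and $\calP$ simultaneously on each invariant subspace, so projecting onto an $\epsilon/2$-band of $U$-eigenvalues automatically restricts $\calP$ to the corresponding band, and the bookkeeping between the two bands only incurs an additional $\epsilon/2$ slack, which we absorb into the stated $\epsilon$.
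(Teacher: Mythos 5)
A preliminary remark: the paper does not prove this lemma at all; it is imported as a black box (it is essentially Corollary 1 of \cite{C:ALLZZ21}, which packages Zhandry's approximate projective implementation \cite{TCC:Zhandry20}, built from Marriott--Watrous-style alternating measurements between $\sum_i \ket{i}\bra{i}\otimes P_i$ and the projector onto the control state $\sum_i \sqrt{p_i}\ket{i}$, analyzed via Jordan's lemma). Your phase-estimation-on-the-walk-operator construction is the standard alternative implementation of the same idea, and the efficiency and estimation properties can plausibly be pushed through along the lines you sketch, provided you handle two points you currently wave at: the $O(\log(1/\delta))$ confidence boost must be done coherently (e.g.\ a coherent median) rather than by repeated destructive runs, and the quantity to be bounded at the end is the acceptance of the \emph{residual system state after discarding the ancilla} under $\calP$, not a property of the walk operator; an arbitrary post-phase-estimation state inside a Jordan block need not lie in the image of the control-state projector, which is exactly where Zhandry's construction does extra work (a final projection of the control back onto $\sum_i\sqrt{p_i}\ket{i}$, with a correction step).

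The genuine gap is in the approximately-projective step. From ``the second estimate is within $\epsilon/2$ of the first'' you conclude ``hence the same accept/reject verdict,'' and that inference fails at the threshold. Concretely, take $\calP$ a single projective measurement and $\rho$ an eigenstate whose eigenvalue is tuned so that your estimator outputs a value on either side of the cut-off $\gamma-\epsilon/2$ with constant probability (such an eigenvalue exists by continuity of the acceptance probability in the block angle). Phase estimation leaves this eigenstate essentially undisturbed, so $b_1$ and $b_2$ are close to independent nontrivial coin flips and agree only with constant probability, not $1-\negl(\secpar)$. This is precisely why \cite{TCC:Zhandry20} states almost-projectivity for the \emph{real-valued} estimate (two successive estimates agree up to $\epsilon$ except with probability $\delta$), and why \cite{C:ALLZZ21}'s Corollary 1 asserts binary repeatability only with a \emph{shifted} threshold (acceptance by $\ATI_{\calP,\gamma}$ implies acceptance by $\ATI_{\calP,\gamma-2\epsilon}$ with high probability). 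To repair your proof you must either prove that threshold-shifted form (which is what the paper's applications actually need), or randomize/perturb the threshold so that the tested state has negligible spectral weight in an $\epsilon$-window around it except with small probability; as literally stated, the same-threshold binary repeatability for all $\rho$ is not something your construction — or any estimator with inverse-polynomial precision — can deliver.
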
 \fi
\section{Simultaneous Search}\label{sec:simult-search}

In this section, we prove a general lemma about asymetrically changing challenge distributions in no-signaling search tasks. 
Imagine that $B$ was given a register $\calB$ and $C$ was given a register $\calC$, then they each attempt to separately solve some search task, with the goal of succeeding simultaneously. 
Now consider what would happen if $C$ were instead given $\calC'$ which is indistinguishable to $C$, \emph{but not to $B$}. Although $C$ must still succeed in its search task (as long as it can check it) with almost the same probability, it is not a-priori clear that the simultaneous distribution is unaffected.
We show that $\calC$ can be replaced by $\calC'$ without dropping the simultaneous success probability to negligible.

\begin{lemma}\label{lem:simult-search}
    Let $\calD$ be a distribution outputting three registers $(\regA, \regB, \regC)$ along with a classical string $\cAuxClass$. Let $\calD'$ be a distribution taking as input $\cAuxClass$ outputting $\regC'$ such that
    \[
        \{(\regA, \regC, \cAuxClass): (\regA, \regB, \regC, \cAuxClass)\gets \calD\} 
        \approx_c
        \left\{(\regA, \regC', \cAuxClass): 
            \begin{array}{c}
             (\regA, \regB, \regC, \cAuxClass)\gets \calD \\
              \regC' \gets \calD'(\cAuxClass)
            \end{array}
            \right\}
    \]
    For all QPT algorithms $(A, C)$ and all (unbounded) quantum algorithms $B$, if 
    \[
        \Pr\left[1\gets B(\regB, \regA_B) \land 1\gets C(\regC, \cAuxClass, \regA_C): 
        \begin{array}{c}
             (\regA, \regB, \regC, \cAuxClass)\gets \calD \\
             (\regA_B, \regA_C) \gets A(\regA) 
        \end{array}
        \right] 
        \geq 1/p
    \]
    for some $p = \poly$, then there exists $q = \poly$ such that
    \[
        \Pr\left[
        1\gets B(\regB, \regA_B) \land 1\gets C(\regC', \cAuxClass, \regA_C): 
        \begin{array}{c}
             (\regA, \regB, \regC, \cAuxClass)\gets \calD \\
             (\regA_B, \regA_C) \gets A(\regA) \\
             \regC' \gets \calD'(\cAuxClass)
        \end{array}
        \right]
        \geq 1/q
    \]
\end{lemma}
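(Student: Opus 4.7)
The plan is to proceed by contradiction, lifting the decisional indistinguishability hypothesis into a search-flavored statement via a hybrid over positions, and then exhibiting a QPT extractor that beats the resulting search bound. Two preliminary facts will be used throughout: (i) by applying the indistinguishability to the QPT distinguisher that simply runs $A$ on $\regA$ followed by $C$ on $(\regC, \cAuxClass, \regA_C)$ and outputs $C$'s bit, the marginal probability $\Pr[C = 1]$ differs by at most a negligible amount between the two worlds; and (ii) because $B$ never touches $\regC$ or $\regC'$ and the marginal of $\calD$ on $(\regA, \regB, \cAuxClass)$ is preserved under the swap, the reduced state on $(\regB, \regA_B)$ (and hence $\Pr[B=1]$) is literally identical in the two worlds.

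Assuming for contradiction that $\Pr[B=1 \land C=1 \mid \regC']$ is negligible, decompose the joint post-$A$ state $\rho^W$ using the orthogonal projectors $\Pi_C$ and $I - \Pi_C$ on $(\regA_C, \regC, \cAuxClass)$, and define the sub-normalized residual state
\[
    \eta^W \;=\; \Tr_{\regA_C, \regC, \cAuxClass}\bigl[(I \otimes \Pi_C)\, \rho^W\, (I \otimes \Pi_C)\bigr]
\]
on $(\regB, \regA_B)$. The premise and contradiction hypothesis then translate into a quantitative gap $\Tr[\Pi_B \eta^{W0}] \geq 1/p$ versus $\Tr[\Pi_B \eta^{W1}] = \negl$. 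Moreover, the cross terms $\Tr_{\regA_C, \regC, \cAuxClass}[(I \otimes \Pi_C) \rho (I \otimes (I - \Pi_C))]$ vanish under any $\Pi_B$-trace by orthogonality of $\Pi_C$ and $I - \Pi_C$, so the $C = 1$ and $C = 0$ branches can be analyzed independently on $B$'s side without interference.

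A standard hybrid over positions now lifts the decisional indistinguishability into the following search statement: sampling one real $(\regA, \regB, \regC, \cAuxClass) \gets \calD$ alongside $k = \poly$ independent fakes $\regC'_1, \ldots, \regC'_k \gets \calD'(\cAuxClass)$ and planting the real at a uniformly random position $j^* \in \{0, \ldots, k\}$, no QPT algorithm receiving $\regA$, $\cAuxClass$, and the $k+1$ position contents can output $j^*$ with probability exceeding $\tfrac{1}{k+1} + \negl$. To violate this, construct a QPT extractor that runs $A$, coherently applies $C$ to each position together with $\regA_C$, and invokes $\ATI_\gamma^{\epsilon,\delta}$ (Lemma~\ref{lem:ati}) to approximately-projectively threshold each position's post-$C$-success signal, with the gentle measurement lemma (Lemma~\ref{lem:gentle-measurement}) bounding cumulative disturbance of $\regA_C$ across positions to a negligible total. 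By the branch decomposition together with the $\Pi_B$-gap in $\eta^W$, the real position's post-$C$-success reduced state on $(\regB, \regA_B)$ carries inverse-polynomial $\Pi_B$-weight while each fake position's mirrors the $W1$ version with negligible weight; outputting the single position whose $\ATI$ exceeds threshold then beats $\tfrac{1}{k+1}$ by $\Omega(1/(pk))$, contradicting the search indistinguishability for $k$ chosen polynomially larger than $p$.

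The main obstacle is that $B$ is unbounded and cannot be literally invoked inside a QPT reduction. The technical maneuver is precisely the combination of the orthogonal-branch decomposition, which isolates the $B$-detectable gap into the sub-normalized states $\eta^W$, with $\ATI$, which provides a QPT-implementable proxy for detecting that gap up to polynomial loss. Crucially, because search success is absolute rather than measured relative to $\tfrac{1}{2}$, the orthogonal branches cannot cancel each others' contributions, so even a polynomial-loss proxy for $B$'s filter suffices; this is the structural reason the lemma is restricted to search-style simultaneous tasks rather than the decisional analog discussed in the paper's introduction.
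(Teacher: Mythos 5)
Your setup matches the paper up to a point: the hybrid that converts the indistinguishability premise into a ``find the planted real $\regC$ among polynomially many fakes'' search bound, and the orthogonal-branch decomposition with vanishing cross terms, are exactly the right ingredients. The gap is in how your extractor is supposed to exploit the branch structure. You claim $\ATI$ gives ``a QPT-implementable proxy for detecting'' the $\Pi_B$-gap, but the only measurement your extractor can feed into $\ATI$ is $C$'s acceptance on a given position; it cannot implement anything resembling $\Pi_B$, since that measurement runs the unbounded $B$ and acts partly on $\regB$, which the reduction does not even hold. Moreover $C$'s acceptance probability per position carries no usable signal about which position is real: by the very indistinguishability you are reducing to (and your own preliminary fact (i)), no QPT statistic of $(\regA_C, \regC_i, \cAuxClass)$ can noticeably favor the real position. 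Concretely, under the contradiction hypothesis it is consistent that $C$ accepts every position, real or fake, with probability about $1/2$, while $\Pr[B{=}1 \wedge C{=}1]$ on fakes is negligible because $B$'s success weight sits on the part of the state where $C$ rejects fakes; then every position clears any fixed threshold, ``the single position whose $\ATI$ exceeds threshold'' is ill-defined, and the extractor gains no advantage. Relatedly, your claim that gentle measurement bounds the cumulative disturbance of $\regA_C$ to a negligible total is false on the actual state (the per-position outcome is nowhere near deterministic there); it holds only inside the $B{=}1$ branch, and only because of the contradiction hypothesis. Finally, the property you would need from $\ATI$ (that a state with negligible acceptance probability is rejected with overwhelming probability) is not among the properties stated in \Cref{lem:ati}.

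The repair -- and the paper's actual argument -- is that the extractor never tries to detect the gap at all. It simply measures $C$'s decision bit on the candidates one by one and outputs the first index that accepts. The $\Pi_B$-decomposition enters only in the analysis: since the extractor acts on registers disjoint from $(\regB, \regA_B)$, its success probability is at least its success computed on the subnormalized $B{=}1$ branch; within that branch the contradiction hypothesis forces $C$ to reject every fake with overwhelming probability (so those measurements are gentle there, by \Cref{lem:gentle-measurement}), while the real index accepts with probability close to $1/p$, beating the $1/(2p) + \negl$ search bound. In short, the branch restriction must live entirely in the analysis of a plain first-accept extractor, not in an attempt to have the extractor estimate a branch-dependent quantity; with that change your outline collapses to the paper's proof, and the $\ATI$ machinery becomes unnecessary.
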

\ifsubmission
    We defer the proof to \Cref{app:simult-search} due to page limits.
    At a high level, the proof begins by switching from viewing the indistinguishability of $\calD'$ as a hard decisional task to a hard search task: given many samples $\calC'$ from $\calD'(\cAuxClass)$ and one original sample $\calC$, find the original sample. Then, we decompose $B$ and $C$'s joint state using a projector onto $B$'s success:
    \[
        \alpha \ket{\phi^B_{B=1}} \otimes \ket{\psi^C_{B=1}} + \beta \ket{\phi^B_{B=0}} \otimes \ket{\psi^C_{B=0}}
    \] 
    Because of the orthogonality on $B$'s register and the fact that $B$ and $C$ cannot communicate, the branches $\ket{\psi^C_{B=1}}$ and $\ket{\psi^C_{B=0}}$ cannot interfere with one another. Thus, the reduction's overall success probability in the search game is at least as high as its success probability in the branch $\ket{\psi^C_{B=1}}$.
    Finally, we can find the correct $\calC$ in branch $\ket{\psi^C_{B=1}}$ if the lemma were false by sequentially testing $C$'s success on each candidate and outputting the first to succeed. If $C(\calC')$ could not succeed simultaneously with $B$, then $C$ will reject any false candidates $\calC'$ with overwhelming probability in this branch. The first candidate that it can accept is the real $\calC$.

\else
    \begin{proof}
    The precondition on $\calD$ and $\calD'$ implies a search hardness property. For any $p = \poly$, consider the task of finding one sample of $\regC$ when given it along with $p-1$ samples of $\calD'$, in random order, along with $\regA$:
    \[
        \left\{(\regA, \cAuxClass, \pi(\regC, \regC_1', \dots, \regC_{p-1}')): 
        \begin{array}{c}
            (\regA, \regC) \gets \calD
            \\
            \regC_i \gets \calD'\\
            \pi \gets \mathsf{Permute}_{p}
        \end{array}\right\}
    \]
    % $p-1$ samples from $\calD'$ and one sample from $\calD$, randomly permuted, a
    Any QPT algorithm has a $1/p + \negl$ probability of finding the index of the sample from $\calD$. This follows from a simple hybrid argument showing that any two indices of $\calD$ are indistinguishable. Both cases are indistinguishable from the experiment where every index contains a sample from $\calD'(\cAuxClass)$.\footnote{$\cAuxClass$ being classical ensures that an arbitrary number of samples can be generated.}

    Suppose that $(A,B,C)$ simultaneously output $1$ with probability $\geq 1/p$ for $p=\poly$. We show that if 
    \[
        \Pr\left[
        1\gets B(\regB, \regA_B) \land 1\gets C(\regC', \cAuxClass, \regA_C): 
        \begin{array}{c}
             (\regA, \regB, \regC, \cAuxClass)\gets \calD \\
             (\regA_B, \regA_C) \gets A(\regA) \\
             \regC' \gets \calD'(\cAuxClass)
        \end{array}
        \right]
        = \negl
    \]
    then $(A, B)$ can be used to violate the search hardness property over $2p$ candidates.
    The reduction is as follows.
    \begin{enumerate}
        \item The reduction takes as input $2p$ candidates $(\regC_1, \dots, \regC_{2p})$ along with $\regA$.
        \item The reduction runs $(\regA_B, \regA_C) \gets A(\regA)$.
        \item For each $i=1,\dots, 2p$, the reduction measures the output bit of $C(\regC_i, \cAuxClass, \regA_C)$. It outputs the first index where $C$ outputs $1$.
    \end{enumerate}

    In general, we may write $A$, $B$, and $C$ as projectors $(\Pi_A^0, \Pi_A^{1})$, etc, by extending their state with ancillas. For convenience, we consider these ancillas to be part of $\regA$, $\regA_B$, and $\regA_C$. After running $A$, the state across registers $(\regB, \regA_B, \regA_C)$ and any $\regC_i$ is a mixed state (potentially entangled with the original $\regC$)\footnote{We consider each candidate $\regC_i$ to have a copy of $\cAuxClass$, which is possible because $\cAuxClass$ is a classical string.}
    \[
        \rho_{\regB,\regA_B, \regA_C, \regC_i}
        % = (\Pi_B^0\otimes I) \rho_{\regB,\regA_B, \regA_C} + (\Pi_B^1\otimes I) \rho_{\regB,\regA_B, \regA_C}
    \]
    % Let $\alpha_b = \Tr[ \Pi_B^b \rho_{\regB,\regA_B, \regA_C}]$ for $b\in \{0,1\}$.
    The state on just register $\regA_C$ is
    \begin{align*}
        \Tr_{\regB,\regA_B}[\rho_{\regB,\regA_B, \regA_C, \regC_i}]
        &=
        \Tr_{\regB,\regA_B}[(\Pi_B^{0} \otimes I)\rho_{\regB,\regA_B, \regA_C, \regC_i}(\Pi_B^{0} \otimes I)] + \Tr_{\regB,\regA_B}[(\Pi_B^{1} \otimes I)\rho_{\regB,\regA_B, \regA_C, \regC_i}(\Pi_B^{1} \otimes I)]
        \\
        &\coloneqq \Tr_{\regB,\regA_B}[\rho^0_{\regB,\regA_B,\regA_C, \regC_i}] + \Tr_{\regB,\regA_B}[\rho^1_{\regB,\regA_B,\regA_C, \regC_i}]
    \end{align*}
    where we defined
    \[
        \rho^b_{\regB,\regA_B,\regA_C, \regC_i} 
        \coloneqq 
        (\Pi_B^{b} \otimes I)\rho_{\regB,\regA_B, \regA_C, \regC_i}(\Pi_B^{b} \otimes I)
    \]
    This holds because $(\Pi_B^b\otimes I)$ operates nontrivially only on the traced out registers.

    Now consider the final step of the reduction. Let $i^*$ be the index of the real sample $\regC$. The success or failure of the reduction is described by a POVM $(P, Q)$ where the result $P$ corresponds to a sequence of verifier decisions where the first index which it accepts is $i^*$. Thus, the reduction's success probability is
    \begin{align*}
        &\Tr\left[P\left(\Tr_{\regB,\regA_B}\left[\rho^0_{\regB,\regA_B,\regA_C, \regC_{1,\dots,2p}}\right] + \Tr_{\regB,\regA_B}\left[\rho^1_{\regB,\regA_B,\regA_C, \regC_{1,\dots,2p}}\right]\right)\right]\\
        &= \Tr\left[P\Tr_{\regB,\regA_B}\left[\rho^0_{\regB,\regA_B,\regA_C, \regC_{1,\dots,2p}}\right]\right] + \Tr\left[P\Tr_{\regB,\regA_B}\left[\rho^1_{\regB,\regA_B,\regA_C, \regC_{1,\dots,2p}}\right]\right]
        \\
        &\geq \Tr\left[P\Tr_{\regB,\regA_B}\left[\rho^1_{\regB,\regA_B,\regA_C, \regC_{1,\dots,2p}}\right]\right]
    \end{align*}
    Consider testing an index $i<i^*$. For the first such $i$, the probability of accepting (and thus the reduction guessing $i$) when testing $\rho^1_{\regB,\regA_B,\regA_C, \regC_{1,\dots,2p}}$ is
    \begin{align*}
        &\Tr\left[ \left(I\otimes \Pi_C^1 \right) \Tr_{\regB,\regA_B}\left[\rho^1_{\regB,\regA_B,\regA_C, \regC_i}\right] \left(I\otimes \Pi_C^1\right)\right]
        \\
        &= \Tr\left[\Tr_{\regB,\regA_B}\left[\left(\Pi_B^1 \otimes \Pi_C^1\right)\rho_{\regB,\regA_B,\regA_C, \regC_i} \left(\Pi_B^1 \otimes \Pi_C^1\right)\right]\right]
        \\
        &= \negl
    \end{align*}
    by assumption. Thus, it is a gentle measurement and applying it disturbs the state negligibly (\Cref{lem:gentle-measurement}). Inducting on $i< i^*$ shows that the probability of the reduction guessing any $i<i^*$ is negligible and the state when testing $\regC_{i^*}$ is negligibly far from the state prior to any tests. Therefore the probability of $i^*$ being the first accepting index is negligibly far from
    \begin{align*}
        \Tr\left[ \left(I\otimes \Pi_C^1\right) \Tr_{\regB,\regA_B}\left[\rho^1_{\regB,\regA_B,\regA_C, \regC_i}\right] \left(I\otimes \Pi_C^1\right)\right]
        &= \Tr\left[\Tr_{\regB,\regA_B}\left[\left(\Pi_B^1 \otimes \Pi_C^1\right)\rho_{\regB,\regA_B,\regA_C, \regC_i} \left(\Pi_B^1 \otimes \Pi_C^1\right)\right]\right]
        \\
        &= 1/p
    \end{align*}
    Since $1/p > 1/(2p) + \negl$, the reduction contradicts the search-security of $\calD$ and $\calD'$.
    \ifsubmission \qed \fi
\end{proof}
\fi
\section{Proofs of No-Intrusion for Coset States}\label{sec:poni-coset}

In this section, we build a method of non-destructively testing a coset state $\ket{S_{x,z}}$, using only classical communication. Informally, the protocol will act as a proof of knowledge of a vector in $S+x$. 
By combining this with certified deletion, no adversary who previously tried to copy $\ket{S_{x,z}}$ can simultaneously find a vector in $S^\perp + z$.

\begin{construction}[PoNI for Coset States] 
    \label{constr:coset-poni}
    Let $S\subset \bbF_2^\lambda$ be a $\dimS$-dimensional subspace and let $x\in \co(S)$ and $z\in \co(S^\perp)$. Let $T \gets \Subspc(S, \dimT)$ be a random superspace of $S$ and let $x_T = \Can_T(x)$.
    At the beginning of the protocol, the prover holds $\ket{S_{x,z}}$ in register $\calR_S$ and the verifier holds $(T,x_T)$.
    \begin{enumerate}
        \item \textbf{Verifier:} Sample a random $3\secpar/4$-dimensional superspace $T\supset S$.
        \item Perform an oblivious state preparation for $\ket{T}$, where the verifier acts as $\Sender$ and the prover acts as $\Recvr$.
        \[
            ((x',z'), X^{x'}Z^{z'}\ket{T}) \gets \OSP\langle(\Sender(T), \Recvr\rangle
        \]
        At the end of this, the verifier holds $(x',z')$ and the prover holds $\ket{T_{x',z'}}$ in register $\calR_T$ (up to global phase).\footnote{In a 2-round oblivious state preparation for coset states where the final message is sent by the receiver, the final message can be combined with the next step.}
        
        \item \textbf{Prover:} Perform a CNOT operation from $\calR_{S}$ to $\calR_T$ to obtain the state
        \begin{equation}\label{eq:coset-poni-middle}
            \ket{S_{x,z-z'}}^{\calR_S} \otimes \ket{T_{x+x',z'}}^{\calR_T}
        \end{equation}
        Measure register $\calR_T$ in the computational basis to obtain a vector $v\in T + x + x'$. Send $v$ to the verifier. Output the residual state on register $\calR_S$.

        \item \textbf{Verifier:} If  $v \in T+x_T+x'$, output $(\Accept, (0,z'))$. Otherwise output $\bot$.
    \end{enumerate}
\end{construction}

\paragraph{Security Game.}
To define security of the scheme, we consider the following security game $\PoNIMoE_{D_1, D_2}(\adv_1, \adv_P, \adv_H)$ which is parameterized by a pair of distributions $(D_1, D_2)$ and played by a triple of adversaries $(\adv_1, \adv_P, \adv_H)$. $D_1$ is supported on $(S+x,T, R^\perp + z_{R^\perp}, k)$ where $R\subset S\subset T\subset \bbF_2^\secpar$ are subspaces with dimensions $d_S$, $d_T$, and $d_R$, respectively, where $x\in \co(S)$, and where $z_{R^\perp}\in \co(R^\perp)$. The game is played as follows
\begin{enumerate}
    \item Sample $(S + x, T, R^\perp + z_{R^\perp}, k) \gets D_1$. Sample $z \gets (\co(S^\perp) \intersect R^\perp) + z_{R^\perp}$ and let $x_T = \Can_T(x)$.
    \item Sample $(\aux_1, \aux_H) \gets D_2(\ket{S_{x,z}}, S+x, T + x_T, R^\perp + z_{R^\perp}, k)$.
    \item Run $(\calP, \calH) \gets \adv_1(\aux_1)$.
    \item Run $b\gets \PoNI\langle\adv_P(\calP), V(T,x_T)\rangle$ where $b$ is the verifier's decision bit.
    \item Run $v_H \gets \adv_H(\regH, \aux_H)$.
    \item The adversary wins if $b = \Accept$ and $v_H \in S^\perp + z$.
\end{enumerate}
Roughly, the security game captures the probability that $\adv_P$ can pass a PoNI simultaneously with $\adv_H$ having stolen $v_H \in S^\perp + z$. 

\paragraph{Composability.}
The auxiliary input $\aux_1$ controls the self-composability of the protocol and what side information about $\ket{S_{x.z}}$ can be safely released. One can think of it as the residual state of prior $\PoNI$ executions together with any additional side-information about $\ket{S_{x,z}}$, such as a ciphertext $\Enc(S)$. 
During the process of the theft, represented by $\adv_1$, $\adv_1$ may use $\aux_1$ to inform how it steals $v_H$ undetectably. 
Later, $\adv_H$ may separately steal additional information $\aux_H$ -- for example, the decryption key for $\Enc(S)$. The PoNI must be secure even with these two sources of side-information.

Together, $D_1$ and $D_2$ characterize $(\aux_1, \aux_2)$ and the distribution of $\ket{S_{x,z}}$.
Because $D_2$ depends on both $\ket{S_{x,z}}$ and $T+ x_T$, it may include the prover's view in prior $\PoNI$ executions. 
$D_1$ controls the distribution of $\ket{S_{x,z}}$, along with the verification key $T+x_T$. For example, it may be a PRF evaluation using key $k$.

To encode self-composability, we prove two properties of our PoNI. First, $T+x_T$ and $S+x$ are hidden from the prover's view in an execution. Second, the PoNI is secure for any auxiliary information $\aux_1$ which semantically hides $T+x_T$ and $S+x$. 
Putting these together, the PoNI is secure even if $\aux_1$ is the leftover state from the prover's view in a prior execution.

\begin{theorem}[PoNI for Cosets]\label{thm:poni-coset}
    Let $\PoNI$ be the protocol from \Cref{constr:coset-poni}. It satisfies the following properties.
    \begin{itemize}
        \item \textbf{Correctness and State Preservation up to Z Error.} For every $\dimS$-dimensional $S$, every $\dimT$-dimensional $T\supset S$, and every $x,z\in \bbF_2^\secpar$ with $x_T = \Can_T(x)$, at the end of an execution
        \[
            (\rho, (b, (0, z'))) \gets \PoNI\langle P(\ket{S_{x,z}}), V(T,x_T)\rangle
        \]
        where the prover outputs $\rho$ and the verifier outputs $(b, (0,z'))$, it is the case that $b= \Accept$ and $\rho = Z^{z'}\ketbra{S_{x,z}}Z^{z'}$.
    
        \item \textbf{Semantic Security.} Fix any $(S,x,z)$ along with $T\supset S$ and $T'$. Any QPT adversarial prover's view is indistinguishable between $\PoNI\langle\adv_P(\ket{S_{x,z}}, V(T, x_T)\rangle$ and $\PoNI\langle\adv_P(\ket{S_{x,z}}, V(T', 0)\rangle$.
        
        \item \textbf{Search Monogamy of Entanglement.} 
        Let $d_R = \dimR$, $d_S = \dimS$, and $d_T = \dimT$. If there exists a QPT algorithm $\Sim$ such that
        \begin{gather*}
        \left\{
            (\aux_1,\ S,\ x,\ T,\ R) : 
            \begin{array}{c}
                (S+x, T, R^\perp + z_{R^\perp}, k) \gets D_1
                \\
                z \gets (\co(S^\perp)\intersect R^\perp) + z_{R^\perp},\ x_T = \Can_T(x)
                \\
                 (\aux_1, \aux_2) \gets D_2\big(\ket{S_{x,z}}, S+x, T + x_T, R^\perp + z_{R^\perp}, k \big)
            \end{array}
        \right\}
        \\
        \approx_c
        \\
        \left\{
            (\aux_1,\ S,\ x,\ T,\ R) : 
            \begin{array}{c}
                S\gets \Subspc(\bbF_2^\secpar, \dimS)
                \\
                x \gets \co(S),\ z\gets \co(S^\perp)
                \\
                 (\aux_1, \aux_2) \gets \Sim\big(\ket{S_{x,z}}\big)
                 \\
                 T \gets \Supspc(S, \dimT),\ x_T = \Can_T(x)
                 \\
                 R \gets \Subspc(S,\dimR)
            \end{array}
        \right\}
    \end{gather*}
    then for all QPT $(\adv_1, \adv_P)$ and (unbounded) quantum $\adv_H$,
    \[
        \Pr[\mathsf{win} \gets \PoNIMoE_{D_1, D_2}(\adv_1, \adv_P, \adv_H)] = \negl
    \]
        
    \end{itemize}
\end{theorem}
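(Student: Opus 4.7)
The plan is to prove the three claims in turn. For \textbf{correctness and state preservation up to $Z$ error}, I would expand the starting state $\ket{S_{x,z}}^{\calR_S}\otimes\ket{T_{x',z'}}^{\calR_T}$ in the computational basis and apply the CNOT from $\calR_S$ to $\calR_T$. Since $s\in S\subset T$, substituting $t'=t+s$ keeps $t'$ ranging over $T$, and the phase $(-1)^{\langle z',s\rangle}$ that accumulates from commuting $X^s$ past $Z^{z'}$ gets absorbed into the $\calR_S$ superposition. This yields $\ket{S_{x,z+z'}}\otimes\ket{T_{x+x',z'}}$ with $\ket{S_{x,z+z'}}=Z^{z'}\ket{S_{x,z}}$, so measuring $\calR_T$ returns $v\in T+x+x'=T+x_T+x'$ and the verifier accepts, leaving the prover with $Z^{z'}\ket{S_{x,z}}$ as claimed.

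For \textbf{semantic security}, I would reduce to the security of the $\OSP$ subprotocol. The prover's view in an execution of $\PoNI$ is precisely the $\OSP$ transcript---the value $x_T$ is only consulted in the verifier's local check and is never transmitted. Hence any distinguisher between $V(T,x_T)$ and $V(T',0)$ executions directly yields a distinguisher against the sender input of $\OSP$.

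The main work is \textbf{search monogamy of entanglement}, which I would prove by reducing to certified deletion (\Cref{thm:search-cd}) through an extractor that uses an alternate OSP challenge. Using the simulator hypothesis, first switch to the simulated distribution where $(\aux_1,\aux_H)$ is independent of $(T,R)$. Then design an extractor that, playing the verifier role, samples $T_R\gets\Subspc(R,\dimT)$, runs $\OSP$ with input $T_R$, receives the prover's vector $v$, and outputs the canonical element of the affine intersection $\big((v-x')+(T_R+S)\big)\cap(x_T+T)$. Because $R\subset S\subset T$ and generically $T_R\cap T = R$ with overwhelming probability (a Gaussian-binomial count driven by the gap $\dimT-\dimS$), this intersection simplifies to $x+S$, so the extractor's output lies in $S+x$ whenever its check $v-x'\in(T_R+S)+x$ passes. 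To show this check passes simultaneously with $\adv_H$ finding $v_H\in S^\perp+z$, I would chain four hybrids: (i) the original game with check $v\in T+x_T+x'$; (ii) a relaxed check $v-x'\in W+x$ for a random $\dimW$-dimensional superspace $W\supset T$, trivially implied by (i); (iii) replace the OSP input $T$ by an independent $T_W\subset W$ of dimension $\dimT$, which is indistinguishable to $\adv_P$ by OSP security and whose simultaneous success with $\adv_H$ is preserved by \Cref{lem:simult-search}; (iv) resample $(W,T_W)$ as $(T_R+S,T_R)$, which is statistically close to the distribution in (iii). The final hybrid produces an element of $S+x$ simultaneously with $\adv_H$'s $v_H\in S^\perp+z$, contradicting \Cref{thm:search-cd}.

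The hard part will be step (iv): showing that $(W,T_W)$ sampled as ``$W\supset T$ random of dim $\dimW$, $T_W\subset W$ random of dim $\dimT$'' is statistically close to $(T_R+S,T_R)$ with $T_R\gets\Subspc(R,\dimT)$, uniformly over the choice of $S$. This amounts to a careful Gaussian-binomial double count: $T_R+S$ must be shown to be near-uniform over $\dimW$-dimensional superspaces of $S$ conditioned on the generic event $\dim(T_R\cap S)=\dimR$, while $T_W$ conditioned on $T_W+S=W$ must be shown to be near-uniform over $\dimT$-dimensional superspaces of $R$ in the same event space. The dimension choices in the construction ($\dimS$, $\dimT$, $\dimR$, $\dimW$) are calibrated precisely so that the relevant Gaussian-binomial ratios are $1-\negl$, which is what allows the two sampling procedures to be exchanged and the extractor's output to land in $S+x$ rather than a spurious coset.
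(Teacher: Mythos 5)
Your overall architecture matches the paper's proof closely: the CNOT phase-kickback calculation for correctness, semantic security directly from $\OSP$ security, and for search monogamy a reduction to \Cref{thm:search-cd} via an extractor that runs the protocol on $T_R \supset R$ and combines $x_T \in T+x$ with $v - x_{\OSP} \in (T_R+S)+x$ (your ``affine intersection'' is equivalent to the coset remainder algorithm, since $(T_R+S)\cap T = S$ once $T_R\cap T = R$), with the hybrid chain $T \rightarrow W \rightarrow T_W \rightarrow (T_R+S,T_R)$ and \Cref{lem:simult-search} handling the asymmetry against the unbounded $\adv_H$. However, there is a genuine gap at your step (iii): you justify replacing the verifier's input $T$ by $T_W$ as ``indistinguishable to $\adv_P$ by OSP security,'' and this is exactly the argument that does not work here. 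To evaluate the relaxed predicate $v - x_{\OSP} \in W + x$, the reduction (the party $C$ in \Cref{lem:simult-search}) must know $x_{\OSP}$, i.e.\ it must see the OSP \emph{sender's} view, which voids any receiver-side security guarantee of the $\OSP$; moreover, in the simultaneous-search framing the challenge register literally contains the classical description of $T$ or $T_W$, so what must be argued is indistinguishability of $T$ versus $T_W$ given $\aux_1$, the coset state, and $(W,x_W)$ --- a statement about the distribution of $T$ itself, not about what the prover sees through the $\OSP$. The paper closes this gap statistically: for fixed $(W,x_W)$, averaging over $x \in W+x_W$ and $z$ makes the mixed coset state independent of $S$, so a random $T\subset W$ containing $S$ and an independent random $T_W \subset W$ are identically distributed (\Cref{claim:PoNI-coset-extractor-indist-with-coset-state}), with the simulator precondition used to first strip the remaining dependence of $\aux_1$ on $(S,x,T)$.

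A secondary issue is your opening move of ``first switch to the simulated distribution'' globally. Since $\adv_H$ is unbounded and the stated indistinguishability covers only $(\aux_1,S,x,T,R)$ (not $\aux_H$), a global computational switch cannot be applied to the joint winning probability. The paper instead confines every use of the simulator to the efficiently checkable indistinguishability statements that feed into \Cref{lem:simult-search}, re-deriving the decoupling inside each of the two hybrid steps (this is also why \Cref{lem:simult-search} is invoked a second time for your step (iv), even though the $(W,T_W)\approx_s(T_R+S,T_R)$ closeness of \Cref{claim:poni-coset-extractor-1sample} is statistical for fixed $S$: in the real game $W$ is sampled from $T$, which is correlated with $\aux_1$, so the swap is only computational relative to the adversary's view). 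Your counting plan for step (iv) itself is sound and matches the paper's, but the proof needs the statistical swap of step (iii) to be re-founded on the averaged-coset-state argument rather than on $\OSP$ security, and the simulator uses need to be localized as above.
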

\begin{proof}
    Semantic security follows immediately from the security of $\OSP$.

    Correctness and state preservation follow from observation if \cref{eq:coset-poni-middle} holds. To see that it holds, observe that the result of the CNOT operation is
    \begin{align*}
        &\sum_{s\in S} (-1)^{s\cdot z} \ket{s+x} \otimes \sum_{t\in T} (-1)^{t\cdot z'} \ket{t+x'+s+x}
        \\
        &= \sum_{s\in S} (-1)^{s\cdot z} \ket{s+x} \otimes \sum_{t'\in T} (-1)^{(t' - s)\cdot z'} \ket{t'+x'+x}
        \\
        &= \ket{S_{x,z-z'}} \otimes \ket{T_{x+x',z'}}
    \end{align*}
    by doing a change of variables $t' \coloneqq t+s$ using the fact that $S\subset T$.

    We prove search monogamy of entanglement in \Cref{subsec:poni-coset:security}.
    \ifsubmission \qed \fi
\end{proof}

\paragraph{Overview of the Proof of Search MoE.}
The remainder of this section is dedicated to proving search monogamy of entanglement. As an overview, we will try to reduce to the coset certified deletion theorem by showing how to extract a vector $v_1 \in S+x$ from any $\adv_P$ which gives an accepting PoNI with noticeable probability. The certified deletion theorem gives the extractor the following information for free: a subspace $R\subset S$, a superspace $T\supset S$, and a vector $x_T \in T + x$.

Given these, it needs to find a vector in $S+ x$ by just running $\PoNI$ executions with $\adv_P$. Moreover, it must do so \emph{simultaneously} with $\adv_H$ succeeding in finding a $v_H \in S^\perp + z$ in order to contradict certified deletion.

\Cref{subsec:coset-remainder} builds a tool for extracting a vector from $S+x$ using the ``free'' vector $x_T \in T + x$ and a vector from some other $(T' + S) + x$ where $(T\intersect T') \subset S$.
This reduces the extractor's task to just finding a vector from $(T' + S) + x$.

An honest execution of $\PoNI$ using $T$ results in the verifier learning a vector in $T+x$. So, by using a different $T'$, we could hope to find the other required vector. However, it is not simple to find an honest $T'\supset S$, so the extractor will have to resort to dishonest behavior. Unfortunately, $\adv_P$ does not have to answer a dishonest $\PoNI$ execution honestly -- even if the dishonest $\PoNI$ looked indistinguishable, they could implicitly abort, since the decision bit is hidden from their view.

\Cref{subsec:extractor} describes how the extractor can sample an appropriate $T'$ using the ``free'' information -- specifically the subspace $R$. Then, it builds several tools useful for arguing that $T'$ is indistinguishable from $T$, even given the other side information available. Thus, we can argue that $\adv_P$ would answer a $\PoNI$ using $T'$ \emph{without} relying on any security properties of $\PoNI$. Since the decision bit is accessible in this case, it cannot change between using $T$ and $T'$.

Finally, we combine everything to prove security in \Cref{subsec:poni-coset:security}. This makes use of the extractor built up in \Cref{subsec:coset-remainder,subsec:extractor}, as well as the simultaneous search lemma from \Cref{sec:simult-search}.

\subsection{Coset Remainder Lemma}\label{subsec:coset-remainder}

\begin{lemma}\label{lem:coset-remainder}
    Let $T_1$ and $T_2$ be subspaces of $\bbF^n$ and let $S\coloneqq T_1 \intersect T_2$.
    There exists a polynomial-time algorithm which for all inputs $(T_1, T_2, x_1, x_2)$ satisfying
    \begin{align*}
        x_1 &\in T_1 + x
        \\
        x_2 &\in T_2 + x
    \end{align*}
    for some $x\in \bbF^n$, the algorithm outputs a description of $S+x$.
\end{lemma}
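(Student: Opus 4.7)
The plan is to reduce the task to two standard linear-algebraic subroutines: computing intersections of subspaces, and solving a linear system of the form $t_1 + t_2 = y$ with $t_1 \in T_1$, $t_2 \in T_2$. Both are polynomial-time given bases for $T_1$ and $T_2$, so the whole algorithm will be efficient.

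The key observation driving the algorithm is that if $x_1 \in T_1 + x$ and $x_2 \in T_2 + x$, then
\[
    x_1 - x_2 = (x_1 - x) - (x_2 - x) \in T_1 + T_2.
\]
Therefore one can efficiently find $t_1 \in T_1$ and $t_2 \in T_2$ with $t_1 - t_2 = x_1 - x_2$, by solving the corresponding linear system built from bases of $T_1$ and $T_2$. Set $y := x_1 - t_1 = x_2 - t_2$. Then
\[
    y - x = (x_1 - x) - t_1 \in T_1 \text{ and } y - x = (x_2 - x) - t_2 \in T_2,
\]
so $y - x \in T_1 \cap T_2 = S$, i.e.\ $y \in S + x$. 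Thus $y$ is a valid coset representative.

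The algorithm is then: on input $(T_1, T_2, x_1, x_2)$, (i) compute a basis of $S = T_1 \cap T_2$ by standard linear algebra (e.g.\ via the identity $S = \ker([B_1 \mid -B_2])$ projected appropriately, where $B_1, B_2$ are basis matrices); (ii) solve the linear system $B_1 a - B_2 b = x_1 - x_2$ over the coefficient vectors $a, b$, which is guaranteed feasible by the observation above; (iii) set $t_1 := B_1 a$ and output the description $(S, \Can_S(x_1 - t_1))$.

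There is no real obstacle here — the only content is the algebraic identity $x_1 - x_2 \in T_1 + T_2$ together with the fact that the constructed $y$ lies in both affine subspaces $T_1 + x$ and $T_2 + x$, hence in their intersection $S + x$. Correctness is immediate from the derivation, and all steps reduce to Gaussian elimination over $\bbF$, which is polynomial time.
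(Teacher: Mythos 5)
Your proof is correct, and it takes a slightly different and in fact more streamlined route than the paper. The paper fixes complementary subspaces $C_1, C_2$ of $S$ inside $T_1, T_2$, decomposes $x_1 - x_2 = (s_1 - s_2) + (c_1 - c_2)$, and then relies on $C_1 \intersect C_2 = \{0\}$ (and the complementarity of $S$ with $\Span(C_1,C_2)$ inside $\Span(T_1,T_2)$) to \emph{uniquely} recover the component $c_1$, after which it outputs $x_1 - c_1 = s_1 + x$. You instead observe that uniqueness is irrelevant: since $x_1 - x_2 \in T_1 + T_2$, one can take \emph{any} solution $t_1 \in T_1$, $t_2 \in T_2$ of $t_1 - t_2 = x_1 - x_2$ (found by Gaussian elimination on $[B_1 \mid -B_2]$), and the single point $y = x_1 - t_1 = x_2 - t_2$ automatically satisfies $y - x \in T_1$ and $y - x \in T_2$, hence $y \in S + x$. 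This removes the need for the complementary-subspace bookkeeping and the uniqueness-of-decomposition argument, at no cost: both versions reduce to standard linear algebra over $\bbF$ and output $(S, \Can_S(y))$, and your algorithm satisfies the lemma exactly as stated, so the downstream use in \Cref{coro:coset-remainder} is unaffected. The only point worth making explicit in a final write-up is the one-line justification that the system $B_1 a - B_2 b = x_1 - x_2$ is feasible, which you already note follows from $x_1 - x = u_1 \in T_1$ and $x_2 - x = u_2 \in T_2$ giving $x_1 - x_2 = u_1 - u_2 \in T_1 + T_2$.
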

\ifsubmission
    Due to page limits, we defer the proof to \Cref{app:coset-remainder}.
\else
    \begin{proof}
    Since $x_1 \in T_1 + x$ and $x_2 \in T_2 + x$, there exist $t_1 \in T_1$ and $t_2 \in T_2$ such that
    \begin{align*}
        x_1 &= t_1 + x
        \\
        x_2 &= t_2 + x
    \end{align*}
    Let $C_b$ be a complementary subspace to $S$ within $T_b$, i.e. $T_b = \Span(S, C_b)$ and $S\intersect C_b = \{\vec{0}\}$. Then we can uniquely decompose
    \begin{align*}
        t_1 &= s_1 + c_1
        \\
        t_2 &= s_2 + c_2
    \end{align*}
    where $s_1, s_2 \in S$, where $c_1 \in C_1$, and $c_2 \in C_2$. Using this decomposition,
    \[
        x_1 - x_2 = t_1 - t_2 = (s_1 - s_2) + (c_1 - c_2)
    \]
    Note that $c_1 - c_2 \in \Span(C_1, C_2)$. Since $S$ and $\Span(C_1, C_2)$ are complementary with respect to $\Span(T_1, T_2)$, it is efficient to find $(s_1 - s_2)$ and $(c_1 - c_2)$ via basis decomposition. Furthermore, $C_1 \intersect C_2 = \{0\}$ since $T_1\intersect T_2 = S$ and $S\intersect C_1 = \{0\}$. Therefore we can also efficiently decompose $(c_1 - c_2)$ into $c_1$ and $-c_2$. Finally, observe that
    \begin{align*}
        x_1 - c_1 &= s_1 + x \in S+x
        \\
        x_2- c_2 &= s_2 + x \in S+x
    \end{align*}

    In summary, the algorithm does the following:
    \begin{enumerate}
        \item Compute $S = T_1 \intersect T_2$.
        \item Pick $C_b$ such that $C_b$ and $S$ are complementary with respect to $T_b$, for $b \in \{1,2\}$. This can be done by starting with a basis for $S$, then adding random vectors from $T_b$ subject to linear independence and taking their span to be $C_b$.
        \item Compute $x_1 - x_2$, then decompose it into a summation of $s\in S$, $c_1 \in C_1$, and $-c_2 \in C_2$ using basis decomposition.
        \item Output a description of $S+(x_1 -c_1)$.
    \end{enumerate}
    \ifsubmission \qed \fi
\end{proof}
\fi

\begin{corollary} \label{coro:coset-remainder}
    Let $S$ be a subspace and let $T_1\supset S$ be a superspace of $S$. Let $T_2$ be a subspace such that $T_1 \intersect T_1 \subseteq S$. Let $x\in \bbF_2^n$, and let
    \begin{align*}
        x_1 &\in T_1 + x
        \\
        x_2 &\in T_2 + S + x
    \end{align*}
    On input $(x_1, x_2, T_1, T_2)$, the algorithm from \cref{lem:coset-remainder} outputs an element of $S+x$.
\end{corollary}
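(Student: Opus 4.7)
The plan is to reduce the corollary to a direct application of \Cref{lem:coset-remainder} by replacing $T_2$ with the slightly larger subspace $T_2' \coloneqq T_2 + S$. The key observation is that $x_2 \in T_2 + S + x$ is exactly the statement $x_2 \in T_2' + x$. So if I feed $(T_1, T_2', x_1, x_2)$ into the algorithm from \Cref{lem:coset-remainder}, the lemma guarantees that the output is a description of the coset $(T_1 \intersect T_2') + x$. Extracting any element of this coset description then gives a concrete vector, so the only nontrivial thing left to verify is that $T_1 \intersect T_2' = S$, which would make that element lie in $S + x$.

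For the equality $T_1 \intersect (T_2 + S) = S$, I would argue both inclusions separately. The inclusion $S \subseteq T_1 \intersect (T_2 + S)$ is immediate from the hypothesis $S \subset T_1$ together with the trivial fact $S \subset T_2 + S$. For the reverse inclusion, I would take an arbitrary $v \in T_1 \intersect (T_2 + S)$ and write $v = t_2 + s$ with $t_2 \in T_2$ and $s \in S$. Since $S \subset T_1$ and $v \in T_1$, also $t_2 = v - s \in T_1$, so $t_2 \in T_1 \intersect T_2 \subseteq S$ by hypothesis. Therefore $v = t_2 + s \in S$, as needed.

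There is no real obstacle here; the content is entirely routine once one notices that the extra $+ S$ can be absorbed into the second subspace without breaking the intersection condition. The only place one needs to be mildly careful is to use the hypothesis $T_1 \intersect T_2 \subseteq S$ precisely where it is needed, namely in the nontrivial direction of the intersection equality. With that in hand, the corollary follows by a one-line invocation of \Cref{lem:coset-remainder} on $(T_1, T_2 + S, x_1, x_2)$.
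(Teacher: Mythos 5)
Your argument is internally consistent, but it proves a different statement than the one in the corollary, and the difference is exactly the point of the corollary. The statement specifies that the algorithm of \Cref{lem:coset-remainder} is run on the input $(x_1, x_2, T_1, T_2)$, i.e.\ with $T_2$ itself; you instead run it on $(x_1, x_2, T_1, T_2 + S)$. In the application (the extractor of \Cref{subsec:extractor} and the security reduction), the caller knows $T_1 = T$, $x_T$, $T_2 = T_R$ and the vector $v_P - x_{\OSP} \in (T_R + S) + x$, but does \emph{not} know $S$ — that is precisely what it is trying to extract a coset element of — so it cannot form $T_2 + S$. The corollary exists to show that running the algorithm on $T_2$ alone already works (cf.\ the overview remark that ``a little algebra shows [this] is enough \ldots even just using $T'$ instead of the full $T' + S$''). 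Your reduction to the lemma therefore does not establish the claim as stated, and would not support the way the corollary is invoked later.

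The missing idea is to absorb the extra $S$ into the \emph{offset} rather than into the subspace: write $x_2 = t_2 + s + x$ with $t_2 \in T_2$, $s \in S$, so that $x_2 \in T_2 + (s + x)$; since $s \in S \subset T_1$, also $x_1 \in T_1 + (s+x)$. Now \Cref{lem:coset-remainder} applies verbatim to the input $(T_1, T_2, x_1, x_2)$ with the common offset $s + x$, and outputs an element of $(T_1 \intersect T_2) + (s + x)$, which lies in $S + x$ because $T_1 \intersect T_2 \subseteq S$ and $s \in S$. (Note the output is then only an element of $S+x$, not a description of the full coset $S+x$, which is all the corollary claims.) Your computation $T_1 \intersect (T_2 + S) = S$ is correct and uses the hypothesis in the right place, but it is not needed once the offset trick is used, and relying on it forces knowledge of $S$ that the algorithm's caller does not have.
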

\begin{proof}
    Since $x_2 \in T_2 + S + x$, there exist $t_2 \in T_2$ and $s \in S$ such that 
    \[
        x_2 = t_2 + s + x
    \]
    In other words, $x_2 \in T_2 + (s + x)$.
    Furthermore, since $s \in S \subset T_1$, we have $T_1 + x = T_1 + (s+x)$, so $x_1 \in T_1 + (s + x)$. Therefore \cref{lem:coset-remainder} guarantees that the algorithm outputs an element of $(T_1 \intersect T_2) + (s + x)$. Since $s \in S$ and $(T_1 \intersect T_2) \subset S$, it must be the case that $(T_1 \intersect T_2) + (s + x) \subset S + x$.
    \ifsubmission \qed \fi
\end{proof}

\subsection{Extractor}\label{subsec:extractor}

Our ultimate goal is to reduce to certified deletion by building an extractor that uses $R\subset S$, $T\supset S$, and $x_T \in S + x$ to extract a vector $v_1 \in S+x$ from $\adv_P$. In this subsection, we describe the extractor and build several technical claims which will be useful for analyzing its behavior later.

\paragraph{Extractor Description.} The extractor takes as input $(R, T, x_T)$, along with an adversary $\adv_P(\regP)$, then does the following.
\begin{enumerate}
    \item Sample a subspace $T_R \gets \Supspc(R, \dimT)$,
    
    \item Run a $\PoNI$ with $\adv_P(\regP)$ using $T_R$, except for the decision step. Explicitly, execute an oblivious state preparation $\OSP\langle\Sender(T_R ), \adv_P(\regP)\rangle$ while acting as the sender. 
    Let $x_{\OSP}$ be the $X$ error incurred during the $\OSP$.\footnote{Knowledge of $x_{\OSP}$ requires seeing the sender's view of the $\OSP$, which removes any security properties it might have.} Receive a vector $v$ from $\adv_P$. which is masked by some $x_{\OSP}$ that is recovered from the $\OSP$ execution inside the $\PoNI$.

    \item Run the coset remainder algorithm (\cref{lem:coset-remainder}) on input $(T, T_R, x_T, v - x_{\OSP})$. Output the result.
\end{enumerate}

The main challenge is to show that $\adv_P$ will actually answer a $\PoNI$ using $T_R$ with a valid response. Since $T_R$ is not distributed according to the honest distribution, it is not clear that an adversary who directly saw $T'$ would have to answer. Furthermore, decoding the result of the PoNI requires the equivalent of decrypting a ciphertext that depends on $T_R$. Thus, we cannot hope to hide $T_R$ from $\adv_P$ when reasoning about whether their decision bit is valid or not. \Cref{claim:PoNI-coset-extractor-indist-with-coset-state} shows a scenario in which $\adv_P$ cannot distinguish between a correctly distributed $T$ and an incorrectly distributed one $T_W$. Then, \Cref{claim:poni-coset-extractor-1sample} shows that the extractor actually produces a single sample from that indistinguishable distribution, up to negligible error.
% \paragraph{Extractor Analysis.}
% \justin{Commentary.}

\begin{lemma}\label{claim:PoNI-coset-extractor-indist-with-coset-state}
    Fix any dimensions $d_S < d_T < d_W \in [0,n]$, any $d_W$-dimensional subspace $W\subset \bbF_2^n$, and any $x_W \in \co(W)$. Then
    \[
        \left\{(W, T, \ket{S_{x,z}}): \begin{array}{c}
            T\gets \Subspc(W,d_T) \\
             S\gets \Subspc(T, d_S) \\
             x \gets W + x_W,\ z \gets \bbF_2^n
        \end{array}\right\}
        =
        \left\{(W, T', \ket{S_{x,z}}): \begin{array}{c}
             S\gets \Subspc(W, d_S) \\
             x \gets W + x_W,\ z \gets \bbF_2^n \\
             T'\gets \Subspc(W,d_T)
        \end{array}\right\}
    \]
\end{lemma}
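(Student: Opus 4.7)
The plan is to show that both distributions can be rewritten as a product $(\text{uniform over } \Subspc(W, d_T)) \otimes \rho_{W, x_W}$, where $\rho_{W, x_W}$ is a mixed state depending only on $W$ and $x_W$. The marginal on the second (subspace) register is uniform over $\Subspc(W, d_T)$ in both experiments, since $T$ is sampled that way directly on the RHS and is the marginal of the top-down process $T \gets \Subspc(W, d_T)$, $S \gets \Subspc(T, d_S)$ on the LHS. The crux is to show that the conditional distribution of the coset state register (averaged over the remaining randomness) is the same constant mixed state $\rho_{W, x_W}$, regardless of the value of $T$ or $T'$.

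The key computation is to evaluate
\[
    \rho_S \;\coloneqq\; \mathbb{E}_{x \gets W + x_W,\, z \gets \bbF_2^n}\!\bigl[\ketbra{S_{x,z}}\bigr]
\]
for an arbitrary $S \subset W$. First, averaging over a uniformly random $z \in \bbF_2^n$ fully dephases $\ket{S_{x,z}}$ in the computational basis, collapsing it to $\frac{1}{|S|} \sum_{s \in S} \ketbra{s + x}$; this uses that $\mathbb{E}_z[(-1)^{z \cdot (s - s')}] = \delta_{s,s'}$. Then, averaging $x$ over the coset $W + x_W$ and using $S \subset W$ (so $s + (W + x_W) = W + x_W$ for every $s \in S$), the result simplifies to
\[
    \rho_S \;=\; \frac{1}{|W|} \sum_{y \in W + x_W} \ketbra{y},
\]
which is independent of $S$ and depends only on $W$ and $x_W$. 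I would call this common state $\rho_{W, x_W}$.

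With this in hand, on the LHS, conditioning on any fixed $T$, the inner sampling of $S \gets \Subspc(T, d_S)$ followed by $(x, z)$ produces a convex combination of states $\rho_S$, all equal to $\rho_{W, x_W}$; the same is true on the RHS, where $S \gets \Subspc(W, d_S)$ is independent of $T'$. Hence both distributions equal
\[
    \bigl(\text{Unif}\,\Subspc(W, d_T)\bigr) \otimes \rho_{W, x_W},
\]
proving the claimed equality. There is no real obstacle here beyond carefully handling the twirl over $(x, z)$; the one point to keep in mind is that the computation only requires $S \subset W$, which is automatic in both distributions (on the LHS via $S \subset T \subset W$, on the RHS by construction).
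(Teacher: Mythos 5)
Your proposal is correct and follows essentially the same route as the paper: both arguments compute the average of $\ketbra{S_{x,z}}$ over $x \gets W + x_W$ and $z \gets \bbF_2^n$ for a fixed $S \subset W$, observe that it equals the $S$-independent state $\rho_{W,x_W} \propto \sum_{y \in W + x_W} \ketbra{y}$, and conclude that both distributions are the uniform $d_T$-dimensional subspace of $W$ tensored with $\rho_{W,x_W}$. Your write-up is slightly more explicit about the $z$-twirl dephasing step and the uniformity of the marginal on the subspace register, but the substance is identical to the paper's proof.
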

\begin{proof}
    For any $S$, the mixed state over possible $\ket{S_{x,z}}$ is
    \begin{align*}
        \propto \sum_{\substack{x\in W + x_W \\ z\in \bbF_2^n}} \ketbra{S_{x,z}}
        &= \sum_{\substack{x\in W + x_W \\ z\in \bbF_2^n}} Z^z \ketbra{S+x} Z^z
        \\
        &\propto \sum_{x\in W + x_W} \sum_{x' \in S + x} \ketbra{x'}
        \\
        &\propto \sum_{x\in W + x_W} \ketbra{x'}
        % &= \sum_{x\in W + x_W } \sum_{x' \in S + x} \ket{s+x}\right) Z^z
    \end{align*}
    In other words, it is independent of $S$. Therefore the two distributions are both
    \[
        \left\{\left(W, T, \rho_{W,x_W}  \right): T\gets \Subspc(W,d_T) \right\}
    \]
    where $\rho_{W,x_W} \propto \sum_{x\in W + x_W} \ketbra{x'}$.
    \ifsubmission \qed \fi
\end{proof}

\begin{lemma}\label{claim:poni-coset-extractor-1sample}
    Fix a $d_S$-dimensional subspace $S\subset \bbF_2^\secpar$.
    Let $d_T$ and $d_R < \min(d_S, d_T)$ be positive integers and define $d_W = d_T + d_S - d_R$. 
    For any $d_S$, $d_T$, $d_R$ such that $\secpar - d_W = \omega(\log(\secpar))$ and $d_R = \omega(\log(\secpar))$,
    the following two distributions are statistically close:
    \[
        \left\{(W, T_W): \begin{array}{c}
            W \gets \Supspc(S, d_W) \\
            T_W\gets \Subspc(W, d_T)
        \end{array}\right\}
        \approx_{s}
        \left\{(T_R+S, T_R): 
        \begin{array}{c}
             R\gets \Subspc(S, d_R) \\
             T_R \gets \Supspc(R, d_T)
        \end{array}\right\}
    \]
\end{lemma}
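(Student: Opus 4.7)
The plan is to show that both distributions concentrate on a common ``good'' set $\mathcal{G}$ of pairs $(W^*, T^*)$ where $W^* \supset S$ has dimension $d_W$, $T^* \subset W^*$ has dimension $d_T$, and $T^* + S = W^*$ (equivalently, $\dim(T^* \intersect S) = d_R$), and then to compare the uniform restrictions of the two distributions to $\mathcal{G}$. The key observation is that each good pair has a unique preimage under each sampler: on the LHS, $W = W^*$ and $T_W = T^*$, while on the RHS, $R = T^* \intersect S$ and $T_R = T^*$. Consequently the LHS assigns probability $p_L = 1/\!\left(\binom{\secpar-d_S}{d_W-d_S}_2 \binom{d_W}{d_T}_2\right)$ and the RHS assigns probability $p_R = 1/\!\left(\binom{d_S}{d_R}_2 \binom{\secpar-d_R}{d_T-d_R}_2\right)$ to each element of $\mathcal{G}$.

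The second step is to bound the probability that each side falls outside $\mathcal{G}$. For the RHS, I would pass to the quotient $\bbF_2^\secpar/R$, where $T_R/R$ is uniform among $(d_T-d_R)$-dimensional subspaces and the image of $S/R$ is a fixed $(d_S-d_R)$-dimensional subspace. The bad event ``$T_R \intersect S \supsetneq R$'' requires $T_R/R$ to contain one of the at most $2^{d_S - d_R}$ nonzero vectors of $S/R$. Using the ratio identity $\binom{\secpar-d_R-1}{d_T-d_R-1}_2 / \binom{\secpar-d_R}{d_T-d_R}_2 = (2^{d_T-d_R}-1)/(2^{\secpar-d_R}-1)$ from the preliminaries and a union bound, the bad probability is at most $2^{d_S + d_T - d_R - \secpar + O(1)} = 2^{d_W - \secpar + O(1)} = \negl$. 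For the LHS, the bad event ``$T_W + S \subsetneq W$'' is equivalent to $T_W$ being contained in one of the $\leq 2^{d_T - d_R}$ hyperplanes of $W$ that contain $S$; each such hyperplane contains only a $(2^{d_W-d_T}-1)/(2^{d_W}-1) \leq 2^{-d_T + O(1)}$ fraction of the $d_T$-dimensional subspaces of $W$, giving total bad probability at most $2^{-d_R + O(1)} = \negl$.

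Finally, I would bound the statistical distance $\tfrac{1}{2}\sum|D_L - D_R|$ by splitting across $\mathcal{G}$ and its complement. On $\mathcal{G}$, the contribution is $\tfrac{1}{2}|\mathcal{G}|\cdot|p_L - p_R| = \tfrac{1}{2}|\mathrm{LHS}(\mathcal{G}) - \mathrm{RHS}(\mathcal{G})|$, which is negligible because both masses lie in $[1-\negl, 1]$; the out-of-$\mathcal{G}$ contribution is at most the sum of the two bad probabilities, also negligible. The main subtlety I anticipate is that $p_L$ and $p_R$ arise from distinct combinatorial expressions, and there is no obvious direct identity forcing $p_L = p_R$. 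The argument sidesteps this issue entirely: because both $|\mathcal{G}|\,p_L$ and $|\mathcal{G}|\,p_R$ equal $1$ up to negligible error, they must also be close to each other, which is all that the statistical distance bound requires.
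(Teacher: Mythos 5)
Your proposal is correct, and it takes a genuinely different route from the paper. The paper proves the lemma by introducing the hybrid distribution $(T_R+S,\,T_W)$ and arguing two statistical closeness steps: first that $T_R+S$ is distributed like a uniform $W \gets \Supspc(S,d_W)$ (via an iterative "adjoin a random vector" sampling argument, valid because all intermediate subspaces occupy a negligible fraction of $\bbF_2^\secpar$), and second that, conditioned on the observed $W = T_R+S$, the conditional laws of $T_R$ and $T_W$ coincide up to the bad event $T_W + S \subsetneq W$, which is bounded by a union bound over possible intersections $T_W \intersect S$ of dimension $k > d_R$ using the Gaussian binomial estimates. Your argument instead couples the two distributions directly on the common good set $\mathcal{G} = \{(W^*,T^*): S \subset W^*,\ T^* \subset W^*,\ T^*+S = W^*\}$: each good pair has a unique preimage under either sampler, so both distributions are \emph{uniform} on $\mathcal{G}$ with per-element masses $p_L$ and $p_R$ given by explicit products of Gaussian binomial coefficients, and the statistical distance reduces to the two bad-event probabilities (your quotient-space bound $2^{d_W-\secpar+O(1)}$ for $T_R \intersect S \supsetneq R$ and your hyperplane bound $2^{-d_R+O(1)}$ for $T_W + S \subsetneq W$, both of which check out against the identities in the preliminaries) plus the observation that $|\mathcal{G}|\,p_L$ and $|\mathcal{G}|\,p_R$ both lie in $[1-\negl,1]$ and hence are close without needing $p_L = p_R$. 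What your approach buys is rigor and locality: it avoids the hybrid distribution and the paper's somewhat informal conditional-symmetry claims (e.g.\ that conditioned on spanning, $T_W \intersect S$ is a uniform $d_R$-dimensional subspace of $S$, so $T_W$ matches the law of $T_R$), replacing them with exact counting; the paper's approach, on the other hand, exposes the intermediate object $(T_R+S, T_W)$, which mirrors how the lemma is deployed in the security proof, and only needs one of the two bad-event bounds explicitly. Both arguments use the same parameter conditions ($\secpar - d_W = \omega(\log\secpar)$ for the ambient-space slack, $d_R = \omega(\log\secpar)$ for the spanning event), so neither is quantitatively stronger in any way that matters here.
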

\begin{proof}
    Consider the hybrid distribution
    \[
        \left\{(T_R+S, T_W): 
        \begin{array}{c}
             R\gets \Subspc(S, d_R) \\
             T_R \gets \Supspc(R, d_T) \\
             T_W \gets \Subspc(T_R + S, d_T)
        \end{array}\right\}
    \]
    We show that this distribution is statistically close to both distributions, beginning with the left-hand distribution $(W,  T_W)$.
    Once $R$ is fixed, $T_R$ can be sampled iteratively by defining $T'_{d_R} =R$, then sampling $T'_i$ from $i = d_R$ to $d_T$ by sampling $v_{i+1} \gets \bbF_2^\secpar\backslash T'_{d_R}$ and setting $T'_{i+1} = T'_i + \Span(v_{i+1})$. In this procedure, $T'_{d_T} = T_R$.
    Since $\secpar-d_T \geq \secpar-d_W = \omega(\log(\secpar))$, every $T'_i$ contains a negligible fraction of $\bbF_2^\secpar$. Therefore this distribution is statistically close to if we had sampled each $v_i \gets \bbF_2^\secpar$. 
     Furthermore, $T'_i + S$ also contains a negligible fraction of $\bbF_2^\secpar$ since $\secpar - (d_S + d_T - d_R) = \secpar - d_W = \omega(\log(\secpar))$. Therefore the distribution of $T'_i + S$ is statistically close to sampling $v_i \gets \bbF_2^\secpar\backslash (S + T_i)$ and adjoining $v_i$ to $S$. This is precisely a method of sampling a uniform $W \gets \Supspc(S, d_W)$, so
     \[
        \left\{(W, T_W): \begin{array}{c}
            W \gets \Supspc(S, d_W) \\
            T_W\gets \Subspc(W, d_T)
        \end{array}\right\}
        \approx_s
        \left\{(T_R+S, T_W): 
        \begin{array}{c}
             R\gets \Subspc(S, d_R) \\
             T_R \gets \Supspc(R, d_T) \\
             T_W \gets \Subspc(T_R + S, d_T)
        \end{array}\right\}
     \]

     Next, we show that the distributions of $T_R$ and $T_W$ are statistically close conditioned on $W = T_R + S$. Since $T_R$ is uniformly distributed over superspaces of $R$, its distribution conditioned on $T_R \subset W$ for any $W$ (in this case, $W = T_R + S$) is uniform over the intersection of $T_R$'s original support and the support of the event $T_R \subset W$. In other words, $T_R$ is a random subspace of $W$ subject to $T_R\supset R$ and $W = T_R + S$.

     Now consider the distribution of $T_W$ conditioned on $W$. It suffices to show that with overwhelming probability, $W = T_W + S$. Conditioned on this occurring, $T_W \intersect S$ is a uniformly random subspace of $S$ with dimension $d_T + d_S - d_W = d_R$, so $T_W$ is a uniform superspace of a random $R\subset S$ subject to $W = T_W + S$, the same distribution as $T_R$.

     Consider the event that $W \neq T_W + S$. This event can be decomposed into the (disjoint) events $\dim(T_W + S) = k$ for $k \in [d_R + 1, d_S]$. The number of possible intersections of dimension $k$ is $\binom{d_S}{k}_2 < 2^{k(d_S - k)}$. For each such $k$-dimensional subspace $K$, the probability of sampling a $T$ such that $T\supset K$ is given by counting the number of $d_T$-dimensional superspaces of $k$ within $W$ and dividing by the number of overall $d_T$-dimensional subspaces within $W$:
     \[
        \Pr_{T}[T\supset K] = \frac{\binom{d_W - k}{d_T -  k}_2}{\binom{d_W}{d_T}} < 2^{-k(d_W - d_T)}
     \]
     using the bounds from \Cref{sec:prelims-algebra}.
     By union bounding over all $\leq 2^{(k-1)(d_S - k)}$ subspaces of dimension $k \geq d_R + 1$,
     \begin{align*}
        \Pr[\dim(T_W \intersect S) = k] 
        &< 2^{k(d_S - k)} 2^{-k(d_W - d_T)}
        \\
        &= 2^{k(d_S + d_T - d_W - k)}
        \\
        &< 2^{-(d_R + 1)(d_R - (d_R + 1))}
        \\
        &= 2^{-\omega(\log(\secpar))} = \negl
     \end{align*}
     Union bounding over all $k \in [d_R + 1, d_S]$, the probability that $\dim(T_W + S) < d_W$ is negligible.

     Therefore
     \[
        \left\{(T_R + S, T_W): \begin{array}{c}
            R\gets \Subspc(S, d_R) \\
             T_R \gets \Supspc(R, d_T) \\
             T_W \gets \Subspc(T_R + S, d_T)
        \end{array}\right\}
        \approx_s
        \left\{(T_R+S, T_R): 
        \begin{array}{c}
             R\gets \Subspc(S, d_R) \\
             T_R \gets \Supspc(R, d_T)
        \end{array}\right\}
     \]
     and the claim holds by triangle inequality.
     \ifsubmission \qed \fi
 \end{proof}

\ifsubmission
\else
\subsection{Proof of Security}\label{subsec:poni-coset:security}
\begin{proof}[Proof of Property 3 from \Cref{thm:poni-coset}]
    We reduce to the search-style certified deletion theorem for coset states (\Cref{thm:search-cd}). 

    We first define the distributions $D'_1$ and $D'_2$ for the certified deletion theorem and show that they satisfy its precondition. 
    $D'_1$ is simply $D_1$. $D'_2(\ket{S_{x.z}}, S+x, T+x_T, R^\perp + z_{R^\perp}, k)$ is sampled by sampling $(\aux_1, \aux_H) \gets D_2(\ket{S_{x.z}}, S+x, T+x_T, R^\perp + z_{R^\perp}, k)$, setting $\aux'_1 = (\aux_1, T+x_T, R)$, and outputting $(\aux'_1, \aux_H)$.
    By the precondition of \Cref{thm:poni-coset}, there exists a QPT algorithm $\Sim$ such that
    \begin{gather} 
        \nonumber
        \left\{
            (\aux_1,\ T+x_T,\ R) : \begin{array}{c}
                (S+x, T, R^\perp + z_{R^\perp}, k) \gets D_1
                \\
                z \gets (\co(S^\perp)\intersect R^\perp) + z_{R^\perp},\ x_T = \Can_T(x)
                \\
                 (\aux_1, \aux_2) \gets D_2\big(\ket{S_{x,z}}, S+x, T + x_T, R^\perp + z_{R^\perp}, k \big)
            \end{array}
        \right\}
        \\
        \approx_c  \label{eq:poni-coset-precondition}
        \\
        \nonumber
        \left\{
            (\aux_1,\ T+x_T,\ R) : \begin{array}{c}
                S\gets \Subspc(\bbF_2^\secpar, \dimS)
                \\
                x \gets \co(S),\ z\gets \co(S^\perp)
                \\
                 (\aux_1, \aux_2) \gets \Sim\big(\ket{S_{x,z}}\big)
                 \\
                 T \gets \Supspc(S, \dimT),\ x_T = \Can_T(x)
                 \\
                 R \gets \Subspc(S,\dimR)
            \end{array}
        \right\}
    \end{gather}
    Define $\Sim'(\ket{S_{x,z}}, T + x_T, R^\perp + z_{R^\perp})$ to run $\aux_1 \gets \Sim(\ket{S_{x,z}})$, then output $(\aux_1, T+x_T, R)$. When restated in terms of $D'_2$ and $\Sim'$, \Cref{eq:poni-coset-precondition} is exactly the precondition for the coset certified deletion theorem.
    
    Suppose that some adversary $(\adv_1, \adv_P, \adv_H)$ won $\PoNIMoE$ with noticeable probability. We construct an adversary $(\adv_1', \adv_2')$ for certified deletion as follows.

    $\adv_1'$ takes as input $\aux'_1 = (\aux_1, T+x_T, R)$ and does the following. 
    \begin{enumerate}
        \item It runs $(\regP, \regH) \gets \adv_1(\aux_1)$. 
        \item It samples a subspace $T_R \gets \Supspc(R, \dimT)$.
        \item It runs $\PoNI\langle \adv_P(\regP), V(T_R, 0)\rangle$, acting as $V$, except for the decision step.
        
        Explicitly, execute an oblivious state preparation $\OSP\langle\Sender(T_R), \adv_P(\calP)\rangle$ while acting as the sender. During the course of this, it obtains an $X$ error $x_{\OSP}$. Afterwards, it receives a vector $v_P$ from $\adv_P$.
        
        \item It runs the coset remainder algorithm (\Cref{lem:coset-remainder}) on input $(T, T_R, x_T, v_P-x_{\OSP})$ to get result $v_1$.
        \item It outputs $(v_1, \calH)$.
    \end{enumerate}
    $\adv_2'$ takes as input $(\calH, \aux_H)$ and outputs $v_2 \gets \adv_H(\regH, \aux_H)$.

    We now show that $(\adv_1', \adv_2')$ simultaneously output $v_1 \in S+ x$ and $v_2 \in S^\perp + z$ with noticeable probability, violating \Cref{thm:search-cd}. 
    By \Cref{coro:coset-remainder}, $v_1 \in S + x$ whenever $v_P - x_{\OSP} \in (T_R + S) + x$ and $(T_R \intersect T) \subset S$.
    \Cref{subclaim:poni-sec-vp-in-tp-s-x} shows that there is a function $q = \poly$ such that
    \[
        \Pr_{\Hyb_1}\left[\big((v_P - x_{\OSP}) \in (T_R + S) + x\big)  \land \big(v_2 \in S^\perp + z\big)\right] \geq 1/q
    \]
    \Cref{subclaim:coset-poni-tp-t-intersection} shows that 
    \[
        \Pr[(T_R \intersect T) = R] = 1-\negl
    \]
    Combining these together with \Cref{coro:coset-remainder},
    \begin{align*}
        &\Pr[(v_1 \in S + x) \land (v_2 \in S^\perp + z)]
        \\
        &\geq \Pr\left[\bigg(\big((v_P - x_{\OSP}) \in (T_R + S) + x\big) \land (v_2 \in S^\perp + z)\bigg) \land \bigg((T\intersect T_R)  = R\bigg)\right]
        \\
        &\geq 1/q - \negl
    \end{align*}
    for some $q = \poly$, which is noticeable.

    The remainder of this proof is dedicated to proving Claims \ref{subclaim:poni-sec-vp-in-tp-s-x} and \ref{subclaim:coset-poni-tp-t-intersection}. 

    \begin{claim}\label{subclaim:coset-poni-tp-t-intersection}
        \[
            \Pr[(T_R \intersect T) = R] = 1- \negl
        \]
    \end{claim}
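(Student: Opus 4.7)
The plan is to show that the intersection $T_R \intersect T$ equals $R$ with overwhelming probability by union bounding over the possible ``excess'' dimension of the intersection. Since both $T_R$ and $T$ are superspaces of $R$, the containment $T_R \intersect T \supset R$ is automatic, so I only need to bound the probability that the intersection strictly contains $R$.

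First, I would enumerate candidate intersections by size. For each $j \geq 1$, the subspaces $K$ satisfying $R \subset K \subset T$ with $\dim(K) = d_R + j$ are in bijection with $j$-dimensional subspaces of the quotient $T/R$, of which there are $\binom{d_T - d_R}{j}_2 \leq 2^{j(d_T - d_R - j)}$ by the Gaussian binomial bound stated in \Cref{sec:prelims-algebra}.

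Next, for each such $K$, I would upper bound the probability that $T_R \supset K$. Since $T_R$ is uniform over $d_T$-dimensional superspaces of $R$ in $\bbF_2^\secpar$ and $K \supset R$, this probability equals the ratio $\binom{\secpar - d_R - j}{d_T - d_R - j}_2 / \binom{\secpar - d_R}{d_T - d_R}_2$, which by the identity in \Cref{sec:prelims-algebra} is at most $2^{-j(\secpar - d_T)}$. Union bounding over candidate $K$ for each $j$ and then over $j \geq 1$, and plugging in $d_R = \secpar/6$ and $d_T = \secpar/2$, the per-$j$ probability becomes $2^{-j \secpar/6 - j^2}$, which sums geometrically to $\negl$.

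The argument is essentially bookkeeping; no new conceptual ideas are needed beyond the union-bound strategy already deployed in the proof of \Cref{claim:poni-coset-extractor-1sample}. The only care required is to apply the Gaussian binomial bounds in the correct quotient spaces ($T/R$ for counting $K$, and the superspace lattice above $R$ in $\bbF_2^\secpar$ for counting extensions to $T_R$), and to verify that the exponent gap $\secpar - d_T$ against $d_T - d_R$ is positive under the chosen dimensional parameters, which it is by a margin of $\secpar/6$.
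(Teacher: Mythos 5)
Your proposal is correct and is essentially the paper's own argument: the paper union-bounds over the nonzero vectors of $T$ modulo $R$ (precisely your $j=1$ layer, giving the same dominant term $2^{d_T-d_R}\cdot 2^{-(\secpar-d_T)} = 2^{-\secpar/6+O(1)}$), while you redundantly but harmlessly also include candidate intersections of dimension $d_R+j$ for $j\ge 2$, whose contributions decay geometrically. One cosmetic nit: the counting bound in \Cref{sec:prelims-algebra} is stated as $2^{(k-1)(d-k)}$, and the sharp statement is $\binom{d}{k}_2 \le c\cdot 2^{k(d-k)}$ for a universal constant $c$, so your bound $\binom{d_T-d_R}{j}_2 \le 2^{j(d_T-d_R-j)}$ holds only up to that constant factor, which changes nothing about the negligibility conclusion.
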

    \begin{proof}
        Fix a complementary space $V$ to $R$, i.e. $V+R = \bbF_2^\secpar$ and $V\intersect R = \{0\}$. We can write every $T\supset S$ as $T = R + A$ and $T_R$ as $T_R = R + A_R$ for subspaces $A, A_R \subset V$. Furthermore, $T\intersect T_R = R + (A \intersect A_R)$. 
        Fix $A$ and consider sampling $A_R$. For any vector $a \in A$, the number of possible $A_R$ containing $a$ is the number of $(\dim(A) - 1)$-dimensional subspaces of the $(\dim(V) - 1)$-dimensional space formed by removing $a$ from $V$, which is given by the $q$-binomial coefficient $\binom{\dim(V)-1}{\dim(A_R) - 1}_2$. The total number of possible $A_R$ is $\binom{\dim(V)}{\dim(A')}_2$, so 
        \[
            \Pr_{A_R}[a\in A_R] 
            = \frac{\binom{\dim(V)-1}{\dim(A_R) - 1}_2}{\binom{\dim(V)}{\dim(A_R)}_2}
            = \frac{2^{\dim(A)}-1}{2^{\dim(V)}-1}
            \leq 2^{\dim(A) - \dim(V) +1}
            % = \frac{2^{\dimTminusR}-1}{2^{\secpar-\dimR}-1}
            = 2^{-3\secpar/6 + 1}
        \]
        Taking a union bound over the $2^{\dimTminusR}$ vectors in $A$, the probability that $T \intersect T' \neq R$ is at most $2^{-\secpar/6+1}$.
        \ifsubmission \qed \fi
    \end{proof}

    \begin{claim}\label{subclaim:poni-sec-vp-in-tp-s-x}
        There exists $q = \poly$ such that
        \[
            \Pr\left[\big((v_P-x_{\OSP}) \in (T_R + S) + x \big) \land \big(v_2 \in S^\perp + z\big)\right] \geq 1/q
        \]
    \end{claim}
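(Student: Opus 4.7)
My plan is to chain the sub-claim's event probability back to the $\PoNIMoE$ winning assumption through a sequence of hybrid transitions. I start in the original $\PoNIMoE$ game, where the assumed winning probability $\geq 1/p$ gives $\Pr[(\PoNI \text{ accepts with } T) \land (v_H \in S^\perp + z)] \geq 1/p$; verifier acceptance entails $v_P - x_{\OSP} \in T + x$, so this probability lower-bounds that of $(v_P - x_{\OSP} \in T + x) \land (v_H \in S^\perp + z)$.

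First, I apply the simulation precondition of \Cref{thm:poni-coset} to replace $D_2$ by $\Sim$, losing only $\negl$. Second, I sample an additional $W \gets \Supspc(T, \dimW)$ of dimension $\dimW = d_T + d_S - d_R$ and relax the first conjunct to $v_P - x_{\OSP} \in W + x$, which holds since $T \subset W$. Third, I apply \Cref{claim:PoNI-coset-extractor-indist-with-coset-state}, conditioning on $(W, x_W)$ where $x_W = \Can_W(x)$, to swap the $\PoNI$'s use of $T \supset S$ for an independently-sampled $T' \gets \Subspc(W, \dimT)$; the claim gives equality of the joint distribution on $(W, T, \ket{S_{x,z}})$, so the downstream quantities -- $\aux_H$'s distribution, the $\PoNI$ transcript, and the hacker's output -- are unchanged. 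Fourth, I apply \Cref{claim:poni-coset-extractor-1sample} to replace $(W, T')$ with $(T_R + S, T_R)$ where $T_R \gets \Supspc(R, \dimT)$, losing at most $\negl$ by statistical closeness. This matches the reduction's actual execution, and the event becomes $(v_P - x_{\OSP} \in (T_R + S) + x) \land (v_H \in S^\perp + z)$ with probability at least $1/p - \negl \geq 1/q$ for some $q = \poly$.

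The main obstacle is the third transition, since \Cref{claim:PoNI-coset-extractor-indist-with-coset-state}'s LHS samples $x$ uniformly in $W + x_W$ and $z$ uniformly in $\bbF_2^\secpar$, while the real execution uses $x \in \co(S)$ and $z \in \co(S^\perp)$ uniform. The reconciliation is that conditioning on $(W, x_W)$ restricts the real $x$ to $\co(S) \cap (W + x_W)$, which is in bijection with the cosets of $S$ inside $W + x_W$; combined with $z$ being uniform modulo $S^\perp$ in both samplings, the induced distribution on $\ket{S_{x,z}}$ as a pure state coincides, and with it every downstream marginal. The check $v_H \in S^\perp + z$ in particular can be phrased as ``$v_H$ lies in the Hadamard-basis support of $\ket{S_{x,z}}$'', a function of the coset-state register alone.
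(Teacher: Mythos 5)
Your hybrid skeleton (relax the acceptance check to $W+x$, swap $T$ for an independent $T_W\subset W$, then for $(T_R+S,T_R)$) mirrors the paper's $\Hyb_1,\Hyb_2,\Hyb_3$, but the justification of the transitions has a genuine gap. Your third step treats the challenge swap as a lossless distributional replacement, yet \Cref{claim:PoNI-coset-extractor-indist-with-coset-state} only gives equality of the marginal on $(W,T,\rho)$ in which the coset state is \emph{averaged} over $(S,x,z)$, i.e.\ with $S$, $x$, $z$ traced out. The event you must preserve contains the conjunct $v_2\in S^\perp+z$, which the challenger checks against its classical $(S,z)$ and which is produced by an \emph{unbounded} $\adv_H$ holding $\aux_H$; the joint distribution of $(S,z)$ with the PoNI challenge is completely different across the swap (in the real game $T\supset S$ with certainty, afterwards $T_W$ is essentially independent of $S$ given $W$), and equality of a marginal that traces out exactly these variables says nothing about the probability of an event referencing them. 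Your proposed repair --- rephrasing $v_2\in S^\perp+z$ as membership in the Hadamard-basis support of $\ket{S_{x,z}}$, ``a function of the coset-state register alone'' --- does not work: that register was handed to $\adv_1$ at the outset and is split and disturbed, so at verification time the event is defined only by the challenger's classical $(S,z)$, which is precisely the side information that distinguishes the real and fake challenges. The same issue undermines your first step (replacing $D_2$ by $\Sim$ ``losing only $\negl$''): the joint event is not efficiently checkable by any QPT reduction, since $\adv_H$ is unbounded (and $z$ is not even part of the tuple in the precondition), so computational indistinguishability of $\aux_1$ does not directly bound the change in the joint probability; moreover your chain terminates in the simulated world rather than returning to the real distribution that the certified-deletion reduction actually executes.

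What is missing is exactly the paper's simultaneous search machinery (\Cref{lem:simult-search}): the challenge swap is indistinguishable to the QPT prover side but distinguishable given the hacker-side information, and the paper handles this by turning the decision problem into finding the real challenge among polynomially many fakes and decomposing the joint state along the projector onto the hacker's success, so that the successful branch alone drives a contradiction with search hardness. This lemma is invoked twice --- in \Cref{subsubclaim:poni-sec-coset-subclaim-h2} for the $T\to T_W$ swap and in \Cref{subsubclaim:poni-sec-coset-subclaim-h3} for $(W,T_W)\to(T_R+S,T_R)$ --- and your distributional facts (\Cref{claim:PoNI-coset-extractor-indist-with-coset-state} and \Cref{claim:poni-coset-extractor-1sample}) appear only inside the indistinguishability preconditions of those invocations, never as direct replacements in the winning probability. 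This machinery costs a polynomial (not negligible) factor, which is why the claim asserts only $1/q$ for some new polynomial $q$ rather than the $1/p-\negl$ you derive; the paper even flags this polynomial loss as an inherent limitation in \Cref{sec:overview-future}. Obtaining the stronger bound should have been a warning that the central difficulty --- asymmetrically changing the challenge distribution in a simultaneous game --- was being bypassed rather than solved.
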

    \begin{proof}
        Consider the following hybrid experiments.
        \begin{itemize}
            \item $\Hyb_0$ is the original $\PoNIMoE_{D_1,D_2}(\adv_1, \adv_P, \adv_H)$ game.

            Let $v_P$ be the vector sent by $\adv_P$ in the $\PoNI\langle \adv_P(\regP), V(T, x_T)\rangle$ execution and let $x_{\OSP}$ be the $X$ error from the underlying $\OSP$ protocol. The final $\PoNI$ is accepting if $(v_P - x_{\OSP}) \in T + x_T$.

            \item $\Hyb_1$ additionally samples a random $W\supset T$ with dimension $\dimW$ and changes $\adv_P$'s success condition to $(v_P - x_\OSP) \in W + x$. 
            
            Note that in this hybrid, $x_T$ is not used during the $\PoNI$ execution because the $\PoNI$ decision is replaced with a different success condition for $\adv_P$. Thus, the (decision-less) $\PoNI$ execution can also be written as $\PoNI\langle \adv_P(\regP), V(T, 0)\rangle$.

            \item $\Hyb_2$ replaces $T$ in the $\PoNI$ execution. Instead, the challenger samples $W\supset T$ as in $\Hyb_1$, then samples $T_W \gets \Subspc(W, \dimT)$ and runs $\PoNI\langle \adv_P(\regP), V(T_W, 0)\rangle$.

            \item $\Hyb_3$ replaces $(W, T)$ by $(T_R+S, T_R)$. This modifies both the $\PoNI$ execution and the success condition for $\adv_P$.
        \end{itemize}

        By assumption, the adversaries in $\Hyb_0$ simultaneously succeed with $1/p$ probability for some $p = \poly$. Since $W + x \supset T + x$, the winning probability can only increase in $\Hyb_1$:
        \[
            \Pr\left[\big((v_P-x_{\OSP}) \in W + x \big) \land \big(v_2 \in S^\perp + z\big)\right] \geq 1/p
        \]

        We use this fact to show that $\Hyb_2$ has simultaneous success probability $1/q_2$, for some $q_2 = \poly$, in \Cref{subsubclaim:poni-sec-coset-subclaim-h2}. Then, we use that fact to show that $\Hyb_3$ has simultaneous success probability $1/q_3$, for some $q_3 = \poly$, in \Cref{subsubclaim:poni-sec-coset-subclaim-h3}. The claim follows from the fact that $\Hyb_3$ is precisely the same distribution as our certified deletion adversary $(\adv'_1, \adv'_2)$.

        \begin{claim}\label{subsubclaim:poni-sec-coset-subclaim-h2}
            There exists $q_2 = \poly$ such that 
            \[
                \Pr_{\Hyb_2}\left[\big((v_P-x_{\OSP}) \in (T_R + S) + x \big) \land \big(v_2 \in S^\perp + z\big)\right] \geq 1/q_2
            \]
        \end{claim}
        \begin{proof}
            We invoke the simultaneous search lemma (\Cref{lem:simult-search}) using the following distribution and adversaries.
            $\calD$ samples $(\aux_1, \aux_H, S+x, T+x_T, R^\perp + z_R^\perp, z)$ as in the $\PoNIMoE_{D_1,D_2}$ game. Then, it samples $W\gets \Supspc(T, \dimW)$ and sets
            \begin{align*}
                \aux_A &\coloneqq \aux_1
                \\
                \aux_B &\coloneqq (\aux_H, S^\perp + z)
                \\
                \aux_C &\coloneqq T
                \\
                \cAuxClass &\coloneqq (W, x_W) \quad\text{ where } x_W = \Can_W(x)
            \end{align*}
            The ``fake'' distribution $\calD'(W, x_W)$ samples and outputs $T_W \gets \Subspc(W, \dimT)$.
            
            The search adversaries $(A, B, C)$ are as follows:
            \begin{itemize}
                \item $A(\aux_1)$ outputs $(\calH, \calP) \gets \adv_1(\aux_1)$.
                \item $B(\calH, \aux_H, S^\perp + z)$ runs $v_H \gets \adv_H(\calH, \aux_H)$ and outputs $1$ if $v_H \in S^\perp + z$.
                \item $C(\calP, T, (W, x_W))$ runs $\PoNI\langle \adv_P(\calP), V(T)\rangle$ until the decision step, obtaining $v_P$ and $x_\OSP$, then outputs $1$ if $(v_P - x_\OSP) \in W + x_W$.
            \end{itemize}

            Observe that $B$ and $C$ (when given the original distribution) simultaneously output $1$ whenever $\big((v_P-x_{\OSP}) \in W + x \big) \land \big(v_2 \in S^\perp + z\big)$ in $\Hyb_1$. Thus, they simultaneously output $1$ with probability $1/p$.

            The last precondition for the simultaneous search lemma is to show that 
            \[
                \{(\aux_1, T, (W, x_W))\} \approx_c \{(\aux_1, T_W, (W, x_W))\}
            \]
            We show this using the following series of hybrids experiments:
            \begin{itemize}
                \item $\Hyb_{2,0}$ is the distribution $(\aux_1, T, (W, x_W))$, where $(\aux_1, S, x, T)$ are distributed as in $\PoNI_{D_1,D_2}$ and $W\gets \Supspc(T, \dimW)$.
                \item $\Hyb_{2,1}$ replaces $(\aux_1, S, x, T)$ by uniformly sampling $(S,x,z, T)$ subject to $S\subset T$ and sampling $\aux_1 \gets \Sim(\ket{S_{x,z}})$. Since $S$, $T$, and $W$ are both uniform subject to $S\subset T \subset W$, it is equivalent to sample $W$ uniformly first, then $T\subset W$ uniformly, then $S\subset T$ uniformly.
                \item $\Hyb_{2,2}$ replaces $T$ by $T_W \gets \Subspc(W, \dimT)$.
                \item $\Hyb_{2,3}$ replaces $(\aux_1, S, x, T)$ by the original distribution from $\PoNIMoE_{D_1, D_2}$. Note that $T$ is still used to sample $W$.
            \end{itemize}
            $\Hyb_{2,1}$ is computationally indistinguishable from $\Hyb_{2,0}$ by the precondition on $(D_1, D_2)$. $\Hyb_{2,2}$ is statistically close to $\Hyb_{2,1}$ by \Cref{claim:PoNI-coset-extractor-indist-with-coset-state} (which operates for fixed $W$ and random $T\subset W$). $\Hyb_{2,3}$ is computationally indistinguishable from $\Hyb_{2,2}$ by the precondition on $(D_1, D_2)$.

            Therefore \Cref{lem:simult-search} implies the claim.
            \ifsubmission \qed \fi
        \end{proof}

        \begin{claim}\label{subsubclaim:poni-sec-coset-subclaim-h3}
            There exists $q_3 = \poly$ such that 
            \[
                \Pr_{\Hyb_3}\left[\big((v_P-x_{\OSP}) \in (T_R + S) + x \big) \land \big(v_2 \in S^\perp + z\big)\right] \geq 1/q_2
            \]
        \end{claim}
        \begin{proof}
            We invoke the simultaneous search lemma (\Cref{lem:simult-search}) using the following distribution and adversaries.
            $\calD$ samples $(\aux_1, \aux_H, S+x, T+x_T, R^\perp + z_R^\perp, z)$ as in the $\PoNIMoE_{D_1,D_2}$ game. Then, it samples $W\gets \Supspc(T, \dimW)$ and $T_W \gets \Supspc(W, \dimT)$ and sets
            \begin{align*}
                \aux_A &\coloneqq \aux_1
                \\
                \aux_B &\coloneqq (\aux_H, S^\perp + z)
                \\
                \aux_C &\coloneqq (W, T_W)
                \\
                \cAuxClass &\coloneqq (S, x)
            \end{align*}
            The ``fake'' distribution $\calD'(S,x)$ samples $R\gets \Subspc(S, \dimR)$ and $T_R \gets \Supspc(R, \dimT)$, then outputs $(T_R + S, T_R)$.
            
            The search adversaries $(A, B, C)$ are as follows:
            \begin{itemize}
                \item $A(\aux_1)$ outputs $(\calH, \calP) \gets \adv_1(\aux_1)$.
                \item $B(\calH, \aux_H, S^\perp + z)$ runs $v_H \gets \adv_H(\calH, \aux_H)$ and outputs $1$ if $v_H \in S^\perp + z$.
                \item $C(\calP, (W, T_W), (S,x))$ runs $\PoNI\langle \adv_P(\calP), V(T_W)\rangle$ until the decision step, obtaining $v_P$ and $x_\OSP$, then outputs $1$ if $(v_P - x_\OSP) \in W + x$.
            \end{itemize}

            Observe that $B$ and $C$ (when given the original distribution) simultaneously output $1$ whenever $\big((v_P-x_{\OSP}) \in W + x \big) \land \big(v_2 \in S^\perp + z\big)$ in $\Hyb_2$. Thus, they simultaneously output $1$ with probability $1/q_2$ by \Cref{subsubclaim:poni-sec-coset-subclaim-h2}.

            The last precondition for the simultaneous search lemma is to show that 
            \[
                \{(\aux_1, (W, T_W), (S, x))\} \approx_c \{(\aux_1, (T_R + S, T_R), (S, x))\}
            \]
            We show this using the following series of hybrids experiments:
            \begin{itemize}
                \item $\Hyb_{2,0}$ is the distribution $(\aux_1, T, (W, x_W))$, where $(\aux_1, S, x, T)$ are distributed as in $\PoNI_{D_1,D_2}$ and $W\gets \Supspc(T, \dimW)$.
                \item $\Hyb_{2,1}$ replaces $(\aux_1, S, x, T)$ by uniformly sampling $(S,x,z, T)$ subject to $S\subset T$ and sampling $\aux_1 \gets \Sim(\ket{S_{x,z}})$. Since $S$ and $W$ are both uniform subject to $S\subset T \subset W$, it is equivalent to sample $S$ uniformly first, then $W\supset S$ uniformly. Note that $T$ is not used in this hybrid, so there is no need to explicitly sample it.
                
                \item $\Hyb_{2,2}$ replaces $(W, T_W)$ by $(T_R + S, T_R)$ where $R\gets \Subspc(S, \dimR)$ and $T_R \gets \Supspc(R, \dimT)$.
                
                \item $\Hyb_{2,3}$ replaces $(\aux_1, S, x, T)$ by the original distribution from $\PoNIMoE_{D_1, D_2}$. Note that $T$ is still used to sample $W$.
            \end{itemize}
            $\Hyb_{2,1}$ is computationally indistinguishable from $\Hyb_{2,0}$ by the precondition on $(D_1, D_2)$. $\Hyb_{2,2}$ is statistically close to $\Hyb_{2,1}$ by \Cref{claim:poni-coset-extractor-1sample} (which operates for fixed $S$ and random $W\supset S$). $\Hyb_{2,3}$ is computationally indistinguishable from $\Hyb_{2,2}$ by the precondition on $(D_1, D_2)$.

            Therefore \Cref{lem:simult-search} implies the claim.
            \ifsubmission \qed \fi
        \end{proof}
        \ifsubmission \qed \fi
    \end{proof}
    \ifsubmission \qed \fi
    \end{proof}
\fi
\section{Encryption with Proofs of No-Intrusion}\label{sec:poni-enc}

\subsection{Definition}\label{sec:poni-enc:defs}

\begin{definition}[Encryption with Proofs of No-Intrusion]
    A (public-key) \textbf{encryption scheme with a proof of no-intrusion} consists of the following QPT algorithms.
    \begin{itemize}
        \item $(\pk, \sk) \gets \KeyGen(1^\secpar)$ takes as input the security parameter $1^\secpar$ then outputs a public key $\pk$ and a secret key $\sk$.
        \item $\verkey \gets \VerKeyGen(1^\secpar)$ takes as input the security parameter $1^\secpar$ then outputs a verification key $\verkey$.
        \item $\ct \gets \Enc(\pk, m, \verkey)$ takes as input a public key $\pk$, a message $m$, and a verification key $\verkey$, then outputs a ciphertext $\ct$.
        \item $m\gets \Dec(\sk, \ct)$ takes as input a secret key $\sk$ and a ciphertext $\ct$, then outputs a message $m$.
    \end{itemize}
%     It must satisfy the standard correctness and semantic security (\Cref{def:standard-encryption}) requirements for encryption. 

    Additionally, it is equipped with a \textbf{proof of no-intrusion}, which is an interactive protocol \emph{with classical communication} between a QPT prover $P$ holding a state $\ket{\psi}$ and a PPT verifier holding a verification key $\verkey$:
    \[
        (\ket{\psi'}, b) 
        \gets
        \PoNI\langle \prover(\ket{\psi}), \verifier(\verkey)\rangle
    \]
    At the end of the protocol, $P$ outputs a state $\ket{\psi'}$ and $V$ outputs a decision bit $b\in \{\Accept, \Reject\}$.

    The scheme must satisfy the following properties:
    \begin{itemize}
        \item \textbf{Correctness.} For every message $m$, every key pair $(\pk, \sk)$ in the support of $\KeyGen(1^\secpar)$, and every verification key $\verkey$ in the support of $\VerKeyGen(1^\secpar)$,
        \[
            \Pr\left[m = \Dec(\sk, \Enc(\pk, m, \verkey)\right] = 1-\negl
        \]

        \item \textbf{Semantic Security.} For every pair of messages $(m_0, m_1)$,
        \[
            \left\{
                (\ct_0, \verkey) : 
                \begin{array}{c}
                    (\pk, \sk) \gets \KeyGen(1^\secpar) \\
                    \verkey\gets \VerKeyGen(1^\secpar) \\
                     \ct_0 \gets \Enc(\pk, m_0, \verkey)  
                \end{array} 
            \right\}
            \approx_c
            \left\{
                (\ct_1, \verkey) :
                \begin{array}{c}
                (\pk, \sk) \gets \KeyGen(1^\secpar) \\
                \verkey\gets \VerKeyGen(1^\secpar) \\
                 \ct_1 \gets \Enc(\pk, m_1, \verkey)  
                \end{array} 
            \right\}
        \]
    
        \item \textbf{PoNI Correctness and State Preservation:} Let $(\pk, \sk)$ be in the support of $\KeyGen(1^\secpar)$, let $m$ be an arbitrary message, let $\verkey$ be in the support of $\VerKeyGen$, and let $\ket{\ct}$ be a ciphertext which decrypts to $m$ with certainty.
        % $\ket{\ct}$ be in the support of $\Enc(\pk, m, \verkey)$. Then with overwhelming probability over
        \[
            (\ket{\psi'}, b) \gets \PoNI\langle P(\ket{\ct}),\ V(\verkey)\rangle 
        \]
        $b = \Accept$ and $\ket{\psi}$ belongs to the space of ciphertexts which decrypt to $m$.
        
        \item \textbf{PoNI Security:} The construction satisfies either search PoNI security (\Cref{def:PoNI-security-search}) or decision PoNI security (\Cref{def:PoNI-security-decision}).
    \end{itemize}    
\end{definition}

A nuance in the definition of state preservation is that the ciphertext is allowed to \emph{change} as a result of the PoNI, as long as it still decrypts to the same message.

We emphasize that $\verkey$ can be generated once and used for many ciphertexts -- even under different public keys. This ensures that any client sending ciphertexts does not have to store an amount of data that grows with the number of ciphertexts sent. After all, if the client had to store the same amount of data locally, why should the client not store the original data itself?

% It might also seem reasonable to allow each encryption to generate its own $\verkey$. Indeed, the related primitive of encryption with certified deletion samples fresh verification keys for each ciphertext (more discussion on the relation later). However, such syntax would be undesirable for two reasons. First, and most importantly, the 
\justin{Commentary on security notions.}
For security, we consider a general scenario where 

\begin{definition}[Proofs of No-Intrusion for Encryption: Search Security]\label{def:PoNI-security-search}
    Consider the following security game $\PoNIEncSearch_{n}(\adv)$, played by an adversary $\adv = (\adv_1, \adv_P, \adv_H)$ consisting of three QPT algorithms (with auxiliary quantum input) and parameterized by a non-negative integer $n$.
    \begin{enumerate}
        \item Sample a message $m\gets \{0,1\}^\secpar$, a key pair $(\pk, \sk)\gets \KeyGen(1^\secpar)$, a verification key $\verkey \gets \VerKeyGen(1^\secpar)$, and a ciphertext $\ct \gets \Enc(\pk, m, \verkey)$.
        \item Initialize $\adv_1$ with $(\pk, \ct)$.
        \item Perform $\PoNI\langle \adv_1, \verifier(\verkey)\rangle$ a total of $n$ times. If the verifier outputs $\Reject$ in any of these executions, the adversary immediately loses (output $0$).
        \item $\adv_1$ outputs two registers $(\calH, \calP)$.
        \item Perform $b \gets \PoNI\langle \adv_P(\calH), \verifier(\verkey)\rangle$, where $b$ is the verifier's decision bit. 
        \item Run $m' \gets \adv_H(\regH, \sk, \vk)$.
        \item Output $1$ (the adversary wins) if $b = \Accept$ and $m' = m$. Otherwise output $0$ (the adversary loses).
    \end{enumerate}

    \noindent We say the PoNI has $n$-time search security if for all QPT adversaries $\adv$,
    \[
        \Pr[1\gets \PoNIEncSearch_n(\adv)] = \negl
    \]
    If this holds for all $n = \poly$, then we simply say that the PoNI has search security.

    \noindent If this holds for QPT $\adv_1$ and $\adv_P$, but \emph{unbounded} $\adv_H$, we say that the PoNI has ($n$-time) \emph{everlasting} search security.
\end{definition}

\justin{Commentary on difference between search and decision.}
\ifsubmission
We define a decisional version and explain the differences from the search version in \Cref{app:decisional-encryption}.
\else
\begin{definition}[Proofs of No-Intrusion for Encryption: Decisional Security]\label{def:PoNI-security-decision}
    Consider the following security game $\PoNISec_{n}(m)$, played by an adversary $\adv(\ket{\psi})$ and parameterized by a message bit $m\in \{0,1\}$ along with a non-negative integer $n$.
    \begin{enumerate}
        \item Sample a key pair $(\pk, \sk)\gets \KeyGen(1^\secpar)$, a verification key $\verkey\gets \VerKeyGen(1^\secpar)$ and a ciphertext $\ct \gets \Enc(\pk, m, \verkey)$.
        \item Initialize $\adv(\ket{\psi})$ with $(\pk, \ct)$.
        \item Perform $\PoNI\langle \adv(\cdots), \verifier(\verkey)\rangle$ a total of $n$ times. If the verifier outputs $\Reject$ in any of these executions, immediately output $\bot$.
        \item $\adv$ outputs a register $\regR_{\Dec}$.
        \item Perform $\PoNI\langle \adv(\cdots), \verifier(\verkey)\rangle$. If the prover rejects, output $\bot$. Otherwise output $(\regR_{\Dec}, \sk, \verkey)$.
    \end{enumerate}

    \noindent We say the PoNI has $n$-time security if for all QPT adversaries $\adv(\ket{\psi})$,
    \[
        \{\PoNISec_{n}(0)\} \approx \{\PoNISec_{n}(1)\}
    \]
    If this holds for all $n = \poly$, then we simply say that the PoNI is secure.

    \noindent If these two distributions are instead \emph{statistically} close, we say that the PoNI is ($n$-time) everlasting secure.
\end{definition}
\fi

\justin{Nuances with the definition: Mixing a search-based security notion (PoNI acceptance) with a decisional one. So can't just use MoE-style ``simultaneously correct'' definition. Could be catastrophic if search-security was violated with probability 1/2 and decisional was violated with probability 1 - but this would be absolutely fine for 2 decisional. Could try to do something like "conditioned on search passing, decisional is 1/2". However, then would need to also include an ``or'' requirement that if this is not the case, search violation is negligible. This style of definition wraps both up together nicely.}

Proofs of non-intrusion are related to another property called certified deletion. Certified deletion allows a sender to encrypt a messages in a way that allows the ciphertext to be destructively measured to produce a certificate. If the prover accepts the certificate, they are guaranteed that whoever holds the residual state of the ciphertext cannot learn any information about the message even if the secret key is stolen (or potentially the hardness assumption is broken). It is not hard to see that proofs of no-intrusion imply certified deletion with quantum certificates -- simply return the whole ciphertext and the verifier can check it by running the proof of non-intrusion internally.

\begin{claim}
    Any encryption scheme with a a decisional proof of no-intrusion is also an encryption scheme with certified deletion where the certificate is quantum.
\end{claim}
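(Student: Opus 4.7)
The plan is to instantiate the certified deletion scheme as a direct wrapper around the PoNI scheme. Key generation, encryption, and decryption are inherited verbatim, with the sender additionally running $\vk \gets \VerKeyGen(1^\secpar)$ once and keeping $\vk$ alongside $\sk$. To delete a ciphertext, the receiver returns its entire quantum state as the certificate $\regC$. To verify, the sender internally simulates $\PoNI\langle P(\regC), V(\vk)\rangle$, playing both prover (on $\regC$) and verifier (on $\vk$), and accepts iff the PoNI accepts. Correctness of the PoNI on honestly generated ciphertexts immediately yields correctness of the deletion protocol.

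For security, I would reduce certified-deletion indistinguishability to decisional PoNI security with $n = 0$ (i.e.\ no preliminary executions). Given a CD adversary $\adv_{\mathsf{CD}}$, build a PoNI adversary $\adv = (\adv_1, \adv_P, \adv_H)$ as follows. On input $(\pk, \ct)$, the algorithm $\adv_1$ runs $\adv_{\mathsf{CD}}(\pk, \ct)$ to obtain two registers $(\regC, \regR)$ and sets $\regP \coloneqq \regC$, $\regH \coloneqq \regR$. The PoNI game then runs $\PoNI\langle \adv_P(\regP), V(\vk)\rangle$, whose acceptance is by construction the same bit as the CD verifier would produce on $\regC$. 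Afterwards $\adv_H$ is handed $(\regH, \sk, \vk)$, which matches exactly the view available to $\adv_{\mathsf{CD}}$'s residual-register distinguisher. Any CD advantage therefore transfers into a PoNI decisional advantage; the identical reduction with unbounded $\adv_H$ shows that everlasting decisional PoNI security implies everlasting certified deletion.

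I do not anticipate a substantive obstacle, since the statement is essentially syntactic: the CD security requirement — indistinguishability of the residual state given $(\sk, \vk)$ conditioned on certificate acceptance — is built directly into the decisional PoNI game. The only minor points to check are that the PoNI verification is self-contained and requires only $\vk$ (true by definition), and that using the entire ciphertext register as the quantum certificate is acceptable, since CD permits the underlying state to be consumed during deletion. State preservation of the PoNI plays no role on the CD side: the leftover prover state after the internal PoNI is simply discarded by the sender.
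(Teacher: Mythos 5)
Your proposal is correct and matches the paper's own argument, which is exactly the one-line observation that the receiver returns the entire quantum ciphertext as the certificate and the sender verifies it by internally running the $\PoNI$ (playing both prover and verifier with $\vk$). Your written-out reduction to the decisional PoNI game with $n=0$ is simply a more detailed formalization of the same construction, and it goes through as you describe.
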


On the other hand, certified deletion does \emph{not} imply a proof of no-intrusion. The key difference is that proofs of no-intrusion require state preservation -- that is, any tested ciphertext should still decrypt to the same message. If state preservation were not required, then the prover could simply verifiably delete their ciphertext, rendering it useless forevermore.

%%%%%%%%%%%%

\subsection{Construction}

Our construction is based on \cite{EC:BGKMRR24}'s construction of certified deletion using coset states. 
The main modification is how we mask the message. The prior work's construction encrypted $\langle z, \vec{1}\rangle \oplus m$, where $m$ is a single bit message. Our construction will instead mask a multi-bit message $m$ by the full information of $z$ (albeit in a compressed form). Specifically, we encrypt $(M_S z) \oplus m$, where $M_S$ is a matrix whose rows form a basis for $S$. Looking forward, we can recover $v\in S^\perp + z$ from any adversary who finds $M_S z$ because the kernel of $M_S$ is precisely $S^\perp$.

To extend the construction support proofs of non-intrusion, we use the proof of no-intrusion for coset states developed in \cref{sec:poni-coset}. Since that construction incurs a random Pauli $Z$ error which is learned by the verifier, we also include an FHE ciphertext containing a correction factor when encrypting a message $m$. Over subsequent PoNI tests, the verifier can help the holder of the ciphertext homomorphically update the correction factor.

\begin{construction}[Encryption with Proofs of No-Intrusion]\label{constr:enc-poni}
    The construction uses a public-key encryption scheme $\PKE$, a fully-homomorphic encryption scheme $\FHE$, and the proof of no intrusion $\PoNI_{\coset}$ from \Cref{constr:coset-poni}. The core algorithms are as follows.
    \begin{itemize}
        \item $\KeyGen(1^\secpar)$: Generate $(\sk, \pk) \gets \PKE.\KeyGen(1^\secpar)$. Generate $(\sk, \pk) \gets \PKE.\KeyGen(1^\secpar)$. Define $\sk \coloneqq (\sk_{\PKE}, \sk_{\FHE}$ and $\pk \coloneqq (\pk_{\PKE}, \pk_{FHE})$.\footnote{If one is willing to use $\FHE$ in place of the $\PKE$ scheme, a single key can be used.} Output $(\sk, \pk)$.

        \item $\VerKeyGen(1^\secpar)$: Sample a PRF key $k_{\PRF}\gets \{0,1\}^\secpar$ and initialize a $\lambda$-bit counter $\ctr = 1$. Output $(k_{\PRF}, \ctr)$.
        % and a (multi-time) MAC key $k_{\MAC}$. Output $(k_{\PRF}, k_{\MAC})$.
        
        \item $\Enc(\pk, m\in \{0,1\}^{\dimS}; \verkey)$: 
        \begin{enumerate}
            % % The following samples T using PRF, then S\subset T randomly.
            % \item Parse $\verkey = (k_\PRF, \ctr)$. Sample a $\dimT$-dimensional subspace $T$ and a vector $x_T \in \co(T)$ using randomness $\PRF(k_{\PRF}, \ctr)$. We abuse notation in the future by writing $(T, x_T) = \PRF(k_{\PRF}, \ctr)$. Update $\ctr$ to $\ctr+1$.\footnote{Technically, one could use any seed so long as they do not use the same seed twice or let it depend on $k_{\PRF}$. We suggest using a counter because it makes it easy to track the number of messages which a sender might need to test in the event of a data hack (and nobody has time to encrypt $2^\secpar$ messages).}
            % \item Sample a random subspace $S\gets \Subspc(T, \dimS))$, a vector $z \gets \co(S^\perp)$, and a vector $x' \gets \co(S) \intersect T$ \emph{using fresh randomness}. Let $x = x' + x_T$.

            % % The following samples S using PRF directly.
            \item Parse $\verkey = (k_\PRF, \ctr)$. Sample a $\dimS$-dimensional subspace $S$ and a vector $x \in \co(S)$ using randomness $\PRF(k_{\PRF}, \ctr)$. We abuse notation in the future by writing $(S, x) = \PRF(k_{\PRF}, \ctr)$. Update $\ctr$ to $\ctr+1$.\footnote{Technically, one could use any seed so long as they do not use the same seed twice or let it depend on $k_{\PRF}$. We suggest using a counter because it makes it easy to track the number of messages which a sender might need to test in the event of a data hack (and nobody has time to encrypt $2^\secpar$ messages).}
            \item Sample a vector $z\gets \co(S^\perp)$ using \emph{fresh} randomness.
            
            \item Let $M_S \in \bbF_2^{\dimS \times \secpar}$ be a matrix whose rows span $S$.
            \item Output
            \[
                \ct = \left(\begin{array}{c}
                      \ket{S_{x,z}},\\
                      \PKE.\Enc(\pk_{\PKE}, (M_S, M_S z \oplus m)),\\
                      \FHE.\Enc(\pk_{\FHE}, 0)
                      % , \\
                     % \MAC.\Sign(\seed) 
                \end{array}\right)
            \]
        \end{enumerate}
        
        % Sample a random $\lambda/2$-dimensional subspace $S\subset \bbF_2^n$ and coset offsets $x\gets \co(S)$ and $z\gets \co(S^\perp)$. Set
        % \begin{align*}
        %     \ct &\coloneqq \left(\FHE.\Enc(\pk_{\FHE}, 0),\ \PKE.\Enc(\pk_{\PKE}, (S,x,z)),\ \ket{S_{x,z}},\ \langle\vec{1},\ z\rangle \oplus m\right)
        %     \verkey \coloneqq (S,x)
        % \end{align*}
        % To encrypt a longer message, encrypt it bit-by-bit.

        \item $\Dec(\sk, \ct)$: 
        \begin{enumerate}
            \item Parse $\ct = (\ket{\psi},\ \ct_1,\ \ct_2)$.
            \item Decrypt $(M_S, v)\gets \PKE.\Dec(\sk_{\PKE}, \ct_1)$ and the correction factor $z_{\mathsf{cor}} \gets \FHE.\Dec(\sk_{\FHE}, \ct_2)$.
            \item Measure $\ket{\psi}$ in the Hadamard basis to obtain a vector $z'$.
            \item Output $M_S (z' - z_{\mathsf{cor}}) \oplus v$.
            % % Apply the correction factor to $\ket{\psi}$ to get $U\ket{\psi}$. 
            % Compute and measure the output of $\Can_S(\cdot)$ on $\ket{\psi}$ to obtain $z$.
            % \item Output
            %     $
            %         \langle \vec{1}, z+z'\rangle \oplus b
            %     $.
        \end{enumerate}
    \end{itemize}
    The \textbf{proof of no-intrusion} works as follows. Suppose the verifier wants to verify the $i$'th ciphertext they encrypted.
    \begin{enumerate}
        \item \textbf{Verifier:} Parse $\verkey = (k_\PRF, \ctr)$. Compute $(S, x) \gets \PRF(k_\PRF, i)$. Sample $T + x_T \supset S + x$.
        
        \item \textbf{Prover:} Parse $\ct = (\ket{\psi},\ \ct_1,\ \ct_2)$.
        
        \item Execute
        \[
            (\ket{\psi'}, (b, (0,z'))) \gets \PoNI_{\mathsf{coset}}\langle \prover(\ket{\psi}), \verifier(T, x_T)\rangle
        \]
        
        \item \textbf{Verifier:} Send $\FHE.\Enc(\pk_\FHE, z')$ to the prover. Output $b$.
        
        \item \textbf{Prover:} Homomorphically evaluate the function $f_{z'}(a) = a+z'$ (over $\bbF_2^{\secpar}$) on the correction ciphertext $\ct_2$ to obtain $\ct_2'$. Update $\ct$ to $(\ket{\psi'},\ \ct_1,\ \ct'_2)$.
    \end{enumerate}
\end{construction}

\begin{theorem}
    Assuming oblivious state preparation for coset states and fully homomorphic encryption, there exists an encryption scheme with search proofs of no intrusion.
\end{theorem}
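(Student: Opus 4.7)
The plan is to verify the four required properties of \Cref{constr:enc-poni}. \emph{Correctness} and \emph{PoNI correctness with state preservation} reduce to one algebraic observation: each PoNI execution rotates the coset state by $Z^{z'_i}$ while the prover homomorphically accumulates $z'_i$ into $\ct_2$, so after $n$ executions the state is $\ket{S_{x, z + \sum z'_i}}$ and $z_{\mathsf{cor}} = \sum z'_i$ decrypts out of $\ct_2$. A Hadamard measurement then returns $z' \in S^\perp + (z + z_{\mathsf{cor}})$, and since $M_S$ has kernel $S^\perp$ we obtain $M_S(z' - z_{\mathsf{cor}}) \oplus v = M_S z \oplus (M_S z \oplus m) = m$. \emph{Semantic security} follows from $\PKE$ semantic security: although the adversary sees $\verkey$ and can Hadamard-measure $\ket{S_{x,z}}$ to recover $M_S z$, the ciphertext $\PKE.\Enc(\pk_\PKE, (M_S, M_S z \oplus m))$ is indistinguishable from $\PKE.\Enc(\pk_\PKE, 0)$ without $\sk_\PKE$, which hides $m$.

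The core work is \emph{search PoNI security}. I would reduce to the search monogamy of entanglement for coset PoNIs (\Cref{thm:poni-coset}) by constructing distributions $(D_1, D_2)$ and adversaries $(\adv'_1, \adv'_P, \adv'_H)$ for the $\PoNIMoE$ game. $D_1$ samples $k = k_\PRF$, derives $(S, x) = \PRF(k, i^*)$ for the challenge counter $i^*$, and samples $T \gets \Supspc(S, \dimT)$, $R \gets \Subspc(S, \dimR)$, plus a dummy $z \gets \co(S^\perp)$ from which we set $z_{R^\perp} = \Can_{R^\perp}(z)$, so that the game's re-sampling of $z$ in $(\co(S^\perp) \intersect R^\perp) + z_{R^\perp}$ has the correct uniform marginal over $\co(S^\perp)$. $D_2$ then builds $\aux_1$ containing the full ciphertext together with $\pk$, and $\aux_H$ containing $(\sk, \vk, M_S, M_S z \oplus m)$. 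The reduction's $\adv'_1$ runs $\adv_1$ while internally simulating the $n$ prior PoNI executions, playing the verifier with fake superspaces $T'_i$ and fake correction messages $\FHE.\Enc(\pk_\FHE, 0)$; $\adv'_P$ simply forwards the final PoNI; and $\adv'_H$ runs $\adv_H$ to obtain a guess $m'$, computes $w = m' \oplus (M_S z \oplus m)$, and outputs $v_H$ equal to the unique preimage of $w$ under the bijection $M_S$ restricted to $\co(S^\perp)$. When $m' = m$ we have $w = M_S z$ and $v_H = z \in S^\perp + z$, so simultaneous winning in $\PoNIEncSearch$ implies simultaneous winning in $\PoNIMoE$.

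Invoking \Cref{thm:poni-coset} requires verifying its simulatability precondition: that $(\aux_1, S, x, T, R)$ is computationally indistinguishable from $(\Sim(\ket{S_{x,z}}), S', x', T', R')$ for fresh random $(S', x', T', R')$. A standard hybrid chain handles this: (i) pseudorandomness of $\PRF$ makes $(S, x)$ uniform; (ii) $\PKE$ semantic security replaces $\PKE.\Enc(\pk_\PKE, (M_S, M_S z \oplus m))$ with $\PKE.\Enc(\pk_\PKE, 0)$, valid because $\sk_\PKE$ appears only in $\aux_H$ and is not needed by $\Sim$; and (iii) the remaining components of $\aux_1$ are freshly generated keys, zero-ciphertexts, and $\ket{S_{x,z}}$, all of which $\Sim$ can produce from $\ket{S_{x,z}}$ alone.

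The main obstacle I anticipate is justifying $\adv'_1$'s internal simulation of the $n$ prior PoNI executions. A further hybrid, applied one execution at a time, must show that swapping each real $T_i \supset S$ for an independent fake $T'_i$ and each real $\FHE.\Enc(\pk_\FHE, z'_i)$ for $\FHE.\Enc(\pk_\FHE, 0)$ does not detectably alter $\adv_1$'s view. Semantic security of the coset PoNI (Property 2 of \Cref{thm:poni-coset}) handles the $T_i$ substitution per execution, and $\FHE$ semantic security handles the correction substitution; both arguments are valid because $\sk_\PKE$ and $\sk_\FHE$ only appear at the very end inside $\aux_H$. A subtle point is that the coset state accumulates a $Z^{z'_i}$ error on each execution, but because the simulator chose the fake $z'_i$ itself it can absorb this error into the state it hands to $\adv_1$. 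Chaining these hybrids completes the reduction, and \Cref{thm:poni-coset} yields the negligible-probability bound on simultaneous success.
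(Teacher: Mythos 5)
Most of your proposal tracks the paper's proof: correctness, state preservation, semantic security, the reduction of search PoNI security to \Cref{thm:poni-coset}, and the trick of converting the hacker's guess $m'$ into a vector $v_H$ with $M_S v_H = r \oplus m'$ (so that $m'=m$ forces $v_H \in S^\perp + z$ via $\ker M_S = S^\perp$) are all the same. The genuine gap is in how you handle the $n$ prior PoNI executions. You place them inside the reduction's adversary $\adv'_1$, simulated with fake superspaces $T'_i$ and fake corrections $\FHE.\Enc(\pk_\FHE,0)$, and you justify this by arguing that the swap ``does not detectably alter $\adv_1$'s view,'' claiming the secret keys are harmless because they ``only appear at the very end inside $\aux_H$.'' That is exactly where the argument breaks: the quantity you must preserve is the \emph{simultaneous} probability that the final PoNI accepts and the hacker (who is handed $\sk = (\sk_{\PKE},\sk_{\FHE})$ and $\vk$) succeeds. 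A hybrid over this joint event requires a distinguisher that simulates the hacker, hence holds $\sk_{\FHE}$ and $\sk_{\PKE}$ — so FHE semantic security cannot be invoked to swap $\FHE.\Enc(z'_i)$ for $\FHE.\Enc(0)$, and the prover-view semantic security of \Cref{constr:coset-poni} does not cover a view that includes a correction ciphertext that such a distinguisher can decrypt (the paper itself notes that revealing the $Z$ error breaks the $\OSP$'s hiding of $T$). More fundamentally, indistinguishability of $\adv_1$'s view alone does not preserve a simultaneous-success probability when the change is distinguishable on the hacker's side; this asymmetric-indistinguishability problem is precisely what \Cref{lem:simult-search} and the internal proof of \Cref{thm:poni-coset} were built to handle, and your chaining step silently assumes it away.

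The paper avoids this entirely by a different placement: the \emph{real} prior executions (honest verifier, real $T_i \supset S$, honest FHE corrections) are folded into $\aux_1$ via $D_2$, so the $\PoNIMoE$ game played by the reduction is the real encryption game verbatim and no winning-probability hybrid over the prior executions is ever needed. The indistinguishability arguments (PRF, $\PKE.\Enc(S,r)\approx_c\Enc(0)$, replacing the verifier's $T_i$ by arbitrary $T'$) are used only to verify the simulation precondition of \Cref{thm:poni-coset}, i.e.\ indistinguishability of the tuple $(\aux_1, S, x, T, R)$, in which no secret keys appear — which is what makes those reductions legitimate; the hacker's keys live in the unconstrained $\aux_H$, and the simultaneous-success issues are absorbed once and for all inside \Cref{thm:poni-coset}. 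To repair your write-up, restructure the reduction this way rather than having $\adv'_1$ fake the executions. Two smaller points: your $D_2$ cannot literally form the plaintext $(M_S, M_S z \oplus m)$ since the game samples $z$ after $D_1$ and never hands it to $D_2$; you need the paper's observation that uniform $m$ statistically masks $M_S z$, so $D_2$ encrypts $(M_S, r)$ for uniform $r$. Also, replacing the corrections by $\FHE.\Enc(0)$ is unnecessary in the correct framing — the simulator plays the verifier for its own $T'$, knows the resulting $z'$, and can send the honest encryption.
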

\begin{proof}
    We first show the (mostly) standard correctness property. The only change is the syntax. A ciphertext takes the form
    \[
        \ct = \left(\begin{array}{c}
                      \ket{S_{x,z+z_\mathsf{cor}}},\\
                      \PKE.\Enc(\pk_{\PKE}, (M_S, M_S z \oplus m)),\\
                      \FHE.\Enc(\pk_{\FHE}, z_\mathsf{cor})
                      % , \\
                     % \MAC.\Sign(\seed) 
                \end{array}\right)
    \]
    By correctness of the underlying $\PKE$ and $\FHE$, the decryptor correctly obtains $M_S$, $z_\mathsf{cor}$, and $v = M_S z \oplus m$. Measuring $\ket{S_{x,z+z_\mathsf{cor}}}$ in the Hadamard basis results in a vector $z'\in S^\perp + z + z_\mathsf{cor}$. Then $v \oplus M_S(z' - z_\mathsf{cor}) = M_S z \oplus m \oplus M_S (z + s)$ for some $s\in S$. Since $S$ is the kernel of $M_S$ and the matrix multiplication is over $\bbF_2$, the output is $m$.

    Semantic security, even given $\verkey$, follows from the semantic security of the underlying $\PKE$. Otherwise, we could distinguish between $\PKE$ encryptions of $M_S z \oplus m_0$ and $M_S z \oplus m_1$, for fixed $M_S$, $z$, $m_0$, and $m_1$.

    Correctness and state preservation of the PoNI follow directly from the properties of \cref{constr:coset-poni} given in \cref{thm:poni-coset} as well as the correctness of FHE evaluation.

    Finally, we show search PoNI security for encryption
    % \justin{TODO. Doesn't quite match the poni coset theorem statement, because that statement doesn't allow aux to depend on $\ket{S_{x,z}}$. Here we need to allow operations on $\ket{S_{x,z}}$ so long as they are indistinguishable.}
   by reducing to the properties of \cref{constr:coset-poni} given in \cref{thm:poni-coset}. Suppose that some adversary $(\adv_1, \adv_P, \adv_H)$ violated the search PoNI security of our encryption scheme. 
    Consider a hybrid experiment where the adversary does not lose if one of the initial $n$ PoNIs rejects.  $(\adv_1, \adv_P, \adv_H)$ wins this hybrid experiment with at least the probability that they win $\PoNIEncSearch$.

    $\adv_1$'s auxiliary information $\aux_1$ consists of the leftover information after receiving $\pk$ and $(\ket{S_{,x,z}}, \Enc(M_S, M_S z \oplus m))$, then running as the prover in $n$ PoNIs. Since $m$ is uniformly random, it statistically masks $M_S z$, i.e. the ciphertext can be written as $\Enc(S, r)$, independently of $z$.
    Turning to the computational requirements on the auxiliary information, $\Enc(S, r)$ is computationally indistinguishable from $\Enc(0)$.
    Furthermore, in this hybrid experiment, it is not necessary to decode the results of the initial $n$ PoNIs. So, by the semantic security property from \Cref{thm:poni-coset}, the initial $n$ PoNIs can be indistinguishably replaced by the verifier using an arbitrary $T'$ instead of $T\supset S$. Finally, $S$ is uniformly random within $T$, which is generated by a PRF. $S$ and $x$ are computationally indistinguishable from random by PRF security. Therefore $(\aux_1, S, x, T, R)$ is indistinguishable from sampling $S, x, T, R$ uniformly at random, encrypting $0$, and running the initial $n$ PoNIs using an arbitrary $T'$, so the preconditions of \Cref{thm:poni-coset} are satisfied.

    Define $\adv'_H$ as follows. It runs $\adv_H(\regH, (\sk, \vk))$ to obtain $m'$. Then, it decrypts $\Enc(M_S, r)$ to obtain $M_S$ and $r$. Finally, it computes a vector $v_H$ such that $M_S v_H = (r \oplus m')$. If $m'=m$, then $(r \oplus m) = M_S z$, so $M_S v_H = M_S z$. Since the kernel of $M_S$ is precisely $S^\perp$, this implies that $v_H \in S^\perp + z$.

    Thus, if $(\adv_1, \adv_P, \adv_H)$ violated the PoNI search security of our encryption scheme, then with noticeable probability $\adv'_H$ outputs $v_H \in S^\perp + z$ simultaneously with $\adv_P$ passing the $\PoNI$ execution. This contradicts the PoNI search security from \Cref{thm:poni-coset}.
    \ifsubmission \qed \fi
\end{proof}

\section{Proofs of No-Intrusion from Indistinguishability Obfuscation}
\justin{Read over this.}

In this section, we describe several other primitives which can be equipped with proofs of no-intrusion. This section will focus on primitives for which we know how to construct proofs of no-intrusion using indistinguishability obfuscation. We leave the question of reducing the assumptions to future work.

\subsection{Encryption with Proofs of No-Intrusion for Secret Keys}

An encryption scheme with proofs of no-intrusion for secret keys has similar syntax to single-decryptor encryption \cite{C:CLLZ21}. Additionally, it is equipped with a proof of no-intrusion.

\begin{definition}
    An \textbf{encryption scheme with proofs of no-intrusion for secret keys} is a tuple of QPT algorithms $(\KeyGen, \mathsf{QKeyGen}, \Enc, \Dec)$ together with an interactive protocol $\PoNI$ between a quantum prover and a classical verifier. The QPT algorithms act as follows.
    \begin{itemize}
        \item $(\pk, \msk) \gets \KeyGen(1^\secpar)$ takes as input the security parameter $1^\secpar$ then outputs a public key $\pk$ and a master secret key $\msk$.
        \item $\rho_\sk \gets \mathsf{QKeyGen}(\msk)$ takes as input a master secret key $\msk$ then outputs a quantum decryption key $\rho_{\sk}$.
        \item $\ct \gets \Enc(\pk, m)$ takes as input a public key $\pk$ and a message $m$, then outputs a classical ciphertext $\ct$.
        \item $m' \gets \Dec(\rho_\sk, \ct)$ takes as input a quantum decryption key $\rho_{\sk}$ and a ciphertext $\ct$, then outputs a classical message $m$.
    \end{itemize}
    The encryption scheme must satisfy correctness: for every message $m$,
    \[
        \Pr\left[\Dec(\rho_\sk, \ct) = m \middle| \begin{array}{c}
              (\pk, \sk) \gets \KeyGen(1^\secpar) \\
             \rho_\sk \gets \mathsf{QKeyGen}(\msk) \\
             \ct \gets \Enc(\pk, m)
        \end{array}\right] = 1 -\negl
    \]
    Furthermore the proof of no intrusion must satisfy the following properties:
    \begin{itemize}
        \item \textbf{PoNI Correctness and State Preservation.} For every message $m$,
        \[
            \Pr\left[
            \begin{array}{c}
                 \Dec(\rho'_\sk, \ct) = m  \\
                 \land\ b = \Accept
            \end{array}
            \middle| \begin{array}{c}
              (\pk, \sk) \gets \KeyGen(1^\secpar) \\
             \ct \gets \Enc(\pk, m) \\
             \rho_\sk \gets \mathsf{QKeyGen}(\msk) \\
             (b, \rho'_\sk) \gets \PoNI\langle P(\rho_\sk), V(\msk)\rangle
        \end{array}\right] = 1 -\negl
        \]

        \item \textbf{PoNI Security.} See \Cref{def:poni-sec-decryption}.
    \end{itemize}
    If the verifier only requires $\pk$ to run $\PoNI$ (instead of $\msk$), we say it has a public $\PoNI$.
\end{definition}

To define PoNI security, we first need the notion of a $\gamma$-good decryptor from \cite{C:CLLZ21}.
\justin{Use this definition or the one from more recent work?}
\begin{definition}[$\gamma$-Good Decryptor]
    Let $\gamma \in [0,1]$. Let $m_0$ and $m_1$ be a pair of messages. Let $(\pk, \sk)$ be a key pair for an encryption scheme. Let $(U, \rho)$ be a quantum adversary and define the following POVM $(\calP, \calQ)$ which is a mixture of projective measurements:
    \begin{enumerate}
        \item Sample $b\gets \{0,1\}$. Encrypt $\ct_b \gets \Enc(\pk, m_b)$.
        \item Run $(U, \rho)$ on input $\ct_b$. Output $1$ if the adversary guesses $b$ correctly, and $0$ otherwise.
    \end{enumerate}
    We say a quantum adversary (potentially with auxiliary quantum state) is a $\gamma$-good with respect to $(\pk, m_0, m_1)$ if for $\delta = \omega(\log(\secpar))$ and every $\epsilon = 1/\poly$,
    \[
        \Pr[1\gets \ATI_{\calP, 1/2 + \gamma}^{\epsilon, \delta}(\rho)] = 1-\negl
    \]
\end{definition}

Security intuitively says that conditioned on a prover $P$ given an accepting PoNI, no hacker could have stolen a decryption key -- any new ciphertexts have full semantic security.

\begin{definition}[PoNI Security for Decryption Keys]\label{def:poni-sec-decryption}
    Consider the following security game $\PoNISec_{n, \gamma}(\adv)$, played by 3-part adversaries $\adv = (\adv_1, \adv_P, \adv_D)$ and parameterized by a non-negative integer $n$ and $\gamma \in [0,1/2]$.
        \begin{enumerate}
            \item Sample $(\pk, \sk) \gets \KeyGen(1^\secpar)$ and $\rho_\sk \gets \mathsf{QKeyGen}(\msk)$.
            \item Initialize $\adv_1$ with $(\pk, \rho_\sk)$.
            \item Execute $\PoNI\langle \adv_1, V(\msk)\rangle$ a total of $n-1$ times. If the verifier rejects in any of them, the adversary immediately loses (output $0$).
             $\adv_1$ outputs two registers $\regP$ and $\regD$.
            \item Run $\PoNI\langle \adv_P(\regP), V(\msk) \rangle$. Let $d_P$ be the verifier's decision bit.
            
            \item $\adv_D$ is initialized with $\regD$. 
            % They choose two messages $(m_0, m_1)$. Sample $b \gets \{0,1\}$ and compute $b' \gets \adv_D(\Enc(\pk, m_b))$.

            \item The adversary wins (output $1$) if $d_P = \Accept$ 
            and simultaneously $\adv_D(\regD)$ is a $\gamma$-good decryptor.
            % and simultaneously $b' = b$.
        \end{enumerate}
        We say the PoNI has $n$-time $\gamma$-security if for every QPT $(\adv_1, \adv_P, \adv_D)$,
        % \[
        %     \Pr[1\gets\PoNISec_{n}(\adv_1, \adv_P, \adv_D)] \leq 1/2 + \negl
        % \]
        If this holds for all $n = \poly$ and all $\gamma = 1/\poly$, then we simply say that the PoNI is secure.
\end{definition}

We observe that \emph{any} public-key single-decryptor encryption scheme has public proofs of no intrusion. The proof of no-intrusion is to encrypt a random message, then ask the prover to decrypt it.

 As a corollary, encryption with proofs of no-intrusion for secret keys exist assuming post-quantum indistinguishability obfuscation and one-way functions~\cite{eprint:KY25}.

 \subsection{Signature Tokens with Proofs of No-Intrusion}

\cite{Q:BS23} proposed the idea of quantum signature token and constructed it in an oracle model. A quantum signature token allows Alice to provide Bob with a way to sign one, \emph{and only one} message under her public key. In particular, Bob can choose the message \emph{after} interacting with Alice. 

\cite{C:CLLZ21} showed how to construct signature tokens using just indistinguishability obfuscation and one-way functions. At a high level, a signature token for a one-bit message is a single coset state $\ket{S_{x,z}}$. The public key consists of obfuscated membership programs $C_{S+x}$ and $C_{S^\perp + z}$, which test membership of a vector in $S+x$ and $S^\perp + z$, respectively. To sign $0$, measure the coset state in the computational basis to get a vector $v_0 \in S+x$. This can be verified using the membership program for $S+x$. Similarly, a signature on $1$ is a vector in $S^\perp + z$, obtained by measuring the coset state in the Hadamard basis. It can be checked using the membership program for $S^\perp + z$.

\paragraph{Proofs of No-Intrusion for Signature Tokens.}
As an extension of signature tokens, it would be quite useful if Alice were able to test whether Bob has lost the signing token she gave him. Bob could prove that no one else can use his token to sign message on Alice's behalf in one of two ways. One option is for him to sign a message right there and then. However, this would prevent him from using his signature token to sign a message of his choice. The other is for him to give a proof of no-intrusion, proving that no one could have stolen his signing tokens.

The coset state-based construction is particularly amenable to proofs of no-intrusion. In \cite{C:CLLZ21}'s construction of single-decryptor encryption, the challenge to decrypt a random ciphertext is essentially a test that the prover/decryptor still has the original coset state. Thus, a ciphertext from their scheme could be used to test whether Bob still has the signing token from Alice.

\fi

\bibliographystyle{alpha}
\bibliography{bib/abbrev3,bib/crypto,bib/project-bib}

\newcommand{\etalchar}[1]{$^{#1}$}
\begin{thebibliography}{{\c C}GLZR24}

\bibitem[Aar09]{CIC:A09}
Scott Aaronson.
\newblock Quantum copy-protection and quantum money.
\newblock In {\em Proceedings of the 24th Annual {IEEE} Conference on Computational Complexity, {CCC} 2009, Paris, France, 15-18 July 2009}, pages 229--242. {IEEE} Computer Society, 2009.

\bibitem[AB24]{C:AnaBeh24}
Prabhanjan Ananth and Amit Behera.
\newblock A modular approach to unclonable cryptography.
\newblock {LNCS}, pages 3--37. Springer, Cham, August 2024.

\bibitem[AC12]{STOC:AarChr12}
Scott Aaronson and Paul Christiano.
\newblock Quantum money from hidden subspaces.
\newblock In Howard~J. Karloff and Toniann Pitassi, editors, {\em 44th ACM STOC}, pages 41--60. {ACM} Press, May 2012.

\bibitem[AJ15]{C:AnaJai15}
Prabhanjan Ananth and Abhishek Jain.
\newblock Indistinguishability obfuscation from compact functional encryption.
\newblock In Rosario Gennaro and Matthew J.~B. Robshaw, editors, {\em CRYPTO~2015, Part~I}, volume 9215 of {\em {LNCS}}, pages 308--326. Springer, Berlin, Heidelberg, August 2015.

\bibitem[AK21]{TCC:AnaKal21}
Prabhanjan Ananth and Fatih Kaleoglu.
\newblock Unclonable encryption, revisited.
\newblock In Kobbi Nissim and Brent Waters, editors, {\em TCC~2021, Part~I}, volume 13042 of {\em {LNCS}}, pages 299--329. Springer, Cham, November 2021.

\bibitem[AKL{\etalchar{+}}22]{C:AKLLZ22}
Prabhanjan Ananth, Fatih Kaleoglu, Xingjian Li, Qipeng Liu, and Mark Zhandry.
\newblock On the feasibility of unclonable encryption, and more.
\newblock In Yevgeniy Dodis and Thomas Shrimpton, editors, {\em CRYPTO~2022, Part~II}, volume 13508 of {\em {LNCS}}, pages 212--241. Springer, Cham, August 2022.

\bibitem[AKL23]{C:AnaKalLiu23}
Prabhanjan Ananth, Fatih Kaleoglu, and Qipeng Liu.
\newblock Cloning games: {A} general framework for unclonable primitives.
\newblock In Helena Handschuh and Anna Lysyanskaya, editors, {\em CRYPTO~2023, Part~V}, volume 14085 of {\em {LNCS}}, pages 66--98. Springer, Cham, August 2023.

\bibitem[AL21]{EC:AnaLaP21}
Prabhanjan Ananth and Rolando~L. {La Placa}.
\newblock Secure software leasing.
\newblock In Anne Canteaut and Fran\c{c}ois-Xavier Standaert, editors, {\em EUROCRYPT~2021, Part~II}, volume 12697 of {\em {LNCS}}, pages 501--530. Springer, Cham, October 2021.

\bibitem[ALL{\etalchar{+}}21]{C:ALLZZ21}
Scott Aaronson, Jiahui Liu, Qipeng Liu, Mark Zhandry, and Ruizhe Zhang.
\newblock New approaches for quantum copy-protection.
\newblock In Tal Malkin and Chris Peikert, editors, {\em CRYPTO~2021, Part~I}, volume 12825 of {\em {LNCS}}, pages 526--555, Virtual Event, August 2021. Springer, Cham.

\bibitem[BGK{\etalchar{+}}24]{EC:BGKMRR24}
James Bartusek, Vipul Goyal, Dakshita Khurana, Giulio Malavolta, Justin Raizes, and Bhaskar Roberts.
\newblock Software with certified deletion.
\newblock In Marc Joye and Gregor Leander, editors, {\em EUROCRYPT~2024, Part~IV}, volume 14654 of {\em {LNCS}}, pages 85--111. Springer, Cham, May 2024.

\bibitem[BI20]{TCC:BroIsl20}
Anne Broadbent and Rabib Islam.
\newblock Quantum encryption with certified deletion.
\newblock In Rafael Pass and Krzysztof Pietrzak, editors, {\em TCC~2020, Part~III}, volume 12552 of {\em {LNCS}}, pages 92--122. Springer, Cham, November 2020.

\bibitem[BK23]{C:BarKhu23}
James Bartusek and Dakshita Khurana.
\newblock Cryptography with certified deletion.
\newblock In Helena Handschuh and Anna Lysyanskaya, editors, {\em CRYPTO~2023, Part~V}, volume 14085 of {\em {LNCS}}, pages 192--223. Springer, Cham, August 2023.

\bibitem[BK25]{C:BK25}
James Bartusek and Dakshita Khurana.
\newblock On the power of oblivious state preparation.
\newblock In Yael~Tauman Kalai and Seny~F. Kamara, editors, {\em Advances in Cryptology - {CRYPTO} 2025 - 45th Annual International Cryptology Conference, Santa Barbara, CA, USA, August 17-21, 2025, Proceedings, Part {II}}, volume 16001 of {\em Lecture Notes in Computer Science}, pages 575--607. Springer, 2025.

\bibitem[BKL24]{CiC:BroKarLor24}
Anne Broadbent, Martti Karvonen, and S{\'e}bastien Lord.
\newblock Uncloneable quantum advice.
\newblock {\em {CiC}}, 1(3):18, 2024.

\bibitem[BL20]{TQC:BL20}
Anne Broadbent and S{\'{e}}bastien Lord.
\newblock Uncloneable quantum encryption via oracles.
\newblock In Steven~T. Flammia, editor, {\em 15th Conference on the Theory of Quantum Computation, Communication and Cryptography, {TQC} 2020, June 9-12, 2020, Riga, Latvia}, volume 158 of {\em LIPIcs}, pages 4:1--4:22. Schloss Dagstuhl - Leibniz-Zentrum f{\"{u}}r Informatik, 2020.

\bibitem[BS23]{Q:BS23}
Shalev Ben{-}David and Or~Sattath.
\newblock Quantum tokens for digital signatures.
\newblock {\em Quantum}, 7:901, 2023.

\bibitem[BV15]{FOCS:BitVai15}
Nir Bitansky and Vinod Vaikuntanathan.
\newblock Indistinguishability obfuscation from functional encryption.
\newblock In Venkatesan Guruswami, editor, {\em 56th FOCS}, pages 171--190. {IEEE} Computer Society Press, October 2015.

\bibitem[CCKW19]{AC:CCKW19}
Alexandru Cojocaru, L{\'e}o Colisson, Elham Kashefi, and Petros Wallden.
\newblock {QFactory}: Classically-instructed remote secret qubits preparation.
\newblock In Steven~D. Galbraith and Shiho Moriai, editors, {\em ASIACRYPT~2019, Part~I}, volume 11921 of {\em {LNCS}}, pages 615--645. Springer, Cham, December 2019.

\bibitem[{\c C}GLZR24]{TCC:CGLR24}
Alper {\c C}akan, Vipul Goyal, Chen-Da Liu-Zhang, and Jo{\~a}o Ribeiro.
\newblock Unbounded leakage-resilience and intrusion-detection in a quantum world.
\newblock In {\em TCC~2024, Part~II}, {LNCS}, pages 159--191. Springer, Cham, November 2024.

\bibitem[{\c C}GR24]{EPRINT:CakGoyRai24}
Alper {\c C}akan, Vipul Goyal, and Justin Raizes.
\newblock How to delete without a trace: Certified deniability in a quantum world.
\newblock Cryptology ePrint Archive, Report 2024/1832, 2024.

\bibitem[CLLZ21]{C:CLLZ21}
Andrea Coladangelo, Jiahui Liu, Qipeng Liu, and Mark Zhandry.
\newblock Hidden cosets and applications to unclonable cryptography.
\newblock In Tal Malkin and Chris Peikert, editors, {\em CRYPTO~2021, Part~I}, volume 12825 of {\em {LNCS}}, pages 556--584, Virtual Event, August 2021. Springer, Cham.

\bibitem[CV22]{Q:CV22}
Eric Culf and Thomas Vidick.
\newblock A monogamy-of-entanglement game for subspace coset states.
\newblock {\em Quantum}, 6:791, 2022.

\bibitem[GMR24]{TCC:GoyMalRai24}
Vipul Goyal, Giulio Malavolta, and Justin Raizes.
\newblock Unclonable commitments and proofs.
\newblock In {\em TCC~2024, Part~III}, {LNCS}, pages 193--224. Springer, Cham, November 2024.

\bibitem[GV24]{C:GupVai24}
Aparna Gupte and Vinod Vaikuntanathan.
\newblock How to construct quantum {FHE}, generically.
\newblock In {\em CRYPTO~2024, Part~III}, {LNCS}, pages 246--279. Springer, Cham, August 2024.

\bibitem[GZ20]{EPRINT:GeoZha20}
Marios Georgiou and Mark Zhandry.
\newblock Unclonable decryption keys.
\newblock Cryptology ePrint Archive, Report 2020/877, 2020.

\bibitem[KY25]{eprint:KY25}
Fuyuki Kitagawa and Takashi Yamakawa.
\newblock Foundations of single-decryptor encryption.
\newblock {\em {IACR} Cryptol. ePrint Arch.}, page 1219, 2025.

\bibitem[Shm22]{STOC:Shmueli22}
Omri Shmueli.
\newblock Public-key quantum money with a classical bank.
\newblock In Stefano Leonardi and Anupam Gupta, editors, {\em 54th ACM STOC}, pages 790--803. {ACM} Press, June 2022.

\bibitem[Win99]{DBLP:journals/tit/Winter99}
Andreas~J. Winter.
\newblock Coding theorem and strong converse for quantum channels.
\newblock {\em {IEEE} Trans. Inf. Theory}, 45(7):2481--2485, 1999.

\bibitem[Zha20]{TCC:Zhandry20}
Mark Zhandry.
\newblock {Schr{\"o}dinger}'s pirate: How to trace a quantum decoder.
\newblock In Rafael Pass and Krzysztof Pietrzak, editors, {\em TCC~2020, Part~III}, volume 12552 of {\em {LNCS}}, pages 61--91. Springer, Cham, November 2020.

\end{thebibliography}

\appendix
\ifsubmission

\section{Proof of Simultaneous Search}\label{app:simult-search}

\section{Deferred Proofs for Coset State PoNIs}

\subsection{Proof of Coset Remainder Lemma (\Cref{lem:coset-remainder})}\label{app:coset-remainder}

\subsection{Proof of Security (\Cref{thm:poni-coset})}\label{subsec:poni-coset:security}

\section{Proofs of No Intrusion for Ciphertexts: Decisional Security}\label{app:decisional-encryption}

\fi
\section{Proof of Certified Deletion for Coset States}\label{app:cd}

For convenience, we restate \Cref{thm:search-cd} here.

\begin{definition}{Certified Deletion Search Game.} The certified deletion game $\CDGame_{D_1, D_2}(\adv_1, \adv_2)$ is parameterized by two distributions $D_1$ and $D_2$ and is played by two adversaries $\adv_1$ and $\adv_2$. 
$D_1$ is a distribution over $(S + x, T, R^\perp + z_{R^\perp}, k)$ where $R\subset S \subset T \subset \bbF_2^\secpar$ are subspaces with dimensions $d_R$, $d_S$, and $d_T$, respectively, where $x\in \co(S)$, and where $z_{R^\perp} \in \co(R^\perp)$.
$D_2$ is a distribution over (potentially entangled) quantum states $(\aux_1, \aux_2)$. The game is played as follows.
\begin{enumerate}
    \item Sample $(S + x, T, R^\perp + z_{R^\perp}, k) \gets D_1$. Sample $z \gets \co(S^\perp) \intersect R^\perp + z_{R^\perp}$ and let $x_T = \Can_T(x)$.
    \item Sample $(\aux_1, \aux_2) \gets D_2(\ket{S_{x,z}}, S+x, T + x_T, R^\perp + z_{R^\perp}, k)$.
    \item Run $(v_1, \rho) \gets \adv_1(\aux)$.
    \item Run $v_2 \gets \adv_2(\rho, \aux_2)$.
    \item $\adv_1$ and $\adv_2$ win if $v_1 \in S+ x$ and $v_2\in S^\perp + z$.
\end{enumerate}
\end{definition}

\begin{theorem}
    If there exists a QPT algorithm $\Sim$ such that
    \begin{gather*}
        \left\{
            \aux_1 : \begin{array}{c}
                (S+x, T, R^\perp + z_{R^\perp}, k) \gets D_1
                \\
                 (\aux_1, \aux_2) \gets D_2\big(\ket{S_{x,z}}, S+x, T + x_T, R^\perp + z_{R^\perp}, k \big)
            \end{array}
        \right\}
        \\
        \approx_c
        \\
        \left\{
            \aux_1 : \begin{array}{c}
                R, S, T \gets \Subspc(\bbF_2^\secpar, (d_R, d_S, d_T)) \text{ s.t. } R\subset S \subset T
                \\
                x \gets \co(S),\ z\gets \co(S^\perp),\ x_T = \Can_T(x),\ z_{R^\perp} = \Can_{R^\perp}(S)
                \\
                 \aux_1 \gets \Sim\big(\ket{S_{x,z}}, T + x_T, R^\perp + z_{R^\perp}\big)
            \end{array}
        \right\}
    \end{gather*}
    and $D_1$ is supported on $(S, T, R)$ such that $2^{-(d_S - d_R)} = \negl$ and $2^{-(d_T - d_S)} = \negl$, then for all QPT $\adv_1$ and (unbounded) quantum $\adv_2$,
    \[
        \Pr[\mathsf{win}\gets \CDGame_{D_1, D_2}(\adv_1, \adv_2)] = \negl
    \]
\end{theorem}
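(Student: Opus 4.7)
The plan is to follow the purification strategy sketched just before the theorem statement, adapted to the search setting. First I would purify the challenger's sampling of $z$: instead of sampling $z \gets (\co(S^\perp) \intersect R^\perp) + z_{R^\perp}$ classically and handing $\ket{S_{x,z}}$ to the adversary, the challenger prepares the coherent state
\[
    \ket{\psi_0} \propto \sum_{z \in (\co(S^\perp)\intersect R^\perp) + z_{R^\perp}} \ket{S_{x,z}}_{\calS} \otimes \ket{z}_{\mathsf{prep}}
\]
and keeps $\mathsf{prep}$ internally. Note that this register has dimension $2^{d_S - d_R}$, which by hypothesis is super-polynomial. The adversary $\adv_2$'s view is unchanged: measuring $\mathsf{prep}$ before running $\adv_1$ recovers the original game. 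I can therefore defer the $\mathsf{prep}$ measurement to the very end and compare the marginal of $\mathsf{prep}$, conditioned on $\adv_1$ succeeding, against $\adv_2$'s final output.

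Next I would invoke the simulator hypothesis. Since $\aux_1$ is computationally indistinguishable from $\Sim(\ket{S_{x,z}}, T+x_T, R^\perp + z_{R^\perp})$, and the event that $\adv_1$ outputs $v_1 \in S+x$ simultaneously with $\adv_2$ outputting $v_2 \in S^\perp + z$ is efficiently testable once we reveal $(S,x,z)$ at the end, I can replace $\aux_1$ by its simulated version without changing the joint success probability by more than a negligible amount. In the simulated world, the only dependence on $S$ and $x$ (beyond $\ket{S_{x,z}}$ itself) enters through $T+x_T$, and the only dependence on $z$ enters through $R^\perp + z_{R^\perp}$. In particular, $\aux_1$ depends only on the coset $T+x$ and not on which specific $S \subset T$ was chosen, nor on which $z$ in $(\co(S^\perp)\intersect R^\perp) + z_{R^\perp}$.

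The central step, which I expect to be the main obstacle, is to argue that any $\adv_1(\aux_1)$ that outputs $v_1 \in S+x$ with noticeable probability must effectively collapse $\calS$ to a computational-basis state. The handle here is the dimension gap $2^{-(d_T - d_S)} = \negl$: since $\aux_1$ in the simulated world carries no information about $S$ beyond containment in $T$, and since $\calS$ is a uniform superposition over $T+x$ once $S$ is averaged out, the only vectors an adversary can output with non-negligible probability must be obtained by measuring $\calS$ in the computational basis (up to negligible error). Formalizing this will likely require a ``collapse-on-success'' argument via the gentle measurement lemma applied to the efficiently-checkable projector onto $\{v_1 \in T+x\}$ combined with an information-theoretic argument using the uniformity of $S \gets \Subspc(T, d_S)$.

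Once this collapse is established, the reduced state on $\mathsf{prep}$, conditioned on $\adv_1$'s success, is (negligibly close to) the uniform superposition $\sum_{z} p_z \ket{z}$ over the $2^{d_S - d_R}$ elements of $(\co(S^\perp)\intersect R^\perp) + z_{R^\perp}$, with $|p_z| = 1$, and crucially this state is \emph{unentangled} with $\adv_2$'s register (since the collapse happened entirely on $\calS$, which is separate from what was handed to $\adv_2$ via $\aux_2$). Measuring $\mathsf{prep}$ at the end therefore yields a uniformly random $z$ from a super-polynomially large set, independent of $\adv_2$'s output $v_2$. Since fixing $v_2$ leaves at most one $z$ with $v_2 \in S^\perp + z$ compatible (given $S^\perp \intersect R^\perp$), the simultaneous success probability is at most $2^{-(d_S - d_R)} + \negl = \negl$, contradicting the assumption of noticeable success and completing the proof.
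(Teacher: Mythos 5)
Your overall skeleton (purify $z$ into a $\mathsf{prep}$ register, argue that conditioned on $\adv_1$ succeeding the $\mathsf{prep}$ register is an unentangled phase state over a set of size $2^{d_S-d_R}$, then conclude by the counting bound) matches the paper's strategy, but the two steps where the actual work happens are not correct as proposed. First, you invoke the simulator hypothesis to replace $\aux_1$ ``without changing the joint success probability,'' arguing the joint event is efficiently testable once $(S,x,z)$ is revealed. It is not: $\adv_2$ is \emph{unbounded}, so no QPT distinguisher can run $\adv_2$ to test the joint event, and moreover the hypothesis only concerns the marginal of $\aux_1$ while $\aux_2$ (handed to $\adv_2$) may be entangled with it. The paper sidesteps this by never applying the computational step to anything involving $\adv_2$: the indistinguishability is used only to bound the probability of the efficiently checkable event $v_1 \in (S+x)\setminus(R+x_R)$, which depends only on the QPT $\adv_1$ and on classical membership checks; the unbounded $\adv_2$ is handled purely information-theoretically at the very end.

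Second, your ``collapse-on-success'' step is exactly where the content lies, and the sketched mechanism does not deliver it. Gentle measurement applied to the projector onto $\{v_1 \in T+x\}$ says nothing about $\calS$ collapsing or about $\mathsf{prep}$ decoupling, and the claim that ``$\calS$ is a uniform superposition over $T+x$ once $S$ is averaged out'' is not available: $S$ is sampled classically (not purified), the success event is defined relative to the actual $S$, and $D_2$ processes $\calS$ into possibly entangled $(\aux_1,\aux_2)$, so ``the collapse happened on a register separate from $\adv_2$'s'' is not a justification. What must be shown is that an accepting $v_1$ pins down the coset $R+x_R$, i.e.\ that $\adv_1$ cannot output $v_1 \in (S+x)\setminus(R+x_R)$ except with negligible probability. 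The paper does this by rewriting the purification in the $R$-coset basis, $\ket{\psi_1}\propto \sum_{x_R}\ket{R_{x_R,z}}_{\calS}\otimes\ket{x_R}_{\mathsf{prep}}$, inserting the projective measurement $\Pi_{v_1}$ on $\mathsf{prep}$ keyed to the adversary's output (whose acceptance branch is a single $\ket{x_R}$, hence automatically unentangled and yielding a uniform $z$ among $2^{d_S-d_R}$ values after rotating back), and bounding the perturbation by $\Pr[v_1 \in (S+x)\setminus(R+x_R)]$; that probability is in turn bounded by measuring $\mathsf{prep}$ \emph{early} (collapsing $\calS$ to $\ket{R_{x_R,z_{R^\perp}}}$, which commutes with the adversary's operations) and only then invoking the simulator, so that the view depends solely on $(\ket{R_{x_R,z_{R^\perp}}},\,T+x_T,\,R^\perp+z_{R^\perp})$ and $S$ is uniform between $R$ and $T$, giving the $2^{-(d_T-d_S)}$ bound. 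Your proposal never uses $R$ in this role (only as the size of the $z$-set), so as written the central step has no proof.
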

\begin{proof}
    Consider the following hybrid experiments.
    \begin{itemize}
        \item $\Hyb_0$ is the original $\CDGame_{D_1,D_2}(\adv_1, \adv_2)$ game.
        \item $\Hyb_1$ changes how $z$ and $\ket{S_{x,z}}$ are generated. After sampling $(S + x, T, R^\perp + z_{R^\perp}, k) \gets D_1$, prepare the state
        \[
            \ket{\psi_0} \propto \sum_{z \in (\co(S^\perp)\intersect R^\perp) + z_{R^\perp}} \ket{S_{x,z}}_{\calS} \otimes \ket{z}_{\mathsf{prep}}
        \]
        Measure the $\mathsf{prep}$ register in the computational basis to obtain $z$, then generate $(\aux_1, \aux_2) \gets D_2(\calS,\dots)$, substituting the state in register $\calS$ for $\ket{S_{x,z}}$.

        \item $\Hyb_2$ delays the measurement of the $\mathsf{prep}$ register until after $v_1$ and $v_2$ are determined.

        \item $\Hyb_3$ inserts some computation before initializing $\adv_1$ and after $\adv_2$ outputs. After producing $\ket{\psi_0}$, apply the unitary $U$ swapping
        \[
            \ket{z} \leftrightarrow \sum_{x_R\in (\co(R)\intersect S) + x} (-1)^{z\cdot x_R} \ket{x_R}_{\mathsf{prep}}
        \]
        for every $z \in (\co(S^\perp)\intersect R^\perp) + z_{R^\perp}$, and acting as the identity elsewhere, to register $\mathsf{prep}$. \cite{EC:BGKMRR24} showed that this results in the state
        \[
            \ket{\psi_1} \propto \sum_{x_R\in (\co(R)\intersect S) + x} \ket{R_{x_R,z}}_{\calS} \otimes \ket{x_R}_{\mathsf{prep}}
        \]
        Then, after $\adv_2$ outputs $v_2$, apply $U^\dagger = U$ to register $\mathsf{prep}$ and measure $z$ from it.

        \item $\Hyb_4$ changes the winning condition and inserts a measurement in the outcome step after $v_1$ and $v_2$ are determined but before $U^\dagger$ is applied. 
        
        Measure the $\mathsf{prep}$ register with respect to the PVM $\{\Pi_{v_1}, I - \Pi_{v_1}\}$ where 
        \[
            \Pi_{v_1} = \sum_{\substack{x_R\in (\co(R)\intersect S) + v_1\\\text{s.t. }v_1 \in R + x_R}} \ketbra{x_R}
        \]
        Then, apply $U^\dagger$ to $\mathsf{prep}$, measure it to obtain $z$. The adversary wins if the measurement result is $\Pi_{v_1}$ and $v_2\in R^\perp + z_{R^\perp}$.
    \end{itemize}

    $\Pi_{v_1}$ projects onto a unique $\ket{x_R}$ because $R + x_R \neq R + x'_R$ for all distinct $x_R, x'_R  \in (\co(R)\intersect S) + x$ and $v_1 \in R + x_R$ implies $R+ x_R = R + v_1$.
    Conditioned on this check passing, $z$ is the result of measuring $U \ket{x_R}$ in the computational basis. This is a uniformly random element of $(\co(R)\intersect S) + x$, independently of $v_2$, so
    \[
        \Pr[\mathsf{win} \gets \Hyb_4] \leq 2^{-(d_S - d_R)} = \negl
    \]

    It remains to be shown that $\Pr[\mathsf{win}\gets \Hyb_0]$ is negligibly far from this.
    \begin{align*}
        \Pr[\mathsf{win}\gets \Hyb_0] 
        &= \Pr[\mathsf{win} \gets \Hyb_1]
        \\
        &= \Pr[\mathsf{win} \gets \Hyb_2]
        \\
        &= \Pr[\mathsf{win} \gets \Hyb_3]
    \end{align*}
    because measuring the $\mathsf{prep}$ register to get $z$ produces $\ket{S_{x,z}}$ from the correct distribution in $\Hyb_1$, operations on disjoint registers commute in $\Hyb_2$, and a combination of $UU^\dagger = I$ and operations on disjoint registers commuting in $\Hyb_3$.

    Finally, we show that 
    \[
        \Pr[\mathsf{win} \gets \Hyb_3] \leq \Pr[\mathsf{win}\gets \Hyb_4] + \negl
    \]
    The only case in which the adversary can win in $\Hyb_3$ but not in $\Hyb_4$ is when $v_1 \in (S + x) \backslash (R + x_R)$:
    \[
        \Pr[\mathsf{win} \gets \Hyb_3] \leq \Pr[\mathsf{win}\gets \Hyb_4] + \Pr_{\Hyb_3}[v_1 \in (S + x) \backslash (R + x_R)]
    \]
    We show that $v_1 \in (S + x) \backslash (R + x_R)$ in $\Hyb_3$ with negligible probability. Consider the following sub-hybrids.
    \begin{itemize}
        \item $\Hyb_{3,1}$ moves the measurement of $x_R$ to \emph{before} initializing the adversary. This collapses the state in register $\calS$ to $\ket{R_{x_R, z_{R^\perp}}}$.
        \item $\Hyb_{3,2}$ changes $\adv_1$'s input. It instead receives $\Sim\big(\ket{R_{x_R, z_{R^\perp}}}, T + x_T, R^\perp + z_{R^\perp}\big)$ for uniformly random $T$, $R$, $x_T \in \co(T)$, $z_{R^\perp} \in \co(R^\perp)$, and $x_R \in \co(R)$ subject to $R\subset T$.\footnote{This step is the primary change from \cite{EC:BGKMRR24}'s proof.}
    \end{itemize}
    Since operations on disjoint registers commute,
    \[
        \Pr_{\Hyb_3}[v_1 \in (S + x) \backslash (R + x_R)] = \Pr_{\Hyb_{3,1}}[v_1 \in (S + x) \backslash (R + x_R)]
    \]
    Because $v_1 \in (S + x) \backslash (R + x_R)$ is efficiently checkable, the computational security of $(D_1,D_2)$ implies that
    \[
        \left|\Pr_{\Hyb_{3,1}}[v_1 \in (S + x) \backslash (R + x_R)] - \Pr_{\Hyb_{3,2}}[v_1 \in (S + x) \backslash (R + x_R)] \right| = \negl
    \]
    Finally, because the adversary's view in $\Hyb_{3,2}$ depends only on uniformly random $R_{x_R, z_{R^\perp}}$, $T + x_T$, and $R^\perp + z_{R^\perp}$,
    \begin{align*}
        \Pr_{\Hyb_{3,2}}[v_1 \in (S + x) \backslash (R + x_R)] 
        &\leq \max_{v_1 \in (T + x_T) \backslash (R + x_R)} \Pr_{S, x}[v_1 \in S + x] 
        \\
        &= \frac{2^{d_S} - 2^{2_R}}{2^{d_T} - 2^{d_R}} 
        \\
        &\leq 2^{-(d_T - d_S)}
        \\
        &= \negl
    \end{align*}
    
    Therefore 
    \[
        \Pr[\mathsf{win} \gets \Hyb_0] = \Pr[\mathsf{win} \gets \Hyb_3] \leq \negl + \negl = \negl
    \]
    \ifsubmission \qed \fi
\end{proof}

\end{document}